\pgfplotsset{compat=1.8}
\numberwithin{equation}{section}
\newtheorem{thm}{Theorem}[section]
\newtheorem{lem}[thm]{Lemma}
\newtheorem{prop}[thm]{Proposition}
\newtheorem{cor}[thm]{Corollary}
\theoremstyle{definition}
\newtheorem{hyp}{Hypothesis}
\renewcommand*{\thehyp}{\Alph{hyp}}
\theoremstyle{remark}
\newtheorem{rem}[thm]{Remark}
\newtheorem{ex}[thm]{Example}
\crefname{hyp}{Hypothesis}{Hypotheses}
\Crefname{hyp}{Hypothesis}{Hypotheses}
\crefname{lem}{Lemma}{Lemmas}
\Crefname{lem}{Lemma}{Lemmas}
\crefname{thm}{Theorem}{Theorems}
\Crefname{thm}{Theorem}{Theorems}
\crefname{prop}{Proposition}{Propositions}
\Crefname{prop}{Proposition}{Propositions}
\crefname{cor}{Corollary}{Corollaries}
\Crefname{cor}{Corollary}{Corollaries}
\crefname{enumi}{}{}
\Crefname{enumi}{}{}
\crefname{equation}{}{}
\Crefname{equation}{}{}
\crefname{rem}{Remark}{Remarks}
\Crefname{rem}{Remark}{Remarks}
\renewcommand{\@upn}{} 
\newlist{enumthm}{enumerate}{1} 
\setlist[enumthm]{label=\upshape(\roman*),ref=\thethm~(\roman*)}  
\newlist{enumcor}{enumerate}{1}
\setlist[enumcor]{label=\upshape(\roman*),ref=\thecor~(\roman*)}
\newlist{enumlem}{enumerate}{1}
\setlist[enumlem]{label=\upshape(\roman*),ref=\thelem~(\roman*)}
\newlist{enumprop}{enumerate}{1}
\setlist[enumprop]{label=\upshape(\roman*),ref=\theprop~(\roman*)}
\newlist{enumhyp}{enumerate}{1}
\setlist[enumhyp]{label=\upshape(\roman*),ref=\thehyp~(\roman*)}
\newlist{enumproof}{enumerate*}{1}
\setlist[enumproof]{label=\upshape(\roman*)}
\newlist{enumdef}{enumerate}{1}
\setlist[enumdef]{label=\upshape(\roman*),ref=\thedefn~(\roman*)}
\newlist{enumrem}{enumerate}{1}
\setlist[enumrem]{label=\itshape(\alph*),ref=\therem~(\alph*)}
\newcounter{subcreftmpcnt} %
\newcommand\romansubformat[1]{(\roman{#1})} 
\def\subcref{\@ifstar\@@subcref\@subcref}
\newcommand\@subcref[2][\romansubformat]{%
	\ifcsname r@#2@cref\endcsname
	\cref@getcounter {#2}{\mylabel}%
	\setcounter{subcreftmpcnt}{\mylabel}%
	\hyperref[#2]{\romansubformat{subcreftmpcnt}}%
	\else ?? \fi}   
\newcommand\@@subcref[2][\romansubformat]{%
	\ifcsname r@#2@cref\endcsname
	\cref@getcounter {#2}{\mylabel}%
	\setcounter{subcreftmpcnt}{\mylabel}%
	\romansubformat{subcreftmpcnt}%
	\else ?? \fi}   
\DeclareRobustCommand{\crefnosort}[1]{%
	\begingroup\@cref@sortfalse\cref{#1}\endgroup
}
\def\endstepsymbol{$\lozenge$}
\def\endclaimsymbol{$\lozenge$}
\newcounter{proofstep}
\crefname{proofstep}{Step}{Steps}
\Crefname{proofstep}{Step}{Steps}
\newcounter{proofclaim}
\crefname{proofclaim}{Claim}{Claims}
\Crefname{proofclaim}{Claim}{Claims}
\newcommand{\cC}{{\mathcal C}}
\newcommand{\cD}{{\mathcal D}}\newcommand{\cF}{{\mathcal F}}
\newcommand{\cS}{{\mathcal S}}
\newcommand{\cX}{{\mathcal X}}
\newcommand{\BC}{{\mathbb C}}
\newcommand{\BN}{{\mathbb N}}
\newcommand{\BR}{{\mathbb R}}
\newcommand{\dsone}{{\mathds 1}}
\newcommand{\sfD}{{\mathsf D}}
\newcommand{\sfa}{{\mathsf a}}\def\sfb{{\mathsf b}}\newcommand{\sfc}{{\mathsf c}}
\newcommand{\sfd}{{\mathsf d}}
\newcommand{\sfi}{{\mathsf i}}
\newcommand{\sfl}{{\mathsf l}}
\newcommand{\sfm}{{\mathsf m}}\newcommand{\sfn}{{\mathsf n}}\newcommand{\sfo}{{\mathsf o}}
\newcommand{\sfx}{{\mathsf x}}
\newcommand{\rme}{{\mathrm e}}
\newcommand{\IN}{\BN}\newcommand{\IR}{\BR}\newcommand{\IC}{\BC}
\newcommand{\RR}{\BR}
\newcommand{\IE}{\mathbb{E}}
\newcommand{\eps}{\varepsilon}\newcommand{\ph}{\varphi}
\newcommand{\e}{\rme}\renewcommand{\i}{\sfi}\newcommand{\Id}{\dsone} \renewcommand{\d}{\sfd}
\renewcommand{\Re}{\operatorname{Re}}\renewcommand{\Im}{\operatorname{Im}}
\newcommand{\supp}{\operatorname{supp}}
\newcommand{\wh}[1]{\widehat{#1}}\newcommand{\wt}[1]{\widetilde{#1}}
\DeclareFontFamily{U}{mathx}{\hyphenchar\font45}
\DeclareFontShape{U}{mathx}{m}{n}{
	<5> <6> <7> <8> <9> <10>
	<10.95> <12> <14.4> <17.28> <20.74> <24.88>
	mathx10
}{}
\DeclareSymbolFont{mathx}{U}{mathx}{m}{n}
\DeclareMathAccent{\widecheck}{0}{mathx}{"71}
\DeclareMathAccent{\wideparen}{0}{mathx}{"75}
\DeclareFontFamily{OMX}{MnSymbolE}{}
\DeclareFontShape{OMX}{MnSymbolE}{m}{n}{
	<-6>  MnSymbolE5
	<6-7>  MnSymbolE6
	<7-8>  MnSymbolE7
	<8-9>  MnSymbolE8
	<9-10> MnSymbolE9
	<10-12> MnSymbolE10
	<12->   MnSymbolE12}{}
\DeclareSymbolFont{mnlargesymbols}{OMX}{MnSymbolE}{m}{n}
\DeclareMathDelimiter{\llangle}{\mathopen}{mnlargesymbols}{'164}{mnlargesymbols}{'164}
\DeclareMathDelimiter{\rrangle}{\mathclose}{mnlargesymbols}{'171}{mnlargesymbols}{'171}
\DeclareMathDelimiter{\lsem}{\mathopen}{mnlargesymbols}{'102}{mnlargesymbols}{'102}
\DeclareMathDelimiter{\rsem}{\mathclose}{mnlargesymbols}{'107}{mnlargesymbols}{'107}
\DeclareMathDelimiter{\langlebar}{\mathopen}{mnlargesymbols}{'152}{mnlargesymbols}{'152}
\DeclareMathDelimiter{\ranglebar}{\mathclose}{mnlargesymbols}{'157}{mnlargesymbols}{'157}
\DeclareMathDelimiter{\lWavy}{\mathopen}{mnlargesymbols}{'137}{mnlargesymbols}{'137}
\DeclareMathDelimiter{\rWavy}{\mathopen}{mnlargesymbols}{'137}{mnlargesymbols}{'137}
\newcommand{\chr}{\mathbf 1}
\newcommand{\abs}[1]{\lvert#1\lvert}\newcommand{\Abs}[1]{\left\lvert#1\right\lvert}
\newcommand{\norm}[1]{\lVert#1\lVert}\newcommand{\Norm}[1]{\left\lVert#1\right\lVert}
\newcommand{\FGamma}{\Gamma}
\newcommand{\FS}{\cF}\newcommand{\dG}{\sfd\FGamma}\newcommand{\ad}{a^\dagger}
\renewcommand{\sc}[1]{\braket{#1}}
\newcommand{\nn}[1]{\Norm{#1}}
\renewcommand{\:}{\colon}
\newcommand{\dist}{\operatorname{dist}}
\title{On Lieb--Robinson bounds for a class of continuum fermions}
\date{November 10, 2023}
\author{Benjamin Hinrichs}
\address{Benjamin Hinrichs, Universit\"at Paderborn, Institut f\"ur Mathematik, Institut f\"ur Photonische Quantensysteme, Warburger Str. 100, 33098 Paderborn, Germany}
\email{benjamin.hinrichs@math.upb.de}
\author{Marius Lemm}
\author{Oliver Siebert}
\address{Marius Lemm and Oliver Siebert, Universit\"at T\"ubingen, Fachbereich Mathematik, Auf der Morgenstelle 10, 72076 T\"ubingen, Germany}
\email{marius.lemm@uni-tuebingen.de, oliver.siebert@uni-tuebingen.de}
\definecolor{green}{rgb}{0.0, 0.5, 0.5}
\definecolor{yellow}{rgb}{0.5, 0.5, 0}
\definecolor{lgray}{gray}{0.9}
\definecolor{llgray}{gray}{0.95}
\definecolor{lllgray}{gray}{0.975}
\definecolor{darkcerulean}{rgb}{0.03, 0.27, 0.49} 	\definecolor{darkcoral}{rgb}{0.8, 0.36, 0.27}
\pgfplotsset{/pgfplots/colormap/bluewhitered/.style={/pgfplots/colormap={bluewhitered}{color(0cm)=(darkcerulean); color(0.5cm)=(white); color(1cm)=(darkcoral)}}}
\pgfplotsset{/pgfplots/colormap/bluered/.style={/pgfplots/colormap={bluered}{color(0cm)=(darkcerulean); color(1cm)=(darkcoral)}}}
\definecolor{darkgreen}{rgb}{0,.39,0}
\tikzset{external/system call={pdflatex \tikzexternalcheckshellescape -extra-mem-top=10000000 -extra-mem-bot=10000000 -halt-on-error -interaction=batchmode -jobname "\image" "\texsource"}}
\newcommand{\nmax}{n_{\sfm\sfa\sfx}}
\newcommand{\chir}[1]{\chr_{\{\abs{\cdot}<#1\}}}
\newcommand{\chiR}[1]{\chr_{\{\abs{\cdot}\ge#1\}}}
\newcommand{\chirx}[2]{\chr_{\{\abs{\cdot - #2}<#1\}}}
\newcommand{\chiRx}[2]{\chr_{\{\abs{\cdot - #2}\ge#1\}}}
\newcommand{\bx}[2]{x_{#1,#2}}
\newcommand{\cw}{c_W}
\newcommand{\Af}{\mathcal{A}}
\newcommand{\tr}{{\operatorname{tr}}}
\newcommand{\Ad}{\operatorname{Ad}}
\newcommand{\ttls}{\tau_t^{\Lambda}}
\newcommand{\ttzs}{\tau_t^{0}}
\newcommand{\gs}{\psi_\Lambda}
\newcommand{\gse}{E_\Lambda}
\newcommand{\gsP}{P_\Lambda}
\newcommand{\dmin}{d_{\sfm\sfi\sfn}}
\newcommand{\Cob}[1]{C^{#1}_{\mathrm{ob}}}
\newcommand{\Cmb}[1]{C_{#1,\mathrm{mb}}}
\newcommand{\Cobt}[1]{\tilde C^{#1}_{\mathrm{ob}}}
\newcommand{\Ctailx}[1]{C^{#1}_{\mathrm{pos}}}
\newcommand{\CtailE}[1]{C^{#1}_{\mathrm{En}}}
\newcommand{\fXc}[1]{f_{#1}}
\newcommand{\hk}{\mathfrak{K}}
\newcommand{\Ximb}{\Xi_{\mathrm{mb}}}
\newcommand{\Nf}{\mathcal{N}}
\begin{document}

\begin{abstract} 
	\noindent
	We consider the quantum dynamics of a many-fermion system in $\IR^d$ with an ultraviolet regularized pair interaction as previously studied in [M. Gebert, B. Nachtergaele, J. Reschke, and R. Sims, Ann. Henri Poincar\'e 21.11 (2020)]. We provide a Lieb--Robinson bound under substantially relaxed assumptions on the potentials. We also improve the associated one-body Lieb--Robinson bound on $L^2$-overlaps to an almost ballistic one (i.e., an almost linear light cone) under the same relaxed assumptions.
	Applications include the existence of the infinite-volume dynamics and clustering of ground states in the presence of a spectral gap. We also develop a fermionic continuum notion of conditional expectation and use it to approximate time-evolved fermionic observables by local ones, which opens the door to other applications of the Lieb--Robinson bounds.
\end{abstract}

\maketitle

\section{Introduction}
\noindent
A Lieb--Robinson bound (LRB) \cite{LiebRobinson.1972} establishes an upper bound on the propagation speed of quantum information in a quantum many-body system. These bounds are typically ``ballistic'' in the sense that they bound the propagation speed uniformly in time. As first discovered by Hastings in the early 2000s LRBs are decisive analytical tools for resolving fundamental problems in quantum many-body physics and quantum information theory; see \cite{hastings2004lieb,hastings2005quasiadiabatic,BravyiHastingsVerstraete.2006,HastingsKoma.2006,NachtergaeleSims.2006,NachtergaeleOgataSims.2006,hastings2007area} and the later reviews \cite{nachtergaele2010lieb,gogolin2016equilibration,chen2023speed}. LRBs and their applications are mainly confined to the setting of lattice Hamiltonians with local and bounded interactions (typically quantum spin systems) with recent extensions to long-range spin interactions  \cite{chen2019finite,kuwahara2020strictly,tran2020hierarchy,tran2021lieb} and lattice fermions \cite{gluza2016equilibration,NachtergaeleSimsYoung.2018}. However, the standard techniques for deriving LRBs break down for \textit{unbounded} interactions. A natural class of lattice systems with unbounded interactions are lattice bosons and extending LRBs to these has been an active topic recently \cite{faupin2022maximal,faupin2022lieb,yin2022finite,kuwahara2022optimal,van2023optimal,lemm2023information,sigal2023propagation,lemm2023microscopic}.
%

For quantum many-body systems in \textit{continuous space}, another challenge for deriving LRBs comes from the ultraviolet divergences at short distances which are associated with unbounded energy and hence unbounded propagation speed. For continuum few-body systems (e.g., defined by linear Schr\"odinger operators $-\Delta+V$ on $L^2(\mathbb R^d)$ or variants thereof) one can derive propagation bounds through energy cutoffs \cite{SigalSoffer.1988,Skibsted.1991,HunzikerSigalSoffer.1999,ArbunichPusateriSigalSoffer.2021,BreteauxFaupinLemmSigal.2022,breteaux2023light} and this also applies to the nonlinear Hartree equation \cite{arbunich2023maximal} that emerges from quantum many-body dynamics in suitable scaling limits (e.g., mean-field) \cite{elgart2004nonlinear,elgart2006gross}.

So far, only a small number of works in mathematical physics have considered propagation bounds for \textit{continuum quantum many-body Hamiltonians} that are uniform in the total particle number and hold robustly outside of special parameter regimes (e.g., scaling limits). Bony, Faupin, and Sigal \cite{bony2012maximal} controlled the propagation speed of photons in a non-relativistic QED model. 

The present paper is motivated by a work of Gebert, Nachtergaele, Reschke, and Sims \cite{GebertNachtergaeleReschkeSims.2020} who studied a  class of continuum fermions whose pair interaction is smeared out (``UV-regularized'') with a fixed Gaussian $\ph$, i.e., they considered the many-body Hamiltonian
\begin{equation}\label{eq:H}
H_\Lambda=  \int \left(\nabla a_x^\dagger \nabla a_x+V(x) a_x^\dagger a_x\right)\d x+ \int_{\Lambda}\int_{\Lambda}  W(x-y)\ad(\ph_x)\ad(\ph_y)a(\ph_y)a(\ph_x) \d x \d y
\end{equation}
on the fermionic Fock space $\mathcal F_{\mathrm{antisymm}}(L^2(\mathbb R^d))$. For lattice fermions, a typical Lieb--Robinson bound would control the anticommutator $\|\{\tau_t(a(f)),\ad(g)\}\|$ where $\tau_t$ is the Heisenberg time evolution (conjugation with $e^{-\mathrm{i}tH_\Lambda}$) and $f,g\in L^2(\mathbb R^d)$.
The authors of \cite{GebertNachtergaeleReschkeSims.2020} instead focused on controlling the difference of anticommutators where one is evolved with the full dynamics $\tau_t$ and the other one is evolved with the non-interacting dynamics $\tau_t^0$ (where the Hamiltonian is taken to be \eqref{eq:H} with $W=0$). That is, they consider the quantity
\begin{equation}\label{eq:quantity}
\|\{\tau_t(a(f)),\ad(g)\} -\{\tau_t^0(a(f)),\ad(g)\}\|
=\|\{\tau_t(a(f)),\ad(g)\} -\langle e^{-\mathrm{i}t(-\Delta+V)},g\rangle\}\|
\end{equation}
where the equality holds by the CAR. The quantity \eqref{eq:quantity} \textit{isolates the effect of the interactions on quantum propagation}. This quantity can in principle be bounded without using any UV regularization for the one-body part in the Hamiltonian $H_\Lambda$.
Indeed, the main result of \cite{GebertNachtergaeleReschkeSims.2020} is a Lieb--Robinson bound (LRB) on this quantity
\begin{equation}\label{eq:GNRS}
\|\{\tau_t(a(f)),\ad(g)\} -\{\tau_t^0(a(f)),\ad(g)\}\|
\leq \|f\|_1 \|g\|_1 e^{C(t)-a(t)  \dist(\supp f,\supp g)}
\end{equation}
where $f,g\in L^1(\mathbb R^d)\cap L^2(\mathbb R^d)$ and $C(t)$ and $1/a(t)$ are both explicit and polynomially growing in $t$. A key application of the bound \eqref{eq:GNRS} is to derive existence of the thermodynamic limit of the dynamics of $\lim_{\Lambda\to\infty}\tau_t(a(f))$ \cite{GebertNachtergaeleReschkeSims.2020}. Notice that for this application the growth of $C(t)$ and $a(t)$ is unproblematic since $t$ is held fixed as $\Lambda \to\infty$. The proof of \eqref{eq:GNRS} in \cite{GebertNachtergaeleReschkeSims.2020} required the strong assumptions on $V$ and $W$ appearing in \eqref{eq:H}: (i) $V$ is assumed to be the Fourier transform of a compactly supported even signed measure (which implies that it extends to an entire function by the Paley-Wiener theorem) and (ii) $W$ is exponentially decaying. This raised the open problem to derive a many-body LRB under substantially weaker assumption on the potentials $V$ and $W$.
\begin{figure}
	\begin{center}
		\pgfmathdeclarefunction{myfunctx}{1}{%
			\pgfmathparse{(1+cos(#1*(#1-360)*(#1-90)/6000)/5)*sin(#1)}%
		}
		\pgfmathdeclarefunction{myfuncty}{1}{%
			\pgfmathparse{(1+cos(#1*(#1-360)*(#1-90)/6000)/5)*cos(#1)}%
		}
		\begin{tikzpicture}
			\pgfplotsset{colormap/bluewhitered}
			\begin{axis}[
				zmin=0,
				ticks=none,
				clip=false,
				axis line style={draw=none}
				]
				\addplot[fill=darkcerulean,opacity=.3,domain=0:360,samples=200,draw=none] ({2*cos(x)},{2*sin(x)});
				\addplot[darkcerulean,->,domain={1/sqrt(2)}:{sqrt(2)}] {x} node[midway,above] {\contour{white}{$t$}};
				\addplot[fill=darkcerulean,domain=0:360,samples=200,draw=none] ({cos(x)},{sin(x)}) node[below] {\contour{white}{$\textcolor{darkcerulean}{\supp f}$}};
				\addplot[fill=darkcoral,domain=0:360,samples=200,draw=none] ({2.5+myfunctx(x)},{2.5+myfuncty(x)}) node[right] {\contour{white}{$\textcolor{darkcoral}{\supp g}$}};
				\addplot3[surf,fill=darkcerulean,domain=0:360,y domain=0:1,variable = \t, variable y = \s, samples = 60, samples y = 30] ({s*sin(t)},{s*cos(t)},{.5+exp(-5*s^2)}) node[above]{\contour{white}{$\textcolor{darkcerulean}{f}$}};
				\addplot3[surf,fill=darkcoral,domain=0:360,y domain=0:1,variable = \t, variable y = \s,samples=60, samples y = 30 ] ({2.5+s*myfunctx(t)},{2.5+s*myfuncty(t)},{.5+exp(-5*s^2)}) node[right] {\contour{white}{$\textcolor{darkcoral}{g}$}};
			\end{axis}
		\end{tikzpicture} 
			\caption{Schematic for applying the Lieb--Robinson bounds \eqref{eq:GNRS}--\eqref{eq:obLR}: For sufficiently regular $f$ and $g$, the many-body anti-commutator $\|\{\tau_t(a(f)),a(g)\}\|$ and the one-body overlap $\langle e^{-\mathrm{i}t (-\Delta+V)}f,g\rangle$ are both small for $\abs t\ll \dist(\supp f,\supp g)$.}\label{fig:LRbound}
	\end{center}
\end{figure}

In this paper, we prove a slightly different Lieb--Robinson bound on the quantity \eqref{eq:quantity} (\cref{thm:manybody})
\begin{equation}\label{eq:mbLR}
\|\{\tau_t(a(f)),\ad(g)\} -\{\tau_t^0(a(f)),\ad(g)\}\| \le \|f\|_2\|g\|_2  e^{C_n(t)} \dist(\supp f,\supp g)^{-n},
\end{equation}
which only requires  $f,g\in L^2(\mathbb R^d)$ and holds under substantially weaker regularity assumptions on $V$ and $W$. Essentially we need that $V\in C_\sfb^{2n}$ and $W(x)\lesssim |x|^{-n}$; see \cref{hyp:V} and  \cref{hyp:W} for the precise formulations.  This result addresses the above-mentioned  problem to obtain an LRB also under substantially relaxed assumptions on $V$ and $W$.  As a corollary, we obtain existence of the thermodynamic limit of the dynamics for our much broader class of potentials.

\medskip
A new dynamical concept introduced in  \cite{GebertNachtergaeleReschkeSims.2020} is that of a \textit{one-body Lieb--Robinson bound} of the form
\begin{align}\label{eq:obLR}
	\abs{\braket{f,\e^{-\i t (-\Delta + V)} \ph}}^p \le \int F(t,\abs{y})\abs{f(y)}^p\d y \quad\mbox{for all}\ f\in L^2(\IR^d)\ \mbox{and some}\ p\ge 1, 
\end{align}
where $F(t,R)$ decays rapidly in $R$ outside of a spacetime region. This is leveraged into the many-body Lieb--Robinson bound \eqref{eq:GNRS} by a Duhamel-type iteration procedure. It is worth pointing out that this connection between one-body LRB and many-body LRB is fundamentally different from the case of integrable models where one-body transport fully characterizes the one-body LRB \cite{hamza2012dynamical,damanik2014new,damanik2015quantum,damanik2016anomalous,kachkovskiy2016transport,gebert2016polynomial}: here the model is truly interacting and the one-body LRB is just an estimate used in deriving the many-body LRB. 

In \cite{GebertNachtergaeleReschkeSims.2020}, \eqref{eq:obLR} is proved with $F(t,R)$ decaying exponentially in the spacetime region $R>t^3$ again assuming that $V$ is entire. This condition would allow for rapid acceleration. Another open problem has been to improve this to a linear light cone condition $R>vt$ as in the usual LRB. 
 
 In our other main result (\cref{prop:LRint}), we derive a bound of the form \cref{eq:obLR} with $F(t,R)$ decaying polynomially in the spacetime region $R>t^{1+\tfrac{1+2\delta}{n}}$ for any $\delta>0$  and $V\in C_\sfb^{2n}$. That is, we establish the existence of an almost linear cone for one-body Lieb--Robinson bounds for sufficiently large $n$, which addresses the second open problem.

\medskip
Regarding \textit{methods}, we prove the one-body Lieb--Robinson bound by refining techniques for deriving propagation speed bounds for Schr\"odinger operators \cite{SigalSoffer.1988,Skibsted.1991,HunzikerSigalSoffer.1999,ArbunichPusateriSigalSoffer.2021}. These describe upper bounds for the propagation of a Schr\"odinger particle, when an explicit ultraviolet energy cutoff is in place.
They are related to the many-body ASTLO (adiabatic spacetime localization observables) method used for lattice boson Hamiltonians in \cite{faupin2022maximal,faupin2022lieb,sigal2023propagation,lemm2023information,lemm2023microscopic}.
Since we do not use a hard ultraviolet cutoff in energy or momentum for the many-body fermions, such bounds are not directly available.
However, by using the rapid decay of the test function $\ph$ in momentum space, we can employ a momentum cutoff together with Markov's inequality and then use bounded-energy propagation speed bounds to establish an almost linear light cone for the one-body LRB \eqref{eq:obLR}.

Our derivation of the many-body Lieb--Robinson bound is then an appropriate adaption of the iteration scheme established in \cite{GebertNachtergaeleReschkeSims.2020}. We have to take some care here due to the fact that we consider the case $p=2$ and not $p=1$ in \cref{eq:obLR}.
This allows us to consider all generators of the CAR algebra $\{a(f)\}$ with $f$ ranging over the full $L^2(\IR^d)$.
We further show that our Lieb--Robinson bound extends to commutators and anti-commutators of a dense set in the CAR algebra (\cref{cor:commutators} and \cref{cor:anti commutators}).

\medskip
We complement our Lieb--Robinson bound with a variety of \textit{applications}. For this, we adapt  the by now standard techniques for lattice models to the continuum case, which in some cases leads to new technical challenges as we describe below.

As mentioned before, a standard application of LRBs for quantum spin systems is to construct a strongly continuous group of automorphisms of the CAR algebra in the thermodynamic limit (\cref{th:infinite volume dynamics}). In the discrete case, such results were proved in \cite{Robinson.1976,BratteliRobinson.1996,NachtergaeleOgataSims.2006,NachtergaeleSchleinSimsStarrZagrebnov.2010,NachtergaeleVershyninaZagrebnov.2011}.
The argument we use is the same as in  \cite{GebertNachtergaeleReschkeSims.2020} and the only point is that using our LRB one obtains the result for a much broader class of potentials $V$ and $W$ in \eqref{eq:H}. 

A second standard application of LRBs for quantum spin systems is the exponential decay of correlations of ground states with a spectral gap (also known as clustering of correlations) \cite{HastingsKoma.2006,NachtergaeleSims.2006}. Building on  techniques from these works, we derive an analogous result in the continuous setting (\cref{th:exponential clustering}). In doing this, we have to bound the norm of the commutator $[\tau_t(a(f)),a(g)]$ evolved by the full many-body dynamics. In our setting, this requires two separate estimates: first we use the many-body LRB \cref{eq:mbLR} to reduce this to an estimate on the non-interacting dynamics $\tau_t^0$. Second, we bound the non-interacting dynamics recalling that $\{\tau_t^0(a(f)),a(g)\}=\langle e^{-\mathrm{i}t (-\Delta+V)}f,g\rangle$ by the CAR and then we use the one-body LRB to bound the overlap $\langle e^{-\mathrm{i}t (-\Delta+V)}f,g\rangle$.

In the study of discrete quantum systems, LRBs provide the possibility to approximate time evolved local observables by strictly local ones \cite{BravyiHastingsVerstraete.2006,NachtergaeleOgataSims.2006}. This is a crucial ingredient in applications of LRBs, e.g., proving stability of the spectral gap \cite{bravyi2010topological,michalakis2013stability} and performance control on quantum simulation \cite{kliesch2014lieb} and various information-theoretic protocols \cite{epstein2017quantum}. For discrete quantum spin systems, the approximation result that effectively replaces time evolved local observables by strictly local ones follows directly by representing the partial trace as a Haar-average over local unitaries  \cite{BravyiHastingsVerstraete.2006,NachtergaeleOgataSims.2006}. For lattice fermions the implementation of this idea is similar but more subtle \cite{NachtergaeleSimsYoung.2018}.
In our setting of continuum fermions, we derive an approximation result that effectively replaces time evolved local observables by strictly local ones. For this, we build on the construction for lattice fermions in \cite{NachtergaeleSimsYoung.2018} but in the continuum new difficulties arise because $L^2(\Lambda)$ is infinite-dimensional even for bounded $\Lambda$. This has the consequence that we need to control propagation of many observables simultaneously in a summable way. We address this by introducing a notion of ``partial partial trace (PPT)'': For the PPT we only consider the partial trace with respect to a finite subset of the orthonormal basis of each dyadic annulus and we allow the cardinality of the subset to grow with the dyadic scale. This construction and our bound on it (\cref{th:conditional expectation LR}) open the door to many other applications of LRBs now in the context of continuum fermions. For example, one may consider the stability of the spectral gap for a class of frustration-free continuum fermions similar to \cite{bravyi2010topological,michalakis2013stability}. The open gap stability problem for fermions was emphasized in \cite{GebertNachtergaeleReschkeSims.2020} and there it was also pointed out that a natural setting to consider would be to perturb a gapped non-interacting Hamiltonian by turning on an interaction $W$. For lattice fermions, this was done by renormalization group methods in \cite{de2019persistence}.

\subsection*{Organization of the paper}
In the next \cref{sec:results}, we introduce the model precisely and present our main results. In \cref{sec:onebody,sec:manybody}, we prove the one-body and many-body Lieb--Robinson bounds, respectively. Afterwards, in \cref{sec:applications}, we derive the applications discussed above.

\subsection*{Notation}
Throughout this article, we fix the following conventions:
\begin{itemize}
	\item $B_R(x) \coloneqq \{ y \in \IR^d : \abs{x-y} \leq R\}$,
	\item $a\vee b \coloneqq \max\{a,b\}$, $a\wedge b \coloneqq \min\{a,b\}$,
	\item $\braket t \coloneqq (1 + t^2)^{1/2}$ for $t \in \IR$,
	\item $ a \lesssim  f(\sigma)$ means there exists a constant $C>0$ independent of $\sigma$ such that $a\le C f(\sigma)$,
	\item  $\wh{p}^j = - \i \partial_j$ denotes the momentum operator,
	\item $\cC_\sfb^{n}(\IR)$ and $\cC_0^{n}(\IR)$, $n \in \IN \cup \{\infty\}$, denote the $n$ times continuously differentiable bounded and compactly supported functions on $\IR$,
	\item $\cS$ is the set of Schwartz functions on $\IR^d$,
	\item $[A,B]$ and $\{A,B\}$ denote the commutator and anticommutator between $A$ and $B$,
	\item	$C^*(A)$ for a set $A$ of bounded operators denotes the $C^*$-subalgebra generated by $A$.
\end{itemize}

\section{Model and main results}\label{sec:results}
\noindent
In this \lcnamecref{sec:results}, we present the model under consideration and describe our main results.

{We first describe our result on} one-body Lieb--Robinson bounds, i.e., overlap bounds of the form \cref{eq:obLR} for Schr\"odinger operators \begin{equation}
	\label{eq:onebody op}
	T \coloneqq -\frac{\Delta}{2}  + V
\end{equation}
We will work under the following \lcnamecref{hyp:Lbound} for the electrostatic potential $V:\IR^d\to\IC$.
\begin{hyp}\label{hyp:V}~
	\begin{enumhyp}
		\item\label{hyp:Lbound} Assume $V\in L^2_{\sfl\sfo\sfc}(\IR^d)$ is $\Delta$-bounded with bound $<1$, i.e., there exists $a\in[0,1)$ and $b\ge 0$ such that
		\[ \nn{V f} \le a\nn{\Delta f} + b\nn{f} \qquad \mbox{for all}\ f\in \cD(\Delta) = H^2(\IR^d). \]
		\item\label{hyp:Vdiff} $V$ is $n_V \geq 2$ times differentiable with essentially bounded derivatives. 
	\end{enumhyp}
\end{hyp}

If \cref{hyp:Lbound} is satisfied, then $T$
defines a selfadjoint lower-semibounded operator on $\cD(T)=\cD(\Delta)$, by the Kato--Rellich theorem.

In the following, we will usually assume $\ph$ to be a
Schwartz function in $\IR^d$ and for any $x \in \IR^d$ denote the shifted function by
\[
\ph_x := \ph(\cdot - x).
\]
A particular choice of $\ph$ will be the $L^1$-normalized Gaussian of variance $\sigma > 0$, i.e., 
\begin{equation}\label{def:gauss}
	\ph^\sigma (y) \coloneqq (\pi\sigma^2)^{-d/2}\exp\left(-y^2/2\sigma^2\right)\quad\mbox{for}\ y\in\IR^d.
\end{equation}	
The $L^1$-normalization is natural in the context of many-body theories, since it implies the convergence of the many-body dynamics to pointwise interactions, see \cref{rem:pointwise} for more details.
\begin{rem}
	Instead of Schwartz functions and Gaussians, it is possible to generalize our result to arbitrary functions which are sufficiently localized in space and energy space, cf. \cref{th:tail estimates}.  For example, one could consider  $\ph \in \chr_{(-\infty,E)}(T) \in  L^2(\IR^d)$, in which case the energy localization is obvious. Spatial localization can be proven using Agmon estimates, cf. \cite{Agmon.1982}.
\end{rem}
Our first main result is the following one-body Lieb--Robinson bound. 
Its proof  is given in \cref{sec:onebody}.
\begin{thm}[{One-body Lieb--Robinson bound}]\label{prop:LRint}
	Assume \cref{hyp:V} holds, 
	let $n \in\IN$, $\ph\in\cS$ 
	and let $\delta>0$.
	\begin{enumthm}
		\item Assume $2n \leq n_V$. Then there exists a constant $\Cob{0} > 0$  such that 
		\[ \Abs{\braket{f,\e^{-\i t T}\ph_x}}^2 \leq \Cob{0} 
	\braket{t}^{1+2\delta}\int \left(1 \wedge \frac{\braket t}{\abs{x-y}} \right)^{n} \abs{f(y)}^2 \d y 
	\]
		for all $x \in \IR^d$, $f\in L^2(\IR^d)$ and $t\in\IR$.
		\item 
		\label{it:onebody time zero}
		Assume $2n+2 \leq n_V$.  Then there exists a constant $\Cob{1} > 0$  such that
		\[ \Abs{\braket{f,(\e^{-\i t T} - \Id)\ph_x}}^2 \leq \Cob{1}  \abs t^2
		\braket{t}^{1+2\delta}\int \left(1 \wedge \frac{\braket t}{\abs{x-y}} \right)^{n} \abs{f(y)}^2 \d y 
		\]
		for all $x \in \IR^d$, $f\in L^2(\IR^d)$ and $t\in\IR$.
	\end{enumthm}
\end{thm}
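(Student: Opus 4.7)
The plan is to reduce both assertions to the following $L^2$-tail estimate on $\e^{-\i tT}\ph_x$:
\[
\nn{\chiRx{R}{x}\,\e^{-\i tT}\ph_x}^2 \;\lesssim\; \braket{t}^{1+2\delta}\bigl(\braket{t}/R\bigr)^n \qquad \text{for all } R\ge \braket{t}.
\]
Given this, a dyadic layer-cake decomposition in $R$ upgrades it to $\nn{(1\vee\tfrac{|\cdot-x|}{\braket{t}})^{n/2}\,\e^{-\i tT}\ph_x}^2\lesssim \braket{t}^{1+2\delta}$, and the Cauchy--Schwarz inequality with the weight $(1\vee\tfrac{|y-x|}{\braket{t}})^n$ applied to $\abs{\braket{f,\e^{-\i tT}\ph_x}}^2$ then yields assertion~(i). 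Assertion~(ii) follows by writing $(\e^{-\i tT}-\Id)\ph_x = -\i\int_0^t \e^{-\i sT}T\ph_x\,\d s$, applying (i) with $\ph$ replaced by the Schwartz function $T\ph$ (which consumes two extra derivatives of $V$, explaining the strengthened hypothesis $2n+2\le n_V$), and using Cauchy--Schwarz in the $s$-variable to extract the factor $t^2$.

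The tail estimate itself is obtained by splitting at an energy scale $E>0$ to be optimized, $\ph_x = P_{<E}\ph_x + P_{\ge E}\ph_x$ with $P_{<E}\coloneqq \chr_{(-\infty,E)}(T)$. For the high-energy remainder, Markov's inequality gives
\[
\nn{P_{\ge E}\ph_x}^2 \;\le\; E^{-2n}\,\nn{T^n\ph_x}^2 \;\lesssim\; E^{-2n}
\]
uniformly in $x\in\IR^d$: since $\ph\in\cS$ and $V\in \cC_\sfb^{2n}$ by \cref{hyp:Vdiff}, the operator $T^n$ acts on $\ph_x$ as a differential operator of order $2n$ with bounded coefficients, producing a Schwartz-class function with $L^2$-norm uniform in $x$. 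For the low-energy part, I would invoke a Hunziker--Sigal--Soffer/Skibsted propagation bound of ASTLO type: states in $\ran P_{<E}$ propagate with classical speed $\lesssim\sqrt{E}$, so combining a smoothly time-dependent spatial cutoff with the Schwartz decay of $\ph$ (used to truncate the initial support near $x$) should deliver
\[
\nn{\chiRx{R}{x}\,\e^{-\i tT}P_{<E}\ph_x}^2 \;\lesssim\; (R-c\sqrt{E}\,\abs{t})_+^{-m}
\]
for any prescribed $m$. Choosing $E = c_0(R/\braket{t})^2$ with $c_0>0$ sufficiently small enforces $\sqrt{E}\abs{t}\le R/2$ for $R\ge\braket{t}$ and simultaneously yields $E^{-2n}\lesssim (\braket{t}/R)^{2n}\le (\braket{t}/R)^n$; the slack factor $\braket{t}^{1+2\delta}$ absorbs polynomial prefactors produced by the adiabatic construction, and the freedom in $\delta>0$ reflects the optimization margin.

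The principal technical obstacle I anticipate is establishing the low-energy propagation bound with prefactors sharp enough to permit the above balancing and to yield a genuinely almost-linear light cone $R > \braket{t}^{1+(1+2\delta)/n}$. A naive single-commutator estimate involving only $\i[T,F(\abs{y-x})]$ would lead to $E$- and $R$-dependent prefactors that degrade the cone to the superlinear one $R>t^3$ appearing in \cite{GebertNachtergaeleReschkeSims.2020}. The improvement that I would pursue is to iterate the positive-commutator argument against an adiabatically time-dependent family of smooth spatial cutoffs of the form $F_s(\abs{y-x}/(R-c\sqrt{E}\,s))$, tracking the $E$-dependence of each Heisenberg derivative carefully, so that the $E^{-2n}$-type gain from the high-energy piece translates into genuine polynomial decay in $R/\braket{t}$ rather than merely in $R$.
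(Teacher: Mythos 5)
Your overall architecture matches the paper's: split $\ph_x$ at an energy scale $E$, kill the high-energy part by Markov's inequality via $\nn{T^m\ph_x}$, control the low-energy part by a Sigal--Soffer-type propagation bound after spatially truncating $\ph_x$ using its Schwartz decay, and recover general $f\in L^2$ by a dyadic decomposition plus Cauchy--Schwarz; part (ii) via $\tfrac{\sfd}{\sfd t}\braket{f,\e^{-\i tT}\ph_x}=\braket{f,\e^{-\i tT}(-\i T)\ph_x}$. The gap is in the quantitative balancing, which is the crux of the theorem. The propagation bound that the positive-commutator method actually delivers (the paper's \cref{prop:APSS}) has the form $C_{m,\eps}\bigl((\sfc_E+\eps)/(R-r)\bigr)^m$ with the velocity $\sfc_E\lesssim\sqrt{E}$ sitting in the \emph{numerator}; it is not your additive form $(R-c\sqrt{E}\,\abs t)_+^{-m}$ with an $E$-uniform constant. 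With your choice $E=c_0(R/\braket t)^2$ one gets $\sfc_E/(R-r)\sim 1/\braket t$, so the low-energy term decays in $t$ but not at all in $R$, the claimed tail estimate fails, and no factor $\braket t^{1+2\delta}$ can absorb a loss that grows with $R$. You correctly flag this as the principal obstacle, but you leave it unresolved; it is conceivable that the commutator iteration can be rerun with the smoothing scale decoupled from $\sfc_E$ so as to yield your stronger form, but that is exactly the estimate that would have to be proved, and neither your sketch nor the cited literature supplies it. The paper closes the argument differently: it keeps the weaker propagation bound, chooses the $t$-independent, slowly growing cutoff $E=R^{2\delta/(n+\delta)}$, and compensates in the Markov step with the large exponent $m=n(n+\delta)/(2\delta)$, so that $E^{-m}=R^{-n}$ while $\bigl(\braket t\sqrt E/R\bigr)^{n+\delta}=\braket t^{\delta}(\braket t/R)^n$. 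This is where the $\delta$ in the statement actually originates, and it consumes far more powers of $T$ (hence derivatives, supplied by $\ph\in\cS$) than the $n$ powers you budget.

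Two smaller points. First, your layer-cake upgrade diverges as stated: if the tail estimate decays only like $(\braket t/R)^n$, the dyadic sum behaves like $\sum_k 2^{kn}\cdot 2^{-kn}=\infty$, so you need decay of order $(\braket t/R)^{n+\delta'}$; in the paper the analogous summation is what produces the extra factor $\braket t$ in the prefactor $\braket t^{1+2\delta}$. Second, $T\ph_x\neq(T\ph)_x$ for non-constant $V$, so (ii) cannot literally be ``(i) applied to $T\ph$''; one must run the whole estimate for $T^j\ph_x$ directly with tail bounds uniform in $x$ (this is \cref{th:tail estimates} with $j=1$, and it is what the strengthened hypothesis $2n+2\le n_V$ is for). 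Both of these are repairable within your framework; the balancing issue in the first paragraph is not, without either proving the strengthened propagation bound or switching to the paper's choice of $E$ and $m$.
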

\begin{rem}
	If $\ph = \ph^\sigma$ {is the Gaussian} \cref{def:gauss} with $\sigma \leq 1$, then we have {the explicit bound}
	\begin{align*}
		\Cob{j} \lesssim \sigma^{-d} \sigma^{-4(n(n+\delta)/2\delta+j)}
		\qquad\mbox{for}
		\ j\in\{0,1\}.
	\end{align*}
\end{rem}
\begin{rem}\label{rem:obLR}
 The light cone in above theorem can easily seen to be the region $\abs{x-y} \lesssim \braket t^{1+(1+2\delta)/n}$, which is almost linear, at least for large $n\in\IN$.
 In comparison, the light cone in \cite[Theorem 2.3]{GebertNachtergaeleReschkeSims.2020} is given by $\abs{x-y} \lesssim \braket t^3$.
 Furthermore, our result allows for a broad class of one-body potentials $V$ and arbitrary Schwartz functions $\ph$.
 This comes with the cost that the decay outside of the light cone is polynomial and not exponential, as in \cite{GebertNachtergaeleReschkeSims.2020}.
\end{rem}
\begin{rem}
	Instead of the non-relativistic setting \eqref{eq:onebody op}, one could also consider a massive relativistic Schrödinger operator, i.e.,
	\[
	T = \sqrt{-\Delta + 1} + V.  
	\]
	In this case a similar one-body LRB as \cref{prop:LRint} can be derived by combining our method with  \cite[Theorem 1.1]{BreteauxFaupinLemmSigal.2022} (setting the Kraus operators $W_j=0$ in the statement therein).
	In this situation an energy cutoff is not required because the free group velocity $\partial_k \sqrt{k^2+1}$ is bounded which makes the analysis shorter and the singular behavior for $\sigma \downarrow 0$ becomes better, since the term $\sigma^{-4(n(n+\delta)/2\delta)}$  originating from the energy cutoff, does not appear. However, {the estimates are not sufficient to show uniform bounds in $\sigma$, which is a considerably more difficult problem.}
\end{rem}
The second main result is an application of the one-body Lieb--Robinson bound \cref{prop:LRint} to a model of smeared-out interacting fermions, similarly to \cite[\textsection~4]{GebertNachtergaeleReschkeSims.2020}. 
We first give the complete definition of the many-body Hamiltonian \eqref{eq:H}.
The many-body interaction is given by a function $W$ satisfying the following constraints. 
\begin{hyp}\label{hyp:W}
	Let $W\in L^1(\IR^d)\cap L^\infty(\IR^d)$ satisfy $W(x)=W(-x)$ for a.e. $x\in\IR^d$ and assume there exist $n_W\in\IN$  and $\cw > 0$ such that
	\begin{align}
		\label{eq:W decay}
		\abs{W(x)} \leq \cw \left(1 \wedge \frac{1}{\abs{x}} \right)^{n_W}  
	\end{align}
	for all $x \in \IR^d$. 
\end{hyp}
To introduce the many-body model, let
\[ \FS \coloneqq \bigoplus_{k=0}^\infty L^2_\sfa(\IR^{d\cdot n}) \]
denote
the fermionic Fock Space, where the anti-symmetrization in each summand occurs over the $n$ $d$-dimensional variables.
Given a selfadjoint operator $A$ on $L^2(\IR^d)$, we define its second quantization $\dG(A)$ on $\FS$ as the usual selfadjoint closure of
\[
0 \oplus \bigoplus_{k=0}^\infty  \Id \otimes \cdots \otimes \underbrace{A}_{k-\text{th position}} \otimes \cdots \otimes \Id. 
\]
Further, let $a(f)$, $f\in L^2(\IR^d)$ denote the fermionic annihilation operator, i.e.,
\[ a(f)(\psi^{(n)})_{n\in\IN_0} \coloneqq {\sqrt{n+1}}\left(\int \overline {f(k)} \psi^{(n+1)}(k,\cdots)\d k\right)_{n\in\IN_0}.  \]
This defines a bounded operator on $\FS$ with $\nn{a(f)}=\nn f$ and, denoting its adjoint as $\ad(f)\coloneqq a(f)^*$, we have the usual canonical anticommmutation relations (CAR)
\[ \{a(f),a(g)\}=\{\ad(f),\ad(g)\}=0, \quad \{a(f),\ad(g)\}=\braket{f,g} \qquad\mbox{for all}\ f\in  L^2(\IR^d).\]
The many-body Hamiltonian for the smeared-out fermions on $\FS$ is now given by 
\begin{equation}
	\label{eq:manybody op}
	\boxed{H_\Lambda \coloneqq \dG(T) + \int_{\IR^d}\int_{\IR^d} \chr_{\Lambda\times \Lambda}(x,y) W(x-y)\ad(\ph_x)\ad(\ph_y)a(\ph_y)a(\ph_x) \d x \d y}
\end{equation}
For any bounded and measurable $\Lambda\subset \IR^d$, this defines a selfadjoint lower-semibounded operator on $\FS$.
\begin{rem}\label{rem:pointwise}
	At this point, it is worth commenting on our assumption of $L^1$-normalization for the Gaussians \cref{def:gauss}, since it might seem unnatural from a Hilbert space perspective.
	However, assuming that the limit $\sigma\downarrow 0$ should converge to point interactions between the fermions, the $L^1$-normalization is essential, cf. \cite[Thm.~8.14]{Folland.1999}.
	Other normalizations would imply convergence to different non-physical limits.
	We refer to \cite[Appendix A]{GebertNachtergaeleReschkeSims.2020} for a proof of the operator convergence (in strong resolvent sense) in the limit $\sigma\downarrow0$.
\end{rem}
We consider the Heisenberg time evolution with respect to the interacting Hamiltonian \eqref{eq:manybody op} and the free Hamiltonian $\dG(T)$. Explicitly, for a bounded operator $B$ on $\FS$, we write
\begin{equation}
	\ttls(B) \coloneqq \e^{\i t H_\Lambda} B\e^{-\i t H_\Lambda}, \quad \ttzs(B) \coloneqq \e^{\i t \dG(T)} B\e^{-\i t \dG(T)}  \qquad \mbox{for}\ t\in\IR.
\end{equation}
Following \cite{GebertNachtergaeleReschkeSims.2020}, let
\begin{equation}
	F_t^\Lambda (f,g) \coloneqq
	\norm{\{\ttls (a(f)) - \ttzs (a(f)) , \ad(g) \} }
	+ \norm{\{\ttls(\ad(f)),\ad(g)\}}
\end{equation}
denote the `difference' between the interacting and the free time evolution. Applying our one-body result \cref{prop:LRint}, we can now state our second main result, a Lieb--Robinson bound for $F_t^\Lambda (f,g)$. The general concept of our bound is illustrated in \cref{fig:LRbound}.
\begin{thm}[{Many-body Lieb--Robinson bound}]
	\label{thm:manybody}
	Assume \cref{hyp:V,hyp:W} hold.
	Let $\delta > 0$ and $n \in \IN$ such that $n \leq \frac{n_V}{2} \wedge n_W$. Then for all $f,g \in L^2(\IR^d)$, 
	there exist constants $\Cmb{1},\Cmb{2}$ depending on $\sigma,n,d,\delta$ and the latter one also on $W$, such that
	\begin{align*}
	F_t^\Lambda(f,g)^2 \leq \Ximb(t) \int_{\IR^d\times\IR^d} \left(1 \wedge \frac{\braket t}{\abs{x-y}} \right)^{n} \abs{f(x)}^2\abs{g(y)}^2 \d(x,y),
\end{align*}
where
\begin{align}
	\label{eq:ximb}
	\Ximb(t) :=  \Cmb{1} \abs t \braket{t}^{2(1+2\delta+d)} \exp(\Cmb{2} \braket{t}^{1+2\delta+3d} t^2 ).
\end{align}
	In the case of Gaussians, the $\sigma$-dependence for small $\sigma$ of the constants $\Cmb{1},\Cmb{2}$ is given as
	\begin{align*}
	\Cmb{1} &\l\sigma^{-8(n(n+\delta)/\delta)} \lesssim (\Cob{0})^2 \lesssim \sigma^{-2d} , \\
	\Cmb{2} &\lesssim \sigma^{-d} \Cob{0} \nn{\ph^\sigma}^4  \lesssim  \sigma^{-4d} \sigma^{-4(n(n+\delta)/\delta)} 
\end{align*}
	for all $\sigma \leq 1$.
\end{thm}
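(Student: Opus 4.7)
The plan is to derive the many-body bound from the one-body Lieb--Robinson bound \cref{prop:LRint} via a Duhamel expansion followed by a Gr\"onwall iteration, closely paralleling the scheme of \cite[\textsection 4]{GebertNachtergaeleReschkeSims.2020} but reworked for the $p=2$ (rather than $p=1$) version of the one-body propagation estimate used there.

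Denoting by $W_\Lambda$ the interaction term in \cref{eq:manybody op}, so that $H_\Lambda=\dG(T)+W_\Lambda$, the standard Duhamel formula yields
\begin{equation*}
\ttls(a(f))-\ttzs(a(f))=\i\int_0^t \tau_s^\Lambda\bigl([W_\Lambda, a(\e^{-\i(t-s)T}f)]\bigr)\d s,
\end{equation*}
and an analogous identity holds with $a(f)$ replaced by $\ad(f)$. Since $\{\ttzs(\ad(f)),\ad(g)\}=0$ by CAR, the second summand of $F_t^\Lambda$ is likewise captured by this Duhamel representation. It therefore suffices to evaluate $[W_\Lambda,a(h)]$ and $[W_\Lambda,\ad(h)]$ for generic $h\in L^2(\IR^d)$: expanding via $[AB,C]=A\{B,C\}-\{A,C\}B$ and using the CAR, each of these commutators reduces to a linear combination of cubic CAR monomials weighted by $W(x-y)$ and a scalar overlap of the form $\braket{h,\ph_\bullet}$ or $\braket{\ph_\bullet,h}$.

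The next step is to anticommute these cubic expressions with $\ad(g)$ and expand via the graded Leibniz rule. Each expansion produces either an additional scalar $\braket{\ph_\bullet,g}$, arising from $\{a(\ph_\bullet),\ad(g)\}$, or an operator-valued commutator of the shape $[\,\cdot\,,\ad(g)]$ that feeds into the next Duhamel step. Together with the overlap $\braket{\e^{-\i(t-s)T}f,\ph_\bullet}$ produced above, this delivers two overlaps in which the free propagator $\e^{-\i\tau T}$ acts on a smeared test function $\ph_\bullet$, which is exactly the setting of \cref{prop:LRint}. Absorbing $W(x-y)$ via \cref{hyp:W} and applying Cauchy--Schwarz against $\d x\,\d y$, one arrives at an integral inequality of the schematic form
\begin{equation*}
F_t^\Lambda(f,g)^2 \lesssim A(t)\int_{\IR^d\times\IR^d} \Bigl(1\wedge\tfrac{\braket t}{\abs{x-y}}\Bigr)^{n}\abs{f(x)}^2\abs{g(y)}^2 \d x \d y + B(t)\int_0^t \sup_{f',g'}F_s^\Lambda(f',g')^2\d s,
\end{equation*}
where the suprema range over $L^2$-normalized test functions produced by the Duhamel kernel and $A(t),B(t)$ collect the one-body constant $\Cob{0}$, the norm $\nn W_1$, and polynomial-in-$t$ losses from Cauchy--Schwarz.

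A Gr\"onwall-type estimate then closes the iteration and generates the exponential factor $\exp(\Cmb{2}\braket{t}^{1+2\delta+3d}t^2)$ in $\Ximb(t)$, while the prefactor $\Cmb{1}\abs{t}\braket{t}^{2(1+2\delta+d)}$ records the polynomial losses accumulated per iteration: a factor $\braket{t}^{1+2\delta}$ from each application of \cref{prop:LRint} and a factor $\braket{t}^{d}$ from the $d$-dimensional self-convolution of the one-body kernel. Substituting the stated asymptotics of $\Cob{0}$ and of $\nn{\ph^\sigma}$ into $A(t),B(t)$ in the Gaussian case recovers the announced $\sigma$-dependence of $\Cmb{1},\Cmb{2}$. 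The main technical obstacle is precisely this iteration: in contrast to \cite{GebertNachtergaeleReschkeSims.2020}, where $L^1$ control and dimension-insensitive convolutions permit a quick Dyson-series closure, the $p=2$ setting forces us to apply the one-body bound inside an $L^2$ integral and to establish a self-convolution estimate (e.g.\ by Young's inequality) for the kernel $(1\wedge\braket{t}/\abs{x-y})^{n}$ in $\IR^d$, which is what prevents a geometric blow-up of the iteration and ultimately enables the Gr\"onwall closure.
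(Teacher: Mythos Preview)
Your overall strategy---Duhamel expansion, commutator algebra to extract overlaps $\braket{\e^{-\i\tau T}f,\ph_\bullet}$, application of the one-body bound \cref{prop:LRint}, and iteration---matches the paper's approach and is correct in outline. You also correctly single out the self-convolution estimate for the kernel $G_{n,t}(x)=(1\wedge\braket{t}/\abs x)^n$ as the key technical ingredient.

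However, the schematic integral inequality you write has a genuine gap: passing to $\sup_{f',g'}F_s^\Lambda(f',g')^2$ in the recursive term discards the spatial localization that is essential for the final bound. Since $F_s^\Lambda(f',g')\le 6\nn{f'}\nn{g'}$ trivially, this supremum is $O(1)$, and a Gr\"onwall closure on it produces only a time-dependent bound on $F_t^\Lambda(f,g)^2$, not one that decays in $\dist(\supp f,\supp g)$. No choice of ``test functions produced by the Duhamel kernel'' fixes this once the sup has been taken.

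What the paper actually does (\cref{lem:GNRSest}) is retain the full $x$-dependence: the squared recursion reads
\[
F_t^\Lambda(f,g)^2 \le 2t\nn\ph^4 \int_0^t\!\!\int_{\IR^d} K_{t-s}(f,x)^2\bigl(\abs{\braket{\e^{-\i sT}\ph_x,g}}^2 + F_s^\Lambda(\ph_x,g)^2\bigr)\d x\,\d s,
\]
with $K$ an explicit kernel built from $W$ and the one-body overlap. Iterating this $k$ times yields a chain of spatial integrals over intermediate points $x_1,\dots,x_k$; the one-body bound together with the \emph{pointwise} convolution estimate $G_{n,t}\ast G_{n,t}\lesssim \braket{t}^d G_{n,t}$ (\cref{lem:simconv})---not Young's inequality, which would not preserve pointwise decay---collapses the chain back to a single $G_{n,t}\ast\abs g^2$. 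The nested time integrals contribute a double factorial $1/(2k-1)!!$ (\cref{lem:basic integral estimate,lem:Skest}), and summing over $k$ gives the exponential in $\Ximb(t)$. So the closure is an explicit Dyson-type series summation rather than Gr\"onwall, and keeping the intermediate spatial variables throughout is precisely what preserves the decay.
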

In the following, we will always assume \cref{hyp:V,hyp:W} to be satisfied and set $\nmax = \frac12 n_V\wedge n_W$.
Let us discuss two generalizatons of \cref{thm:manybody}  for observables with multiple creation and annihilation operators, in which case we can also estimate (higher-order) commutators.
\begin{cor}[Higher commutator estimate] 
	\label{cor:commutators}
	Let $N,M\in\IN$ such that $N$ or $M$ is even, and assume that the functions $f_1,\ldots,f_N,g_1,\ldots,g_M\in L^2(\IR^d)$ are normalized.
	Then for all $n \in \IN$ with $n\le\nmax$
	\begin{align*}
		&\nn{ \Big[ (\ttls - \ttzs)(a^\#(f_1) ) \cdots (\ttls - \ttzs)(a^\#(f_N) ), a^\#(g_1) \cdots a^\#(g_M) \Big]  } \\ &\leq  	 \sum_{i=1}^N \sum_{j=1}^M  \left(\Ximb(t) \int_{\IR^d\times\IR^d} \left(1 \wedge \frac{\braket t}{\abs{x-y}} \right)^{n} \abs{f_i(x)}^2\abs{g_j(y)}^2 \d(x,y)\right)^{1/2} ,
	\end{align*}
	where each $a^\#$ can be individually chosen as $a$ or $\ad$.
\end{cor}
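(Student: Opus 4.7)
The plan is to reduce $\nn{[A_1 \cdots A_N, B_1 \cdots B_M]}$, where I abbreviate $A_i \coloneqq (\ttls - \ttzs)(a^\#(f_i))$ and $B_j \coloneqq a^\#(g_j)$, to a finite sum of anticommutator norms $\nn{\{A_i, B_j\}}$, each of which is directly controlled by \cref{thm:manybody}. My first step would be to establish, by a straightforward induction on $N+M$ using the relation $XY = -YX + \{X,Y\}$ to move each $B_\ell$ past each $A_k$, the graded Leibniz identity
\begin{align*}
A_1 \cdots A_N B_1 \cdots B_M - (-1)^{NM} B_1 \cdots B_M A_1 \cdots A_N = \sum_{i=1}^N \sum_{j=1}^M \varepsilon_{i,j}\, P_{i,j}\, \{A_i, B_j\}\, Q_{i,j},
\end{align*}
where each $\varepsilon_{i,j}\in\{\pm 1\}$ tracks transposition parity and each pair $(P_{i,j}, Q_{i,j})$ is a product of operators drawn from $\{A_k\}_{k\neq i}\cup\{B_\ell\}_{\ell\neq j}$. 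Under the hypothesis that $N$ or $M$ is even, $(-1)^{NM} = 1$ and the left-hand side is precisely $[A_1\cdots A_N,B_1\cdots B_M]$.

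Next I would take norms term by term. Because $\ttls$ and $\ttzs$ are $\ast$-automorphisms of $\cB(\FS)$ and hence isometric, the normalization of the $f_k$ and $g_\ell$ gives $\nn{A_k}\leq 2$ and $\nn{B_\ell}=1$. The triangle inequality then yields a bound of the form $\nn{[A_1\cdots A_N,B_1\cdots B_M]}\leq C_{N,M}\sum_{i,j}\nn{\{A_i,B_j\}}$ for some combinatorial constant $C_{N,M}\leq 2^{N-1}$, which will be absorbed into the prefactor $\Ximb(t)$.

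It remains to bound each $\nn{\{A_i,B_j\}}$ by (a summand of) $F_t^\Lambda(f_i,g_j)$ and then to apply \cref{thm:manybody}. Two of the four choices of $(a^\#,a^\#)$, namely $(a,\ad)$ and $(\ad,\ad)$, are precisely the two summands defining $F_t^\Lambda$; for $(\ad,\ad)$ one uses $\{\ad(\e^{\i tT}f),\ad(g)\}=0$ from the CAR to see that the $\ttzs$ contribution vanishes. The other two cases reduce to these by taking adjoints: combining $a^\#(f)^\ast=a^{\#^\ast}(f)$, the $\ast$-property $\tau_t(X)^\ast=\tau_t(X^\ast)$, and $\{X,Y\}^\ast=\{X^\ast,Y^\ast\}$ gives $\nn{\{(\ttls-\ttzs)(a(f)),a(g)\}}=\nn{\{(\ttls-\ttzs)(\ad(f)),\ad(g)\}}$ and similarly for the fourth combination. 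Feeding the resulting $F_t^\Lambda(f_i,g_j)$ into \cref{thm:manybody} and summing over $(i,j)$ yields the claimed estimate.

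The main technical care lies in the sign bookkeeping of the graded Leibniz identity in the first step: writing it out cleanly is elementary but somewhat tedious, of a flavor similar to Wick-type reorderings. Once that identity is in hand, the remaining work is a triangle inequality and a routine adjoint manipulation, with the substance of the estimate supplied entirely by \cref{thm:manybody}.
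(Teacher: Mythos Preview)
Your approach is essentially the same as the paper's: both expand the commutator as a sum of $NM$ terms, each containing exactly one anticommutator $\{A_i,B_j\}$ sandwiched between the remaining factors, and then invoke \cref{thm:manybody}. The paper obtains this expansion via the Leibniz rule $[A_1A_2,B]=A_1[A_2,B]+[A_1,B]A_2$ followed by the identities $[A,B_1B_2]=\{A,B_1\}B_2-B_1\{A,B_2\}$ and $\{A,B_1B_2\}=[A,B_1]B_2+B_1\{A,B_2\}$, which is equivalent to your graded reordering via $XY=-YX+\{X,Y\}$; your treatment of the four $a^\#$ combinations by adjoints and the CAR is also correct. One small correction: the combinatorial factor $C_{N,M}\le 2^{N-1}$ arising from $\|A_k\|\le 2$ cannot be ``absorbed into $\Ximb(t)$'' as you write, since $\Ximb$ is defined in \eqref{eq:ximb} independently of $N,M$ --- but the paper's own proof glosses over exactly the same factor, so this is a shared imprecision in the stated inequality rather than a defect of your argument.
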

\begin{proof}
	W.l.o.g. assume that $M$ is even.
	We prove that the commutators $[a_1\cdots a_N,b_1\cdots b_M]$ can be expanded into a sum of expressions of the form
	\begin{equation}\label{eq:expansionprototype}
		\pm a_1\cdots a_{i-1}b_1\cdots b_{j-1}\{a_i,b_j\}b_{j+1}\cdots b_Ma_{i+1}\cdots a_N .
	\end{equation}
	Estimating each summand with \cref{thm:manybody} then proves the statement.
	
	In the first step, we use the commutator product rule $[A_1 A_2,B] = A_1[A_2,B] + [A_1,B]A_2$ to expand into expressions of the form $a_1\cdots a_{i-1}[a_i,b_1\cdots b_M]a_{i+1}\cdots a_N$. Then, we iteratively apply the identities
	\begin{align}
	\label{eq:comm anticomm}
		&[A,B_1 B_2]  = \{A,B_1\}B_2 - B_1 \{A,B_2\}, \\
	\label{eq:anticomm comm}
		&\{A,B_1 B_2\} = [A,B_1 ] B_2 + B_1 \{A,B_2\},
	\end{align}
	where we use \cref{eq:comm anticomm} when $B_1$ and $B_2$ can be chosen with an odd number of factors and \cref{eq:anticomm comm} if $B_1$ can be chosen with an even and $B_2$ with an odd number of factors.
\end{proof}
\begin{cor}[Higher anticommutator estimate] 
	\label{cor:anti commutators}
	Let $M,N\in\IN$ such that $N$ and $M$ are odd, and assume that the functions $f_1,\ldots,f_N$, $g_1,\ldots,g_M\in L^2(\IR^d)$ are normalized.
	Then for all $n \in \IN$ with $n\le\nmax$
		\begin{align*}
		&\nn{ \Big\{  (\ttls - \ttzs)(a^\#(f_1) ) \cdots (\ttls - \ttzs)(a^\#(f_N) ), a^\#(g_1) \cdots a^\#(g_M) \Big\}  } \\ &\leq  	 \sum_{i=1}^N \sum_{j=1}^M \left( \Ximb(t) \int_{\IR^d\times\IR^d} \left(1 \wedge \frac{\braket t}{\abs{x-y}} \right)^{n} \abs{f_i(x)}^2\abs{g_j(y)}^2 \d(x,y) \right)^{1/2}. 
	\end{align*}
\end{cor}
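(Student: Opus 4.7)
The plan is to mirror the proof of \cref{cor:commutators}: I would expand the outer anticommutator into a sum of elementary terms, each containing a single anticommutator $\{a_i,b_j\}$ sandwiched between the remaining operators, and then bound each summand via \cref{thm:manybody}. Setting $a_i=(\ttls-\ttzs)(a^\#(f_i))$ and $b_j=a^\#(g_j)$ for brevity, the only new ingredient relative to \cref{cor:commutators} is the choice of the initial Leibniz identity: because both $N$ and $M$ are odd, the natural bracket between $a_1\cdots a_N$ and $b_1\cdots b_M$ is an anticommutator rather than a commutator, so the standard commutator product rule used in \cref{cor:commutators} cannot be applied to the first slot.

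For the first step, I would apply the identity
\[
\{A_1A_2,B\}=\{A_1,B\}A_2+A_1[A_2,B],
\]
which is immediate from expansion, with $A_1=a_1$ and $A_2=a_2\cdots a_N$, obtaining
\[
\{a_1\cdots a_N,b_1\cdots b_M\}=\{a_1,b_1\cdots b_M\}\,a_2\cdots a_N+a_1\,[a_2\cdots a_N,b_1\cdots b_M].
\]
The commutator on the right has an even-length first factor ($N-1$ factors) and an odd-length second factor ($M$ factors), hence the hypothesis of \cref{cor:commutators} is met and that corollary bounds this piece directly by the sum over $i\in\{2,\ldots,N\}$, $j\in\{1,\ldots,M\}$ of the desired square roots. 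The remaining anticommutator $\{a_1,b_1\cdots b_M\}$ is expanded in the second slot by iterating \cref{eq:comm anticomm,eq:anticomm comm} exactly as in the proof of \cref{cor:commutators}: one peels off $b_1,b_2,\ldots$ one at a time, alternating the two identities, and since $M$ is odd the recursion terminates in single anticommutators, yielding $\sum_{j=1}^M\pm b_1\cdots b_{j-1}\{a_1,b_j\}b_{j+1}\cdots b_M$.

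At this point $\{a_1\cdots a_N,b_1\cdots b_M\}$ has been written as a sum over $(i,j)\in\{1,\ldots,N\}\times\{1,\ldots,M\}$ of terms of the form $\pm a_1\cdots a_{i-1}b_1\cdots b_{j-1}\{a_i,b_j\}b_{j+1}\cdots b_M a_{i+1}\cdots a_N$, and one concludes by the triangle inequality. The outer factors all have uniformly bounded operator norm, since $\nn{a^\#(g_j)}=\nn{g_j}=1$ and $\nn{(\ttls-\ttzs)(a^\#(f_i))}\le 2\nn{f_i}=2$, so each summand is controlled by $\nn{\{a_i,b_j\}}$ up to a constant absorbed into $\Ximb$. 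For any combination $a^\#\in\{a,\ad\}$, the norm $\nn{\{a_i,b_j\}}$ reduces, by taking adjoints if necessary and using that $\{\ttzs(a^\#(f)),a^\#(g)\}$ is by the CAR either zero or a scalar (in either case dropping out of the difference $(\ttls-\ttzs)(a^\#(f))$), to one of the two terms appearing in $F_t^\Lambda(f_i,g_j)$, which \cref{thm:manybody} then bounds by the desired integral.

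The only genuine obstacle is choosing the right identity in the first step: one needs a Leibniz rule whose leftover commutator has a first factor of even length, so that \cref{cor:commutators} applies verbatim for all indices $i\ge 2$, while the $i=1$ contribution requires only the familiar second-slot expansion already developed in the proof of that corollary. After this initial observation everything is routine bookkeeping of signs and norms.
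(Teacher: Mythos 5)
Your proof is correct and follows essentially the same route as the paper: both reduce the anticommutator to a sum of prototype terms $\pm\, a_1\cdots a_{i-1}b_1\cdots b_{j-1}\{a_i,b_j\}b_{j+1}\cdots b_M a_{i+1}\cdots a_N$ via alternating Leibniz identities in the two slots and then invoke \cref{thm:manybody}, your variant of peeling off $a_1$ first and citing \cref{cor:commutators} for the remaining commutator (whose first factor now has even length) being just a tidier bookkeeping of the same expansion. The one caveat --- that the outer factors $(\ttls-\ttzs)(a^\#(f_i))$ have norm at most $2$ rather than $1$, so a combinatorial constant appears that is not visible in the stated bound --- is equally present in the paper's own proof, and you at least flag it explicitly.
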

\begin{proof}	
	The proof is similar to that of \cref{cor:commutators}.
	Explicitly, we first expand the left hand side of the anticommutator $\{a_1\cdots a_N,b_1\cdots b_N\}$ by repeatedly applying \cref{eq:anticomm comm} and \cref{eq:comm anticomm}, again taking care that one side of a commutator always has an even number of factors. Subsequently, we do the same with the right hand side to arrive at a sum of expressions of the form \cref{eq:expansionprototype}.
%
%
\end{proof}
\begin{rem}
	\label{rem:wrong time evolution}
	It would be more natural to consider $(\ttls - \ttzs)(a^\#(f_1) \cdots a^\#(f_N))$ instead of the expression used in \cref{cor:commutators,cor:anti commutators}, but a simple algebraic argument does not seem to be available here. However, such expressions can be estimated with a similar product rule expansion and the one-body bound \cref{prop:LRint}, if we assume that the $f_1, \ldots, f_N$ or $g_1, \ldots, g_M$ are Schwartz functions.
\end{rem}
Finally, we describe several applications of our main theorems. Let
\[
\Af := C^* ( a(f) : f \in L^2(\IR^d) )
\]
denote the CAR algebra of the full system. First, we show the existence of a dynamics on $\Af$ in the infinite-volume limit. This is a direct generalization of \cite[Theorem 2.7]{GebertNachtergaeleReschkeSims.2020}, where our methods allow for a larger class of admissible interactions $W$. 
\begin{cor}[Existence of infinite-volume dynamics]
\label{th:infinite volume dynamics}
 Assume that $\nmax>2d$. Then for all $t \in \IR$, $f \in L^2(\IR^d)$ and any increasing sequence $(\Lambda_k)$ of bounded subsets of $\IR^d$ such that $\bigcup_{k} \Lambda_k = \IR^d$, the limit
 \[
 \lim_{k\to\infty} \tau_t^{\Lambda_k}(a(f)) =: \tau_t(a(f)).
 \]
 with respect to the norm topology 
 exists with uniform convergence in $t$ on compact intervals of $\IR$. Furthermore, $(\tau_t)_{t \in \IR}$ forms a strongly continuous one-parameter group of automorphisms on $\Af$. 
\end{cor}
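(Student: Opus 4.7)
The plan is to follow the Duhamel plus Lieb--Robinson strategy of \cite[Theorem 2.7]{GebertNachtergaeleReschkeSims.2020}, now driven by the stronger inputs \cref{prop:LRint,thm:manybody}. The main task is to show that for each $f\in L^2(\IR^d)$ the sequence $(\tau_t^{\Lambda_k}(a(f)))_k$ is norm-Cauchy uniformly in $t$ on every compact interval. Once this is established, the limit $\tau_t(a(f))$ extends from the generating set $\{a(f):f\in L^2(\IR^d)\}$ to all of $\Af$ by linearity; the isometry $\nn{\tau_t^\Lambda(A)}=\nn A$ together with the preservation of products and adjoints under norm limits turn the extension into a $\ast$-automorphism. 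The group law $\tau_{s+t}=\tau_s\circ\tau_t$ and the strong continuity of $t\mapsto\tau_t(A)$ then follow from the corresponding properties of the $\tau_t^{\Lambda_k}$ by a standard $3\eps$-argument combined with the uniform-in-$t$ convergence.

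For the Cauchy estimate I use the Duhamel identity
\[
	\tau_t^{\Lambda_k}(a(f)) - \tau_t^{\Lambda_j}(a(f)) = \i\int_0^t \tau_s^{\Lambda_j}\bigl([H_{\Lambda_k} - H_{\Lambda_j},\tau_{t-s}^{\Lambda_k}(a(f))]\bigr)\d s.
\]
Since $\tau_s^{\Lambda_j}$ is isometric, it remains to bound the inner commutator. The operator $H_{\Lambda_k}-H_{\Lambda_j}$ is the integral of $W(x-y)\ad(\ph_x)\ad(\ph_y)a(\ph_y)a(\ph_x)$ over $\Sigma:=(\Lambda_k\times\Lambda_k)\setminus(\Lambda_j\times\Lambda_j)$. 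Because this quartic operator is even in the CAR grading while $\tau_{t-s}^{\Lambda_k}(a(f))$ is odd, the graded Leibniz rule expresses the inner commutator as a sum of four terms of the form $a^\#(\ph_{z_1})a^\#(\ph_{z_2})a^\#(\ph_{z_3})\cdot\{a^\#(\ph_{z_4}),\tau_{t-s}^{\Lambda_k}(a(f))\}$ with $z_i\in\{x,y\}$; the three-operator prefactor is bounded by $\nn\ph^3$. Splitting $\tau_{t-s}^{\Lambda_k}(a(f))=\tau_{t-s}^0(a(f))+(\tau_{t-s}^{\Lambda_k}-\tau_{t-s}^0)(a(f))$, each such anticommutator decomposes into a \emph{free} part (either zero for $a(\ph_{z_4})$, or the scalar overlap $\langle\e^{-\i(t-s)T}f,\ph_{z_4}\rangle$ for $\ad(\ph_{z_4})$) and an \emph{interacting} part; these are controlled by the one-body bound \cref{prop:LRint} and the many-body bound \cref{thm:manybody}, respectively, with adjoints used where needed to bring $a(\ph_{z_4})$-anticommutators into the form $\{\tau^\Lambda(\ad(\cdot)),\ad(\cdot)\}$ covered by the latter.

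The decisive step is the integration over $\Sigma$. By density and the isometry $\nn{\tau_t^\Lambda(a(f)-a(f_R))}=\nn{f-f_R}_2$, it suffices to treat $f$ supported in a ball $B_R$. For such $f$ and $\abs z\gg R$, both Lieb--Robinson outputs decay like $\lesssim(\braket{t-s}/\abs z)^{n/2}$ in $z$, after bounding $\abs{u-v}\ge\abs z/2$ on the effective support of the $u,v$ integrations. Writing $\Sigma\subseteq\{x\notin\Lambda_j\}\cup\{y\notin\Lambda_j\}$ and splitting accordingly, the LRB decay is active whenever the distinguished $z$ lies outside $\Lambda_j$; in the complementary configuration the polynomial decay $\abs{W(x-y)}\lesssim\abs{x-y}^{-n_W}$ from \cref{hyp:W}, together with $\nn W_{L^1}<\infty$ and the tail bound $\int_{\abs z>r}\abs{W(z)}\d z\lesssim r^{d-n_W}$, forces the far variable away. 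The resulting bound is $\lesssim r_j^{d-n/2}+r_j^{d-n_W}$ times powers of $\braket t$, where $r_j$ denotes the inradius of $\Lambda_j$. The hypothesis $\nmax=\tfrac{n_V}{2}\wedge n_W>2d$ lets me pick $n$ with $n>2d$ and simultaneously guarantees $n_W>2d>d$, so both exponents are strictly less than $-d$ and the integral vanishes as $r_j\to\infty$. Combined with the polynomial-in-$t$ growth of $\Ximb(t)$ and of the constants in \cref{prop:LRint}, integration in $s$ produces the desired uniform-in-$t$ Cauchy estimate.

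The main obstacle is the careful bookkeeping of the $(x,y)$-integral over the non-product region $\Sigma$: both Lieb--Robinson kernels provide decay in only one of the variables at a time, so the other must be integrated out using properties of $W$ alone, and the bound must be verified separately on both halves of the $\Sigma$-decomposition. This is precisely the step where the hypothesis $\nmax>2d$ is consumed. The remaining ingredients---the Duhamel identity, the graded Leibniz expansion, the density reduction to compactly supported $f$, and the extension from the generators to $\Af$ together with the group and strong-continuity properties---are standard.
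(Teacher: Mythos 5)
Your proposal is correct and follows essentially the same route as the paper's proof: a Duhamel/Dyson identity between the two finite-volume dynamics, the graded Leibniz expansion of the commutator with the quartic interaction into four anticommutators, the splitting of each into a free part and an interacting part controlled by \cref{prop:LRint,thm:manybody} respectively, the reduction to compactly supported $f$ by density and isometry, and the standard extension to a strongly continuous automorphism group. The only step to tighten is the final limit: the $\Lambda_k$ are merely an increasing exhaustion by bounded sets, so neither the inradius of $\Lambda_j$ nor $\dist(\supp f, \IR^d\setminus\Lambda_j)$ is guaranteed to tend to infinity (and your claim that $d-n/2<-d$ would require $n>4d$, whereas only negativity is needed); the clean conclusion, which is what the paper uses, is that $\abs{W(x-y)}\,G_{n,t}(\dist(\supp f,x))^{1/2}$ is a fixed element of $L^1(\IR^d\times\IR^d)$ precisely because $n>2d$ and $W\in L^1$, so the integral over $\Lambda_k\times\Lambda_k\setminus\Lambda_l\times\Lambda_l$ vanishes as $l\to\infty$ by dominated convergence, since these regions are contained in $\IR^{2d}\setminus(\Lambda_l\times\Lambda_l)\downarrow\emptyset$.
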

A common application of Lieb--Robinson-bounds is the proof of exponential clustering for gapped systems. In the following result, we show that we can use the proof strategy of \cite{NachtergaeleSims.2006} in combination with \cref{thm:manybody} to obtain a weak version of clustering.
\begin{thm}[Clustering]
	\label{th:exponential clustering}
	Let $\gse$ be the ground state energy of $H_\Lambda$ and assume the ground state is gapped,  i.e.,
	\begin{align*}
		\gamma := \sup \{ \delta > 0 : \operatorname{Spec}(H_\Lambda - \gse ) \cap (0,\delta) = \emptyset  \} > 0. 
	\end{align*}
	Suppose that $A = \sum_{i=1}^N A_i$ and $B = \sum_{j=1}^M B_j$, where each $A_i$ is a product of $n_i \in \IN$ creation/annihilation operators $a^\#(\ph_{x_k})$ and each $B_j$ is a product of $m_j  \in \{1,2\}$ creation/annihilation operators $a^\#(\ph_{f_l})$ with compactly supported $f_k$ such that
	\[
	\dmin := \inf_{k,l} \dist( x_k, \supp f_l ) > 0. 
	\]
	Then for each $n\le\nmax$ there exists a constant $C$ independent of $A$ and $B$ such that, for all $b > 0$,
	\begin{align*}
		\abs{\sc{\psi_0^\Lambda, A \tau_{\i b}(B) \psi_0^\Lambda } } &\leq  \frac{C N}{\sqrt \gamma \wedge 1} \exp  \left(  - \frac{\gamma \wedge 1 }{2}    (n \log (\dmin+1))^{1/4} - \frac{\gamma b^2  }{2 (n \log (\dmin+1))^{1/(3+3d)}}  \right) \\ \qquad &\times \sum_{i,j}  \nn{A_i} \nn{B_j}  \left( n_i m_j \right)  .
	\end{align*}
\end{thm}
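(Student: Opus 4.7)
The approach is to adapt the Nachtergaele--Sims clustering argument \cite{NachtergaeleSims.2006} to our continuum fermionic setting, using the higher-order commutator estimate \cref{cor:commutators} (and, when the parities of $n_i,m_j$ require, the anticommutator estimate \cref{cor:anti commutators}) in place of the usual lattice Lieb--Robinson bound. By the triangle inequality it suffices to control $\abs{\sc{\psi_0^\Lambda, A_i \tau_{\i b}(B_j) \psi_0^\Lambda}}$ for each pair $(i,j)$; fix such a pair. Writing $P$ for the orthogonal projection onto $\psi_0^\Lambda$ and $Q := \Id - P$, the gap assumption gives $\nn{e^{-b(H_\Lambda-\gse)}Q}\le e^{-b\gamma}$ and the decomposition
\[
\tau_{\i b}(B_j)\psi_0^\Lambda = \sc{\psi_0^\Lambda, B_j\psi_0^\Lambda}\psi_0^\Lambda + e^{-b(H_\Lambda-\gse)}QB_j\psi_0^\Lambda,
\]
whose second summand is spectrally supported in $[\gamma,\infty)$ for $H_\Lambda-\gse$.

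The key step is to rewrite $e^{-b(H_\Lambda-\gse)}$ on $\ran Q$ as an integral against the real-time evolution. Following \cite{NachtergaeleSims.2006}, for parameters $\alpha,T>0$ to be optimized, one constructs an $L^1$ kernel $K_{\alpha,b}$ satisfying $\widehat{K_{\alpha,b}}(\lambda)=e^{-b\lambda}$ for $\lambda\ge\gamma$, $\widehat{K_{\alpha,b}}(0)=0$, and with Gaussian tails $\abs{K_{\alpha,b}(t)}\lesssim \alpha^{-1}e^{-t^2/(2\alpha^2)}$ (built by combining a Gaussian factor with a smooth cut-off above the gap). Using $e^{\i t(H_\Lambda-\gse)}B_j\psi_0^\Lambda=\tau_t(B_j)\psi_0^\Lambda$ together with $\widehat{K_{\alpha,b}}(0)=0$ and the analogous identity applied to $\sc{\psi_0^\Lambda,\tau_t(B_j)A_i\psi_0^\Lambda}$ (which absorbs the other half of the commutator into a remainder), one obtains
\[
\sc{\psi_0^\Lambda,A_i\tau_{\i b}(B_j)\psi_0^\Lambda}-\sc{\psi_0^\Lambda,A_i\psi_0^\Lambda}\sc{\psi_0^\Lambda,B_j\psi_0^\Lambda}=\int_\IR K_{\alpha,b}(t)\sc{\psi_0^\Lambda,[A_i,\tau_t(B_j)]\psi_0^\Lambda}\,\d t+R_{\alpha,b}^{(i,j)},
\]
where the remainder $R_{\alpha,b}^{(i,j)}$ collects the spectral-gap tail and the Gaussian mismatch from approximating $e^{-b\cdot}$ above $\gamma$ and is bounded by $\nn{A_i}\nn{B_j}\bigl(e^{-b\gamma}+e^{-c(\alpha\gamma+b/\alpha)^2}\bigr)$.

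I split the $t$-integral at $\abs t=T$. For $\abs t\le T$, \cref{cor:commutators} together with the spatial separation $\dmin$ between $\{x_k\}$ and $\bigcup_l\supp f_l$ yields
\[
\nn{[A_i,\tau_t(B_j)]}\lesssim n_i m_j \nn{A_i}\nn{B_j}\,\Ximb(t)^{1/2}\bigl(\braket t/\dmin\bigr)^{n/2};
\]
for $\abs t>T$, the trivial bound $\nn{[A_i,\tau_t(B_j)]}\le 2\nn{A_i}\nn{B_j}$ combined with the Gaussian tail $\int_{\abs t>T}\abs{K_{\alpha,b}(t)}\d t\lesssim e^{-T^2/(2\alpha^2)}$ handles the complement. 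Since $\Ximb(t)\le\exp(C\braket t^{3+3d+2\delta})$, the choice $T\sim(n\log(\dmin+1))^{1/(3+3d)}$ makes the near contribution polynomially small in $\dmin$; tracking the cross factors (including a $\sqrt{n\log(\dmin+1)}$ term arising from the balancing with $\alpha$ and from capping $\gamma$) converts this polynomial smallness into the factor $\exp\bigl(-\tfrac{\gamma\wedge 1}{2}(n\log(\dmin+1))^{1/4}\bigr)$. Simultaneously, setting $\alpha^2\sim T^3/(\gamma b^2)$ balances the Gaussian tail $e^{-T^2/(2\alpha^2)}$ against the mismatch $e^{-c(\alpha\gamma+b/\alpha)^2}$ and produces the factor $\exp\bigl(-\tfrac{\gamma b^2}{2(n\log(\dmin+1))^{1/(3+3d)}}\bigr)$. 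The main obstacle is precisely this multi-parameter balancing: the super-polynomial growth $\Ximb(t)\sim e^{Ct^{3+3d}}$ restricts $T$ to a small fractional power of $\log\dmin$, and together with the Gaussian width $\alpha$ this is what forces the unusual exponents $1/4$ and $1/(3+3d)$ in the final rate; a secondary issue is that the commutator/anticommutator choice depends on the parities of $n_i$ and $m_j$, which must be tracked separately in the expansion but is absorbed into the factor $n_im_j$ in the final bound.
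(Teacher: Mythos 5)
Your overall strategy is the right one and matches the paper's: both adapt the Nachtergaele--Sims argument, representing the imaginary-time correlator by a real-time integral against a kernel concentrated near $t=0$, using the gap to dispose of the non-commutator pieces, splitting the time integral at a scale $T$ (the paper's $s$), and choosing $T^{3+3d}\sim n\log(\dmin+1)$. However, there is a genuine gap at the central step. You claim that \cref{cor:commutators} together with the spatial separation ``yields'' $\nn{[A_i,\tau_t(B_j)]}\lesssim n_im_j\nn{A_i}\nn{B_j}\,\Ximb(t)^{1/2}(\braket t/\dmin)^{n/2}$. It does not: \cref{cor:commutators} only bounds commutators in which every time-evolved factor carries the \emph{difference} $(\ttls-\ttzs)$, not the full evolution $\ttls$ (cf.\ \cref{rem:wrong time evolution}). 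To control $[A_i,\ttls(B_j)]$ one must first expand $\ttls(b_1b_2)=\ttls(b_1)\ttls(b_2)$ by the product rule and then split each $\ttls(b_j)$ as $(\ttls-\ttzs)(b_j)+(\ttzs(b_j)-b_j)+b_j$; the three pieces are handled, respectively, by \cref{cor:commutators}, by \cref{prop:LRint}~(ii) via the CAR, and --- crucially --- a $t$-\emph{independent} term $[A_i,b_j]$ remains which does not decay in $t$ at all. This decomposition is the bulk of the paper's proof (the terms $I_{2,1}$ and $I_{2,2}$ there) and is the reason for the restriction $m_j\le 2$ in the hypotheses; your proposal never engages with it. In the paper the static piece $I_{2,2}$ is integrated against the singular weight $e^{-\alpha t^2}/(t-\i b)$, which is not uniformly in $L^1$ as $b\downarrow 0$, and therefore requires the spectral-positivity argument of Nachtergaele--Sims (their Lemma~1); in your version that burden is shifted onto the $L^1$ bound for your kernel, which you do not construct.

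On that point: the kernel you posit cannot exist with the stated exact properties. Gaussian decay $\abs{K_{\alpha,b}(t)}\lesssim\alpha^{-1}e^{-t^2/(2\alpha^2)}$ forces $\widehat{K_{\alpha,b}}$ to be entire, and an entire function equal to $e^{-b\lambda}$ on $[\gamma,\infty)$ equals $e^{-b\lambda}$ everywhere, contradicting $\|\widehat{K_{\alpha,b}}\|_\infty\le\|K_{\alpha,b}\|_1<\infty$. You do allow a ``mismatch'' in the remainder, so this is repairable along standard lines, but the construction and its error terms are exactly where the two exponential factors in the final bound must come from, and your derivation of the first factor $\exp\bigl(-\tfrac{\gamma\wedge1}{2}(n\log(\dmin+1))^{1/4}\bigr)$ (``tracking the cross factors \dots converts this polynomial smallness'') is asserted rather than proved. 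In the paper that factor arises explicitly from $I_1$, $I_3$ and the Gaussian tail $\tfrac{1}{s\sqrt\alpha}e^{-\alpha s^2}$ with the single choice $\alpha=\gamma/(2s)$, while the $b$-dependent factor is simply the global prefactor $e^{-\alpha b^2}$ pulled out at the start --- a cleaner bookkeeping than the multi-parameter balancing you describe.
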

\begin{rem}
	The subpolynomial decay in contrast to the exponential decay proven in \cite{NachtergaeleSims.2006} is due to the form of our Lieb--Robinson-bounds.
	This is due to the polynomial decay proven in \cref{prop:LRint,thm:manybody} in combination with the exponentially growing prefactors in $t$ in \eqref{eq:ximb}.
\end{rem}
\begin{rem}
	The restriction to at most two creation and annihilation operators for the $B_j$ is not strictly necessary but allows for a more concise proof. Taking into account more factors would require to consider further commutators and differences, since we can only estimate terms of the form
	\[
	 \nn{ [ (\ttls - \ttzs)(a^\#(A) ), B] }
	\]
	for $A$ being a polynomial in the creation and annihilation operators of at most one, cf. \cref{rem:wrong time evolution}.
\end{rem}
Finally, to study the localization of observables, we introduce a conditional expectation similar to \cite{NachtergaeleSimsYoung.2018}.
For any measurable set $X \subseteq \IR^d$, we define a linear projection 
\begin{align*}
	\IE_X \: \Af \rightarrow \Af_X,
\end{align*}
where $\Af_X = C^*(\{ a(f) : f \in L^2(X)  \}) \subseteq \Af$ denotes the $C^*$-subalgebra of the CAR algebra.
We can in fact allow more general subalgebras given by closed subspaces $\hk \subseteq L^2(X)$, cf. \cref{sec:conditional expectation}, but restrict to the simpler case here for simplicity. 
If we restrict to the $C^*$-subalgebra $\Af^+ \subseteq \Af$ of even parity, we prove that $\IE_X$ indeed acts a conditional expectation, i.e., 
\[
\IE_\hk (BAC) = B \IE_\hk(A) C.
\]
for all $A \in \Af^+$ and $B,C \in \Af_X^+$, see \cref{th:tomiyama coro}. 

Furthermore, if we restrict to selected orthonormal subbases of $L^2(X^\sfc)$, having a limited number of modes on annuli around the set $X$, we can apply our Lieb--Robinson bound \cref{thm:manybody} to obtain
	\begin{align*}
		\nn{\ttls(A) - \IE_X(\ttls(A))} \leq  C(t) \left( \frac{\braket t}{ C_X \dist(X^\sfc,Y)} \right)^n
\end{align*}
for any observable $A\in\Af_Y$, where $C(t)$ denotes a constant growing in $t$.
The full statement is given in \cref{th:conditional expectation LR}. 


\section{From Propagation Bounds to One-Particle Lieb--Robinson Bounds}
\label{sec:onebody}
\noindent
Throughout this section, we assume that $V$ satisfies \cref{hyp:Lbound} without further mentioning and write
$  T \coloneqq -\Delta + V$.
To obtain a Lieb--Robinson bound for the one-particle Schr\"odinger operator $T$, we apply the following propagation speed bound which is similar to the one proven in \cite{ArbunichPusateriSigalSoffer.2021} {but with an explicit dependence on the energy}.
\begin{prop}\label{prop:APSS}
	Assume that $\xi\in\cC^\infty(\IR;[0,1])$ satisfies $\xi(x)=1$ for $x\le0$ and $\xi(x)=0$ for $x\ge1$. Finally, fix $\alpha>1$
	and let
	\begin{equation}\label{def:gE} g_E(x) \coloneqq \xi((x-E)/((\alpha-1) E)) \quad \mbox{and} \quad \sfc_E \coloneqq \norm{\abs{\wh p} \chr_{(-\infty,\alpha^2 E] }(T)},\ \wh p\coloneqq -\i\nabla \qquad \mbox{for}\ E\in\IR. \end{equation}
	Then, for all $n\in\IN$ and $\eps>0$, there exists $C_{n,\eps}>0$, otherwise only depending on $V$, $\xi$ and $\alpha$,
	such that for all $E>1$
	\begin{equation}\label{prop:APSS.est} \|\chiR R\e^{-\i tT}g_E(T)\chir r\| \le C_{n,\eps} \left(\frac{\sfc_{E}+\eps}{R-r}\right)^n \qquad \mbox{for all}\ R>r>0\ \mbox{and}\ t\in\left[0,\frac{R-r}{\sfc_E+\eps}\right]. \end{equation}
\end{prop}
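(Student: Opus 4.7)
The approach is an adaptation of the ASTLO (adiabatic spacetime localization observable) method \cite{SigalSoffer.1988,HunzikerSigalSoffer.1999,ArbunichPusateriSigalSoffer.2021}, with careful bookkeeping of the explicit dependence on $\sfc_E$. Set $v:=\sfc_E+\eps$. The plan is to build a family of position cutoffs that advance outward at speed $v$, to show that their Heisenberg derivatives, sandwiched between the energy cutoffs $g_E(T)$, are non-positive up to a controllable remainder, and to iterate this structure $n$ times to obtain the $n$-th power decay in $R-r$.

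Concretely, fix a smooth monotone template $\chi_\sharp\in\cC^\infty(\IR;[0,1])$ with $\chi_\sharp=0$ on $(-\infty,0]$ and $\chi_\sharp=1$ on $[1,\infty)$, and for $k\in\{1,\ldots,n\}$ and $s\in[0,t]$ define the radial multiplication operators
\[
F_{k,s}(x):=\chi_\sharp\!\left(\tfrac{|x|-r-vs}{R-r-vs}\right)^{\!2k},\qquad \Phi_k(s):=g_E(T)\,\e^{\i sT}F_{k,s}\,\e^{-\i sT}g_E(T).
\]
The time restriction $t\le(R-r)/v$ keeps the denominator $R-r-vs$ positive, and $F_{k,0}\chir r=0$ together with $\chiR R\le F_{k,s}$ reduces the claim to an operator bound on $\Phi_n(t)$ between vectors supported in $\{|x|<r\}$. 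Since $V$ commutes with the multiplication operator $F_{k,s}$, one has $[T,F_{k,s}]=-\tfrac12[\Delta,F_{k,s}]$ and hence the Heisenberg derivative
\[
D F_{k,s}:=\partial_s F_{k,s}+\i[T,F_{k,s}]=\partial_s F_{k,s}-\nabla F_{k,s}\cdot\wh p-\tfrac{\i}{2}\Delta F_{k,s}.
\]
Radial symmetry expresses both $\partial_s F_{k,s}$ and $\nabla F_{k,s}\cdot\wh x/|x|$ in terms of a common factor $\chi_\sharp'(\cdot)\,\chi_\sharp^{2k-1}(\cdot)$; combined with the key operator inequality $\|\wh p\,g_E(T)\|\le\sfc_E$ (valid on $\mathrm{ran}\,g_E(T)\subseteq\mathrm{ran}\,\chr_{(-\infty,\alpha^2 E]}(T)$), this produces on the range of $g_E(T)$ a strictly negative leading term proportional to $-\eps\,\chi_\sharp'(\cdot)\,\chi_\sharp^{2k-1}(\cdot)$, plus a remainder from $\Delta F_{k,s}$ of order $(R-r-vs)^{-2}$.

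Integrating $\Phi_k'(s)$ from $0$ to $t$, the negativity of the leading part combined with the fact that each spatial derivative of $F_{k,s}$ produces a factor $(R-r-vs)^{-1}$ yields a recursive estimate of the schematic form
\[
\Phi_k(t)-\Phi_k(0)\lesssim \frac{C_\eps}{R-r}\int_0^t \Phi_{k-1}(s)\,\d s+\text{(remainder)},
\]
where the remainder can be reabsorbed at the next iteration by raising the power of $\chi_\sharp$. Iterating $n$ times, using $\Phi_n(0)\chir r=0$ and $\Phi_0\le\|g_E\|_\infty^2$, produces the claimed bound $C_{n,\eps}\bigl((\sfc_E+\eps)/(R-r)\bigr)^n$, with $C_{n,\eps}$ built from at most $n$ derivatives of $\chi_\sharp$ and inverse powers of $\eps$, hence independent of $E>1$. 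The main obstacle is this bookkeeping of remainders: one must ensure that the $\Delta F_{k,s}$ piece and the symmetrization of $\nabla F_{k,s}\cdot\wh p$ (required to keep operators self-adjoint) fit into the same recursion without spoiling the explicit $\sfc_E$-dependence in the final bound; the saving grace is that every such lower-order term brings in another factor $(R-r-vs)^{-1}$, which together with the time integration and the window $t\le(R-r)/v$ is exactly what the recursion can absorb.
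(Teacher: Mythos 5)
Your overall strategy---propagation observables sandwiched between the energy cutoffs $g_E(T)$, a sign-definite leading Heisenberg derivative exploiting $\norm{\wh p\, g_E(T)}\le \sfc_E$, and an $n$-fold iteration to reach the power $(R-r)^{-n}$---is exactly the route the paper takes, following \cite{ArbunichPusateriSigalSoffer.2021}. However, three concrete points in your write-up do not work as stated. First, the shrinking-window cutoff $F_{k,s}=\chi_\sharp\bigl((|x|-r-vs)/(R-r-vs)\bigr)^{2k}$ does not produce the claimed negative leading term: with $u=(|x|-r-vs)/(R-r-vs)$ one computes $\partial_s u= v(|x|-R)/(R-r-vs)^2$, which vanishes as $|x|\to R$, whereas the transport contribution $\nabla F_{k,s}\cdot\wh p$ is of size $(R-r-vs)^{-1}\chi_\sharp'(u)\chi_\sharp^{2k-1}(u)\,\sfc_E$ uniformly on the transition region. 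So $DF_{k,s}$ is not non-positive up to remainders on $\operatorname{ran}g_E(T)$, and even the crude bound $DF_{k,s}\lesssim (R-r-vs)^{-1}F_{k-1,s}$ integrates to $\log\tfrac{R-r}{R-r-vt}$, which diverges at the endpoint $t=(R-r)/v$. The fix (used in the paper) is to keep the width of the transition region \emph{fixed} at a scale $s\approx (R-r)/(\sfc_E+\eps)$ and only translate the cutoff at speed $v$, so that $\partial_s$ contributes exactly $-\tfrac vs\chi'$ and beats the transport term by the margin $v-\sfc_E=\eps$.

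Second, the recursion as you write it, $\Phi_k(t)-\Phi_k(0)\lesssim \tfrac{C_\eps}{R-r}\int_0^t\Phi_{k-1}(s)\,\d s$, cannot yield the claim: starting from $\Phi_0\le 1$ it gives $\Phi_n(t)\lesssim (C_\eps t/(R-r))^n/n!\le (C_\eps/v)^n/n!$ at the endpoint $t=(R-r)/v$, which does not decay in $R-r$ at all. The correct mechanism is to keep the negative leading term on the \emph{left-hand side} of the time-integrated inequality so that it controls $\int_0^t\langle\psi_s,\chi_k'(\cdots)\psi_s\rangle\,\d s$ by the analogous quantity at the next level with a gain of $\beta/(s(v-\sfc_E))\sim v/(\eps(R-r))$ per step; see \eqref{eq:ftmainbound}--\eqref{eq:derintsn}. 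Third, you nowhere address the non-commutativity of $g_E(T)$ with the position cutoffs. Both the initial term $\|F_{n,0}^{1/2}g_E(T)\chir r\phi\|$ (which is \emph{not} zero, since $g_E(T)$ delocalizes the support) and the step replacing $\wh p\,u(\cdots)g_E(T)$ by $\wh p\,g_{\alpha E}(T)u(\cdots)$---needed before one may invoke $\norm{\wh p\, g_{\alpha E}(T)}\le\sfc_E$---require a commutator expansion of $[g_E(T),\chi(\bx as)]$ with remainder bounds \emph{uniform in $E$} (\cref{lem:commexp}, \cref{corgE}). Since $g_E$ has support growing linearly with $E$, this uniformity is exactly the technical content of the proposition and cannot be taken for granted.
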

\begin{rem}\ 
	\begin{enumrem}
		\item
		Self-adjointness (and lower-semiboundedness) of $T$ is immediate by the Kato--Rellich theorem. Hence, the unitary group $\e^{-\i t T}$ and $g_E(T)$ are well-defined, by spectral calculus.
		Further, it follows that
		\begin{equation}\label{eq:cE}
			\sfc_E \le C_V (1\vee E)^{1/2}
		\end{equation}
		for some solely $V$-dependent constant $C_V>0$.
		\item 
		The function $g_E$ smoothly decays from one to zero on an interval of the length $(\alpha-1) E$, see \cref{fig:gE}.
		The linear dependence on $E$ is essential to the uniformity of the constant $C_{n,\eps}$ in $E$, cf. \cref{corgE}.
		\item A similar estimate to \cref{prop:APSS.est} can be found in \cite[Eq.~(2.5)]{ArbunichPusateriSigalSoffer.2021} and is proven in \textsection\,4 of that article for $g\in\cC^\infty_0(\IR)$. However, the dependence of the upper bounds on $\sup \supp g$ is not apparent in that article, whence we follow the proof here and derive a bound uniform in $E$. We emphasize that our explicit constants can also be used to derive the dependence on $g$ for more general classes than our choice \cref{def:gE}.
	\end{enumrem}
\end{rem}
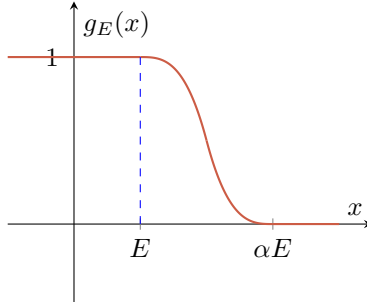
\begin{figure}[hbt]
	\caption{The function $g_E$ preserves energies up to $E$ and vanishes for values greater than $\alpha E$, $\alpha > 1$,  with a smooth interpolation in between.}\label{fig:gE}
	\centering
	\begin{tikzpicture}[declare function={xmi=-2;xma=8;a=2;b=6;c=1.5;}]
		\begin{axis}[xmin=xmi, xmax=xma+1, ymin=-0.7, ymax=2,
			xtick = {0}, ytick = {0},
			xlabel={$x$}, ylabel={$g_E(x)$},
			extra x ticks={a,b},
			extra x tick labels={$E$,$\alpha  E$},
			extra y ticks={c},
			extra y tick labels={$1$},
			scale=0.7,
			axis x line=center, axis y line= center,
			samples=40]
			\addplot[darkcoral, samples=50, domain=xmi:a, thick]
			plot (\x, c);
			\addplot[darkcoral, samples=50, domain=b:xma, thick]
			plot (\x, 0);   
			\addplot[darkcoral,samples=300,domain=a:{(a+b)/2},thick] {-4*c*((x-a)/(b-a))^3+c};
			\addplot[darkcoral,samples=300,domain={(a+b)/2}:b,thick] {4*c*((b-x)/(b-a))^3};
			\addplot +[mark=none,dashed] coordinates {(a, 0) (a, c)};
		\end{axis}
	\end{tikzpicture}
\end{figure} 
\begin{proof}
	We for now fix $\eta\in(0,R-r)$ and write
	\begin{equation}\label{def:bx}
		\bx as \coloneqq s^{-1}(\braket{x}_{\!\eta}-\eta-a) \quad \mbox{with}\quad \braket{x}_{\!\eta}\coloneqq\sqrt{\eta^2+\abs{x}^2} \quad  \mbox{for}\ x\in\IR^d,\ a>0.
	\end{equation}
	In the following, we also write  $x$ and $\bx as$ for the corresponding multiplication operators.
	The main ingredient to the proof is an estimate for the commutator $\big[g_E(T),\chi(\bx as)\big]$ for $a>0,s\geq 1$, where $\chi\in\cC_\sfb^{\infty}(\IR)$ with $\chi'\in\cC_0^\infty(\IR)$.
	Explicitly, employing methods from \cite{IvriiSigal.1993,HunzikerSigal.2000,ArbunichPusateriSigalSoffer.2021}, we use an expansion of the form
	\begin{equation}\label{eq:commexpgE}
		\big[g_E(T),\chi(\bx as)\big]
		=
		\sum_{k=1}^{n-1}\frac{s^{-k}}{k!}B_{E,k}\chi^{(k)}(\bx as) + s^{-n}R_{E,\chi,n}(a,s),
	\end{equation}
	with bounded operators $B_{E,k}$ and $R_{E,\chi,n}(a,s)$, $E,a>0,s \geq 1$, $k, n\in \IN$. They satisfy
	\begin{equation}\label{eq:Euniformbounds}
		\beta_{n} \coloneqq \max_{k=1,\ldots,n}\max_{j=0,1}\sup_{E\ge 1} \norm{\wh p^j \! B_{E,k}}< \infty
		\quad \mbox{and}\quad
		\rho_{\chi,n} \coloneqq \max_{j=0,1}\sup_{E,s\ge 1,a>0} \norm{\wh{p}^j \! R_{E,\chi,n}(a,s)} < \infty,
	\end{equation}
	see \cref{lem:commexp,corgE} for details (where we use the shorthand notation $R_{E,\chi,n}$ for $R_{g_E,\chi,n}$).
	
	For now, assume $0\le u\in \cC_0^\infty(\IR)$ and $v>\sfc_E$. Further, we define
	\begin{equation}\label{def:Phi}\Phi_{t,s} \coloneqq \int_{-\infty}^{\bx{r+vt}{s}}u(\lambda)^2\d\lambda  \qquad \mbox{as well as} \qquad \sfD\Phi_{t,s}\coloneqq \i[T,\Phi_{t,s}]+\partial_t\Phi_{t,s}\end{equation} and follow the lines of \cite[pp.\,7--8]{ArbunichPusateriSigalSoffer.2021}.
	Explicit calculation of the commutator $[T,\Phi_{t,s}]$ gives
	\begin{equation}\label{eq:DPhi} \sfD \Phi_{t,s} = \frac 1su (\bx{r+vt}{s})(\gamma-v)u(\bx{r+vt}{s}), \qquad \mbox{with}\ \gamma \coloneqq \frac{1}{2} \left( \wh p \cdot (\nabla\!\braket{x}_{\!\eta}) + (\nabla\!\braket{x}_{\!\eta}) \cdot \wh p \right).  \end{equation}
	We want to derive an upper bound for $g_E(T)\sfD \Phi_{t,s} g_E(T)$. To this end, we first estimate with Cauchy-Schwarz
	\begin{equation}\label{eq:APSS41}
		\abs{\braket{\psi,g_E(T) u(\bx{r+vt}{s}) \gamma u(\bx{r+vt}{s}) g_E(T) \psi}}
		\le \norm{(\nabla\!\braket{x}_{\!\eta}) u(\bx{r+vt}{s}) g_E(T) \psi} \norm{\wh p u(\bx{r+vt}{s}) g_E(T) \psi}.
	\end{equation}
	Using $g_{\alpha E}  g_E = g_E$ and applying \cref{eq:commexpgE} with $E$ replaced by $\alpha E$ and $\chi=u$, we have
	\begin{equation}\label{eq:commgt}
		\begin{aligned} 
			& \wh p u(\bx{r+vt}{s}) g_E(T)  =  \wh p\, g_{\alpha E}(T) u(\bx{r+vt}{s})g_E(T) + \wh p \big[ u(\bx{r+vt}{s}) , g_{\alpha E}(T) \big] g_E(T) \\
			&= \wh p\, g_{\alpha E}(T) u(\bx{r+vt}{s}) g_E(T)
			+ \sum_{k=1}^{n-1} \frac{s^{-k}}{k!} \wh p  B_{\alpha E,k} u^{(k)}(\bx{r+vt}{s}) g_E(T)
			+ s^{-n} \wh p R_{\alpha E,u,n}(r+vt,s) g_E(T) \nn \psi^2
			. \end{aligned}
	\end{equation}
	Using \cref{eq:Euniformbounds} as well as $\norm{\wh p\, g_{\alpha E}(T)}\le \sfc_E$ in above inequality
	to estimate the second term in \cref{eq:APSS41} and $\abs{\nabla \!\braket{x}_{\!\eta}}\le 1$ to estimate the first term, we arrive at
	\begin{equation}\label{eq:APSS42}
		\begin{aligned}
			&\abs{\braket{\psi,g_E(T) u(\bx{r+vt}{s}) \gamma u(\bx{r+vt}{s}) g_E(T) \psi}}
			\\& \qquad \qquad 
			\le  \sfc_E\norm{u(\bx{r+vt}{s}) g_E(T) \psi}^2
			+ \frac{ \beta_{n} }{ s}  \norm{\wt u(\bx{r+vt}{s}) g_E(T) \psi }^2
			+ \frac{\rho_{u,n}}{s^n} \nn \psi^2  ,
		\end{aligned}
	\end{equation}	
	where we defined
	\begin{equation}\label{def:ut}  \wt u \coloneqq \left( u^2 + \sum_{k=1}^{n-1} \frac{(u^{(k)})^2}{(k!)^2s^{2k-1}}  \right)^{1/2}  \in \cC_0^\infty(\IR) \quad \mbox{and for later reference note}\quad \supp \wt u = \supp u. \end{equation}
	Reinserting \cref{eq:APSS42} into \cref{eq:DPhi} yields
	\begin{equation}
		\label{eq:geT D geT}
		g_E(T)\sfD \Phi_{t,s} g_E(T)
		\le \frac{ \sfc_E-v}{s}g_E(T) u^2(\bx{r+vt}{s}) g_E(T)
		+ \frac{ \beta_{n} }{ s^2} g_E(T)\wt u ^2 (\bx{r+vt}s) g_E(T)
		+ \frac{\rho_{u,n}}{s^{n+1}}  \nn \psi^2.
	\end{equation}
	Given any $\phi\in L^2(\IR^d)$, we can take the matrix element in above inequality w.r.t. $\e^{-\i T t}\chir r\phi$ and set $\psi_t \coloneqq \e^{-\i T t}g_E(T)\chir r\phi$ for $t\ge 0$.
	Observe that $\sfD\Phi_{r,s}$ in \eqref{def:Phi} is defined in such a way that
	\[\Braket{\psi_t,\Phi_{t,s} \psi_t} = \Braket{\psi_0,\Phi_{0,s}\psi_0} + \int_0^t \Braket{\psi_r,\sfD\Phi_{r,s}\psi_r} \d r.\]
	Combining this with \eqref{eq:geT D geT} yields
	\begin{equation}\label{eq:ftmainbound}
		\begin{aligned}
			\Braket{\psi_t,\Phi_{t,s}\psi_t}
			&+ \frac{v-\sfc_E}{s}\int_0^t \Norm{u^2(\bx{r+vt}s)\psi_r}^2\d r 
			\\& \qquad 
			\le \braket{\psi_0,\Phi_{0,s}\psi_0}
			+ \frac{ \beta_{n} }{ s^2}   \int_0^t \Norm{\wt u^2(\bx{r+vt}{s})\psi_r}^2\d r
			+ \frac{ \rho_{u,n}}{s^{n}} \frac ts \Norm{\phi}^2.
		\end{aligned}
	\end{equation}
	This is the main estimate in the proof of the statement, cf. \cite[Eq.~(4.13)]{ArbunichPusateriSigalSoffer.2021}.
	
	Now, following \cite[p.\,9]{ArbunichPusateriSigalSoffer.2021}, we want to approximate $\chiR {R}$ by a smooth function.
	To this end, for $\tau>0$,
	let 
	\begin{equation}
		\cX_{\tau} \coloneqq \left\{ \chi\in\cC^\infty(\IR;[0,1]) \colon \supp \chi\subset (0,\infty),\ \chi=1\ \mbox{on}\ [\tau,\infty),\ \chi'\ge0,\ \sqrt{\chi'}\in\cC^\infty(\IR)  \right\}
	\end{equation}
	and assume $\chi\in \cX_{\tau}$. If $\bx {r}s \in \supp \chi$, it follows that $\braket{x}_{\!\eta}>r + \eta$, so 
	$\supp \chi(\bx rs)\cap \supp \chir r = \emptyset$.
	Applying the commutator expansion \cref{eq:commexpgE}, we hence find
	\begin{equation}\label{eq:f0bound}
		\chi(\bx {r}s)g_E(T)\chir r = \big[ \chi(\bx {r}s),g_E(T) \big] \chir r = - s^{-m}R_{E,\chi,m}(r,s) \chir r \quad \mbox{for all}\ m\in\IN.
	\end{equation}
	Inserting \cref{eq:f0bound,eq:Euniformbounds} into \cref{eq:ftmainbound} with $\Phi_{t,s} = \chi(\bx{r+vt}{s})$, i.e., we set $u:=\sqrt{\chi'}$ in \cref{def:Phi}, and neglecting the first term on the left hand side, we obtain
	\begin{align*}
		\int_0^t \Braket{\psi_r,\chi'(\bx{r+vt}s)\psi_r}\d r
		\le\ &
		\frac 1s \frac{\beta_{m}}{v-\sfc_{E}}
		\int_0^t \Norm{\wt u^2(\bx{r+vt}{s})\psi_r}^2\d r
		+ \frac{1}{s^{m-1}}\left( \frac{\rho_{\chi,m} + \rho_{u,m}ts^{-1}}{v-\sfc_{E}} \Norm{\phi}^2
		\right).
	\end{align*}
	Observing that, by its definition \cref{def:ut}, $\wt u = \sqrt{\wt \chi'}$ for some $\wt \chi \in \cX_\tau$, we can iterate this bound to obtain a sequence $(\chi_k)_{k\in\IN_0}\subset \cX_\tau$ (recursively defined by $\chi_{k+1}\coloneqq \wt{\chi_k}$, $\chi_0=\chi$ and setting $u_k\coloneqq \sqrt{\chi_k'}$) such that for $k=0,\ldots,m-1$,
	\begin{align*}
		\int_0^t & \Braket{\psi_r,\chi'(\bx{r+vt}{s})\psi_r} \d r
		\\ & 
		\le
		\frac{1}{s^{k+1}}\frac{\prod_{j=0}^{k} \beta_{m-j}}{(v-\sfc_E)^{k+1}}\int_0^t\Braket{\psi_r,\chi_{k+1}'(\bx{r+vt}{s})\psi_r} \d r
		+ \frac{\norm\phi^2}{s^{m-1}}\sum_{\ell=0}^k\frac{ \left( \prod_{j=0}^{\ell-1} \beta_{m-j} \right) (\rho_{\chi_\ell,m-\ell} + \rho_{u_\ell,m-\ell}ts^{-1})}{(v-\sfc_E)^\ell}.
	\end{align*}
	Setting $k = m-1$, we obtain for $0 < t \leq s$,
	\begin{align}
		\label{eq:derintsn}
		\int_0^t & \Braket{\psi_r,\chi'(\bx{r+vt}{s})\psi_r} \d r  \le 
		\frac{\norm\phi^2}{s^{m-1}}\underbracket{\left(\frac{\beta_m\cdots \beta_{1}}{(v-\sfc_E)^m}\norm{\sqrt{\chi_m'}}_\infty^2 + \sum_{\ell=0}^{m-1}\frac{\left( \prod_{j=0}^{\ell-1} \beta_{m-j} \right) (\rho_{\chi_\ell,m-\ell} + \rho_{u_\ell,m-\ell})}{(v-\sfc_E)^\ell} \right)}_{\eqqcolon \Omega_m(\chi,v)}.
	\end{align}
	Again inserting \cref{eq:f0bound,eq:derintsn} (with $m=n-1$) into \cref{eq:ftmainbound} and neglecting the second term on the left hand side (which is possible by $v>\sfc_E$), we arrive at 
	\begin{equation}\label{eq:finalAPSS}
		\Braket{\psi_t,\chi(\bx{r+vt}{s})\psi_t}
		\le
		\frac{\|\phi\|^2}{s^n}\left( \rho_{\chi,n} + \rho_{\sqrt{\chi'},n} +  \beta_{n}\Omega_{m-1}(\wt\chi,v) \right).
	\end{equation}
	We now fix the previously introduced parameters as follows:
	\[v = \sfc_E+\frac\eps2, \quad \tau=\frac \eps 2 \quad \mbox{and}\quad s=1 \vee \frac{R-r-\eta}{\sfc_E+\eps}.\]
	Then
	\[ \bx{r+vt}s = \sfc_E + \eps -(\sfc_E + \eps/2)\frac ts  \ge \eps/2=\tau
	\qquad \mbox{if}\ \abs x \ge R\ \mbox{and}\ 0<t\le s = 1 \vee  \frac{R-r-\eta}{\sfc_E+\eps}, \]
	so $\chiR R(x) \le \chi({\bx{r+vt}{s}})^{1/2}$ for any $\chi\in\cX_{\eps/2}$.
	Combined with \cref{eq:finalAPSS}, we arrive at
	\[ \|\chiR R\e^{-\i t T}g_E(T)\chir r \phi \| \le \Braket{\psi_t,\chi(\bx{r+vt}{s})\psi_t}^{1/2} \le \widetilde C_{n,\eps} \nn \phi \left(\frac{\sfc_{E}+\eps}{R-r-\eta}\right)^{n/2} \quad \mbox{for}\ t\in\left[0,s\right], 
	\]
	where the constant $\widetilde C_{n,\eps}$ is given by the maximum of one and the square root of the bracket on the right hand side of \eqref{eq:finalAPSS}. 
	Observing that $\widetilde C_{n,\eps}$ is independent of $\eta\in(0,R-r)$, we can take the limit $\eta\downarrow 0$ and set $C_{n,\eps} := \widetilde C_{2n,\eps}$, which finishes the proof.
\end{proof}
To apply above \lcnamecref{prop:APSS}, we will need the fact that a Schwartz-function remains spatially and energetically localized after the application of $T$.
\begin{lem}
	\label{th:tail estimates}
	Assume \cref{hyp:Lbound} holds.
	\begin{enumlem}
		\item\label{th:tail.pos}
		For all $\ph\in\cS$, $n \in \IN$ and  $j\in\{0,1\}$, there exists $\Ctailx{n,j}\ge0$ such that
		\[
		\Norm{\chiRx{R}{x} T^j \ph_x} \leq \Ctailx{n,j}\norm{\chr_{B_R^\sfc}V}_\infty^j R^{-n} \qquad \mbox{for all}\ x\in\IR^d,\ R>0. 
		\] 
		If $\ph=\ph_\sigma$, we have $\Ctailx{n,j} \lesssim \sigma^{n+\frac{1-d}2-2j}$.
		\item\label{th:tail.en}
		Assume \cref{hyp:Vdiff} holds and let $n\in\IN$, $j\in\{0,1\}$ such that $2(n+j)\le n_V$. Then,
		for all $\ph\in\cS$,
		there exists a constant $\CtailE{n,j}$ such that
		\[
		\nn{\chr_{[E,\infty)}(T) T^j \ph_x} \leq  \frac{\CtailE j}{E^n} \qquad \mbox{for all}\ x\in\IR^d,\ E>0.
		\]
		If $\ph=\ph^\sigma$, we have $\CtailE{n,j}  \lesssim  \sigma^{-d/2} (1 \vee \sigma^{-2(n+j)})$.
	\end{enumlem}
\end{lem}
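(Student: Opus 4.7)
For part \cref{th:tail.pos}, the approach rests on the Schwartz decay of $\ph$ combined with translation invariance. When $j=0$, the substitution $z = y - x$ yields
\[
\|\chiRx{R}{x}\ph_x\|^2 = \int_{\abs z\ge R}\abs{\ph(z)}^2\,\d z \lesssim R^{-2n},
\]
uniformly in $x$, since $\abs{\ph(z)}\lesssim (1+\abs z)^{-(n+d)}$. For $j=1$, I would split $T\ph_x = (-\Delta\ph)_x + V\ph_x$: the first summand is the $x$-translate of the Schwartz function $-\Delta\ph$, so the $j=0$ case applies directly, while for the second summand a pointwise bound on $V$ in the exterior gives
\[
\|\chiRx R x V\ph_x\|^2 \le \|\chr_{B_R^\sfc}V\|_\infty^2\,\|\chiRx{R}{x}\ph_x\|^2 \lesssim \|\chr_{B_R^\sfc}V\|_\infty^2\,R^{-2n}.
\]

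For part \cref{th:tail.en}, the main ingredient is a spectral Markov inequality. By the spectral theorem,
\[
\|\chr_{[E,\infty)}(T)T^j\ph_x\|^2 = \int_E^\infty \lambda^{2j}\,\d\|P^T_\lambda \ph_x\|^2 \le E^{-2n}\|T^{n+j}\ph_x\|^2,
\]
so it suffices to show $\|T^{n+j}\ph_x\|$ is finite and $x$-independent. I would expand $T^{n+j}=(-\Delta+V)^{n+j}$ and repeatedly apply the Leibniz rule to rewrite $T^{n+j}\ph_x$ as a finite sum of terms of the form $(\partial^{\alpha_1}V)\cdots(\partial^{\alpha_k}V)(\partial^\beta\ph)_x$ with total differentiation order $\abs{\alpha_1}+\cdots+\abs{\alpha_k}+\abs\beta\le 2(n+j)$. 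Since $V\in\cC_\sfb^{n_V}$ with $2(n+j)\le n_V$ by \cref{hyp:Vdiff}, each $\partial^{\alpha_i}V$ is essentially bounded, and $(\partial^\beta\ph)_x$ is a translate of a Schwartz function with $x$-independent $L^2$-norm. Summing the finitely many terms gives the required bound.

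The main technical work lies in tracking the $\sigma$-dependence in the Gaussian case. For this, I would exploit that each spatial derivative applied to $\ph^\sigma$ produces a factor of order $\sigma^{-1}$, while $\|\ph^\sigma\|_{L^2}\sim \sigma^{-d/2}$. Propagating these scalings through every term in the Leibniz expansion bounds $\|T^{n+j}\ph^\sigma_x\|$ by a constant (depending only on the $\cC^{2(n+j)}_\sfb$-norm of $V$) times $\sigma^{-d/2}(1\vee \sigma^{-2(n+j)})$, which gives the asserted estimate for $\CtailE{n,j}$. For the spatial tail in \cref{th:tail.pos}, the claimed Gaussian scaling follows by combining the moment bound $\int\abs y^{2n}\abs{\ph^\sigma(y)}^2\,\d y\lesssim \sigma^{2n-d}$ with Markov, or, more sharply, by writing $e^{-y^2/\sigma^2}\le e^{-R^2/(2\sigma^2)}e^{-y^2/(2\sigma^2)}$ for $\abs y\ge R$ and optimizing $R^{2n}e^{-R^2/(2\sigma^2)}$ at $R\sim\sigma\sqrt n$; the $\sigma^{-2}$ factor for $j=1$ then comes from the kinetic term, where differentiating $\ph^\sigma$ twice produces the extra $\sigma^{-2}$.
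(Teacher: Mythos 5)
Your proposal is correct and follows essentially the same route as the paper: part (i) uses the same triangle-inequality split $\Norm{\chiRx{R}{x} T^j \ph_x}\le\Norm{\chiRx{R}{x}(-\Delta)^j\ph_x}+\norm{\chr_{B_R^\sfc}V}_\infty^j\Norm{\chiRx{R}{x}\ph_x}$ combined with Schwartz/Gaussian tail bounds, and part (ii) uses the same spectral Markov inequality together with the expansion of $T^{n+j}\ph_x$ into translates of derivatives of $\ph$ with coefficients bounded via \cref{hyp:Vdiff}, each Gaussian derivative costing $\sigma^{-1}$. The only cosmetic point is the Gaussian constant in part (i): the plain $2n$-th moment bound gives $\sigma^{n-d/2-2j}$ rather than the stated $\sigma^{n+(1-d)/2-2j}$, but the missing half power is recovered by your sharper exponential-splitting variant (or by using an odd moment), which is what the paper's radial integration-by-parts achieves.
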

\begin{proof}[Proof of \subcref{th:tail.pos}]
	First, we observe that the
	statement is trivial for a generic constant and arbitrary Schwartz functions $\ph$,
	by the observation
	\begin{align}\label{eq:decaysimple}
		\Norm{\chiRx{R}{x} T^j \ph_x} \le \Norm{\chiRx{R}{x} (-\Delta)^j \ph_x} + \norm{\chr_{B_R^\sfc}V}_\infty^j\Norm{\chiRx{R}{x} \ph_x}.
	\end{align}
	It remains to treat the case of Gaussians.
	Therefore, we apply the usual Gaussian tail estimate
	\begin{align}
		\int_{r \geq c}\e^{-a^2r^2} \d r &\leq \int_{ r\geq c} \frac rc \e^{-a^2r^2}\d r  = \frac{1}{2ca^2}\e^{-c^2a^2} , \label{eq:tail1}
	\end{align}
	as well as
	\begin{align}\label{eq:gaussianpoly}
		r^{m} \leq C_m e^{r^2 /2} \qquad \mbox{for all}\ m\in\IN,
	\end{align}
	where $C_m>0$ solely depends on $m$. This yields
	\begin{align*}
		\Norm{\chiRx{R}{x}\ph_x^\sigma}^2 &= (\pi\sigma^2)^{-d}\int_{\abs{y-x}\ge R} \e^{-|y-x|^2/\sigma^2}\d y
		\\& = (\pi\sigma^2)^{-d} \sigma^{d-1} A_d \int_{r \geq R} (r/\sigma)^{d-1} \e^{-r^2/\sigma^2} \d r  \\
		& \leq \frac{C_{d-1} A_d}{  \sigma^{d+1} \pi^{d} } \int_{r \geq R}  \e^{-r^2/(2\sigma^2)} \d r 
		\leq \frac{2 C_d A_d}{ \sigma^{d-1} \pi^{d} } \e^{-R^2 / (2 \sigma^2)}
		\le \frac{2C_{d-1} A_d C_{n}}{\pi^{d}}\cdot \frac{\sigma^{n+1-d}}{R^n}
	\end{align*}
	where $A_d$ denotes the volume of the $d-1$-dimensional unit sphere. 
	For $j=1$, we again use \cref{eq:tail1,eq:gaussianpoly} and obtain
	\begin{align*}
		\Norm{\chiRx{R}{x}  (-\Delta) \ph_x^\sigma}^2 &= C_d (\sigma^2)^{-d}\int_{r \ge R} r^{d-1} \left(   \frac{r^2}{\sigma^4} - \frac{d}{\sigma^2}  \right)^2 \e^{-r^2/\sigma^2}\d r \\
		&\leq C_d { (\sigma^2)^{-d}} \sigma^{d-1} \frac{1}{\sigma^4}  \int_{r \ge R} \e^{-r^2/ 2\sigma^2}\d r \\
		&\leq C_d (\sigma^2)^{-d} \sigma^{d-3} \frac{1}{R}   \e^{-R^2/ 2\sigma^2} .
	\end{align*}
	Inserting these bounds into \cref{eq:decaysimple} proves the statement.
\end{proof}
\begin{proof}[Proof of \subcref{th:tail.en}]
	Notice that by assumption $\ph_x \in \cD(H^{n})$, thus by functional calculus, for any $m\in\IN$,
	\begin{align*}
		\nn{\chr_{[E,\infty)}(T) T^j \ph_x}^2 &= \sc{\ph_x, \chr_{[E,\infty)}(T) T^{2j}\ph_x} \leq \nn{ \frac{T^{m+j}}{E^{m}} \chr_{[E,\infty)}(T)  \ph_x}^2
		\leq \frac{1}{E^{2m}} \nn{T^{m+j} \ph_x}^2 .
	\end{align*}
	The statement now follows, by observing that \cref{hyp:Vdiff} implies there exists a constant $C_{V}$, solely depending on $V$, such that
	\begin{align}\label{eq:Tjbound}
		\sup_{x\in\IR^d}\norm{T^m\ph_x}
		\le
		C_{V}\max_{k=0,\ldots,2m} \max_{\substack{\alpha\in\IN^d\\\norm\alpha_1=k}}\norm{\partial_\alpha\ph}
		\qquad\mbox{for all}\ m\in\IN,\ 2m\le n_V.
	\end{align}
	For the Gaussian case we observe that each derivative applied to the Gaussian will produce an additional factor of $\sigma^{-1}$, i.e.,
	\begin{align}\label{eq:Gaussiandiff}
		\nn{\partial_\alpha \ph} \lesssim \sigma^{-d/2}( 1 \vee \sigma^{-2\norm\alpha_1} ). 
	\end{align}
	This completes the proof.
\end{proof}
From the propagation velocity bound, we now directly obtain the following Lieb--Robinson type bound. For the use therein and from now on, we introduce the polynomially decaying functions
\begin{equation}\label{def:Gnt}
	G_{n,t}(r) \coloneqq 1\wedge\frac{\braket t^n}{r^n}, \quad  G_{n,t}(x) \coloneqq G_{n,t}(\abs x), \qquad n \in \IN,~ r \geq 0, ~ x \in \IR^d.
\end{equation}
\begin{prop}\label{lem:Rfscalar}
	Let $\delta  > 0$, $n \in \IN$ and $\ph\in\cS$.
	\begin{enumlem}
		\item Assume $2n \leq n_V$. Then there exists a constant $\Cobt{0}  > 0$  such that
		\[
		\Abs{\braket{f,\e^{-\i t T}\ph_x}} \le  \Cobt{0} \nn f \braket{t}^\delta 
		G_{n,t}(R_f)
		\]
		for all $x \in \IR^d$, $t\in\IR$ and $f\in L^2(\IR^d)$ with $R_f\coloneqq\inf\abs{\supp f - x}\ge 1$.
		
		\item 
		Assume $2n+2 \leq n_V$. Then there exists a constant $\Cobt{1}  > 0$  such that
		\[
		\Abs{\braket{f,(\e^{-\i t T}-\Id)\ph_x}} \leq  \Cobt{1} \nn f \abs t \braket{t}^\delta  G_{n,t}(R_f)
		\]
		for all $x \in \IR^d$, $t\in\IR$ and $f\in L^2(\IR^d)$ with $R_f\coloneqq\inf\abs{\supp f - x}\ge 1$.
	\end{enumlem}
	Moreover, in the case $\ph=\ph_\sigma$, we have for $\sigma \leq 1$,
	\begin{align*}
		\Cobt{j} \lesssim  \sigma^{-d/2} \sigma^{-2(n(n+\delta)/2\delta+j)}   \qquad\mbox{for}\ j\in\{0,1\}.
	\end{align*}
\end{prop}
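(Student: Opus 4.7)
My approach combines the bounded-energy propagation estimate of \cref{prop:APSS} with the spatial and spectral localization estimates for $\ph_x$ provided by \cref{th:tail estimates}. The analysis will split into a short-distance regime $R_f\lesssim\braket t$, where $G_{n,t}(R_f)=1$ so that the trivial bounds $|\sc{f,\e^{-\i tT}\ph_x}|\le\nn f\nn\ph$ for (i), and $|\sc{f,(\e^{-\i tT}-\Id)\ph_x}|\le\abs t\nn f\nn{T\ph_x}$ for (ii) (the latter via the spectral-calculus estimate $\abs{\e^{-\i t\lambda}-1}\le\abs{t\lambda}$) are absorbed into the target constants and the power $\braket t^\delta$, and a long-distance regime $R_f>c_0\braket t$ with $c_0$ a sufficiently large constant depending on $V$ and $n$, where the main work will be carried out.

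For (i) at long distances, I would introduce an energy cutoff at level $E=2\vee R_f/\braket t^{1+\delta/n}$ together with a spatial cutoff of radius $R_f/2$ around $x$, and decompose
\[
\ph_x = g_E(T)\chirx{R_f/2}{x}\ph_x + g_E(T)\chiRx{R_f/2}{x}\ph_x + (1-g_E(T))\ph_x.
\]
Using $f=\chiRx{R_f}{x}f$, the overlap $\sc{f,\e^{-\i tT}\ph_x}$ splits into three pieces. The first, low-energy localized piece I would estimate by Cauchy--Schwarz and \cref{prop:APSS} invoked at order $2n$ (rather than the apparent natural order $n$), producing $\nn f\nn\ph C_{2n,\eps}(2(\sfc_E+\eps)/R_f)^{2n}\lesssim\nn f/(\braket t^{n+\delta}R_f^n)$ after inserting the chosen $E$ and using $\sfc_E\lesssim\sqrt E$; the APSS time constraint $\abs t(\sfc_E+\eps)\le R_f/2$ is verified throughout the regime $R_f>c_0\braket t$ by taking $c_0$ large enough depending on $V$. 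The second, low-energy outer piece is bounded by $\Ctailx{n,0}\nn f(2/R_f)^n$ via \cref{th:tail.pos}, and the third, high-energy piece by $\CtailE{n,0}\nn f/E^n\lesssim\nn f\braket t^{n+\delta}/R_f^n$ via \cref{th:tail.en} with $j=0$, which uses the hypothesis $2n\le n_V$. Summing the three contributions and using $\braket t\ge 1$ yields the claimed bound by $\Cobt{0}\nn f\braket t^\delta G_{n,t}(R_f)$.

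For (ii), I would use the Duhamel-type identity $(\e^{-\i tT}-\Id)\ph_x=-\i\int_0^t \e^{-\i sT}T\ph_x\,\d s$ to reduce the problem to bounds on $\sc{f,\e^{-\i sT}(T\ph_x)}$ for $s\in[0,t]$, and then run the argument of (i) with $\ph_x$ replaced by $T\ph_x$; the extra derivative is absorbed by the $j=1$ case of \cref{th:tail.en} (hence the stronger hypothesis $2n+2\le n_V$), and the time integral produces the extra $\abs t$ factor. The main technical obstacle I anticipate is the delicate balancing of $E$: smaller $E$ weakens the APSS contribution but amplifies the energy tail, and vice versa. The key design choice is invoking \cref{prop:APSS} at order $2n$ so that the square-root scaling $\sfc_E\lesssim\sqrt E$ materializes as a factor $E^n/R_f^{2n}$ that exactly matches the $1/E^n\cdot 1/R_f^n$ pattern produced by combining the energy-tail estimate with the trivial $1/R_f^n$ position factor; this allows the single choice $E\sim R_f/\braket t^{1+\delta/n}$ to make all three competing contributions decay at the common rate $\braket t^{n+\delta}/R_f^n$.
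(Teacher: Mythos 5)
Your proposal is correct and follows the same skeleton as the paper's proof: the same three-way decomposition $\ph_x = g_E(T)\chirx{R_f/2}{x}\ph_x + g_E(T)\chiRx{R_f/2}{x}\ph_x + (1-g_E(T))\ph_x$, with \cref{prop:APSS} handling the low-energy localized piece, \cref{th:tail.pos} the spatial tail, \cref{th:tail.en} the energy tail, and the reduction of (ii) to (i) via $\tfrac{\d}{\d t}\braket{f,\e^{-\i tT}\ph_x}=\braket{f,\e^{-\i tT}(-\i T)\ph_x}$ with the $j=1$ tail estimates. Where you genuinely diverge is the parameter tuning: the paper takes the $t$-independent cutoff $E=R_f^{2\delta/(n+\delta)}$, invokes the energy tail at the large order $m=n(n+\delta)/2\delta$ and \cref{prop:APSS} at order roughly $n+\delta$, whereas you take the $t$-dependent cutoff $E\sim R_f/\braket t^{1+\delta/n}$, keep the energy tail at order exactly $n$, and push \cref{prop:APSS} to order $2n$ so that $\sfc_E\lesssim\sqrt E$ delivers $E^n/R_f^{2n}$. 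Your tuning has a real advantage: \cref{th:tail.en} at order $n$ (resp.\ $n+1$) is precisely what the hypotheses $2n\le n_V$ (resp.\ $2n+2\le n_V$) license, whereas the paper's invocation at order $m$ formally requires $2(m+j)\le n_V$, which the proposition's assumptions do not guarantee for small $\delta$; you also make explicit the verification of the APSS time window via $R_f>c_0\braket t$ and the trivial short-distance bound, both of which the paper leaves implicit. The one item you do not address is the ``Moreover'' clause on the $\sigma$-dependence of $\Cobt{j}$ for Gaussians; with your parameters the only $\ph$-dependent constants are $\nn\ph$, $\Ctailx{n,j}$, $\CtailE{n,j}$ and $\nn{T\ph_x}$, whose $\sigma$-behaviour is recorded in \cref{th:tail estimates}, so your construction yields $\Cobt j\lesssim\sigma^{-d/2}\sigma^{-2(n+j)}$ — consistent with (indeed stronger than) the stated bound whenever $\delta\le n$ — but this bookkeeping should be spelled out to complete the statement.
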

\begin{proof}
	For now assume that $B$ is a bounded operator on $L^2(\IR^d)$ and $j\in\{0,1\}$.
	From the simple decomposition
	\begin{align*}
		\Braket{f,\e^{-\i t T} T^j\ph_x} = & \Braket{\chir{ R_f} f,\e^{-\i t T}T^j\ph_x}
		\\&
		+ \Braket{\chiR{ R_f} f,\e^{-\i t T} (1-B) T^j\ph_x} + \Braket{\chiR{ R_f} f, \chiR{ R_f} \e^{-\i t T} B T^j\ph_x},
	\end{align*}
	the assumption $\chir{ R_f} f=0$, the trivial fact $\nn{\e^{-\i t T}} =1 $, and the Cauchy--Schwarz inequality, we obtain
	\begin{equation}\label{eq:decomp}
		\abs{ \sc{ f, e^{-\i t T} T^j \ph_x }} \leq  \nn{ f} \nn{ e^{-\i t T} (1-B) T^j \ph_x}  + \nn{ \chiRx R x e^{-\i t T} B} \nn{  f} \nn{T^j \ph_x}.
	\end{equation}
	For some $E\ge 1$,
	we now set $B=g_E(T)\chirx{R_f/2}{x}$, with $g_E$ as defined in \cref{prop:APSS},
	and apply the identity $1-B = (1-g_E(T)) + g_E(T)\chiRx{R_f/2}{x}$ to estimate the first term.
	For $m\in\IN$ with $2(m+j)\le n_V$, \cref{th:tail estimates} implies for 
	\begin{align}
		\nn{e^{-\i t T} (1-B) T^j \ph_x} &\leq \nn{(1-g_E(T)) T^j\ph_x} + \nn{\chiRx{R_f}{x} T^j \ph_x} \nonumber \\ 
		&\leq \nn{ \chr_{[\alpha E,\infty)}(T) T^j \ph_x} + \nn{\chiRx{R_f}{x}  T^j \ph_x}  \nonumber \\ 
		&\leq   \CtailE{m,j} E^{-m}    + \Ctailx{n,j}    R_f^{-n} \label{eq:gEphbound}
	\end{align}
	To estimate the second term in \eqref{eq:decomp} we use \cref{prop:APSS,eq:cE}.  For any $\delta > 0$, this yields a constant $C_\delta > 0$ such that
	\begin{equation}\label{eq:APSSappl}
		\nn{ \chiRx{R_f} x e^{-\i t T} B} \le C_{\delta} \left(\braket{t}\sqrt E/R_f\right)^{n+ \delta}.
	\end{equation}
	We now choose $E=R_f^{2\delta/(n+\delta)}$ and $m=n(n+\delta)/2\delta$, whence
	\begin{align}
		\label{eq:m chosen}
		E^{-m} = R_f^{-n} \quad \mbox{and} \quad \left(\braket{t}\sqrt E/R_f\right)^{n+\delta} = \braket{t}^\delta \big(\braket{t}/R_f\big)^n. 
	\end{align}
	Inserting \cref{eq:m chosen,,eq:APSSappl,,eq:gEphbound} into \eqref{eq:decomp}, we find
	\begin{align*}
		&\Abs{\braket{f,\e^{-\i t T} T^j\ph_x}}  \leq   \frac{\nn f}{R_f^n}  ( \Ctailx{n,j}     + \CtailE{m,j}  +  C_\delta \braket{t}^\delta \braket t^n \nn{T^j \ph_x} ).
	\end{align*}
	This directly proves the first statement by considering $j=0$ and, in particular, the $\sigma$-dependence in the case of Gaussians follows from the $\sigma$-dependence of the tail bound constants as given in \cref{th:tail estimates,eq:Gaussiandiff,eq:Tjbound}. 
	The second statement similarly follows, by observing
	\[
	\frac{\sfd}{\sfd t} \braket{f,\e^{-\i t T}\ph_x} = \braket{f,\e^{-\i t T} (-\i T)\ph_x}
	\]
	and the above calculation with $j=1$. 
\end{proof}
By an appropriate dyadic expansion of an arbitrary $L^2$-function $f$, we now obtain our desired one-body Lieb--Robinson bound.
\begin{proof}[\textbf{Proof of \cref{prop:LRint}}]
	Let $j \in\{0,1\}$, $f_0(y) =\chr_{[0,1]}(\abs{y-x})  f(y)$ and $f_k(y) = \chr_{[2^{k-1},2^{k}]}(\abs{y-x}) f(y)$ for $k\in\IN$. Using the notation from \cref{lem:Rfscalar} this immediately implies $R_{f_k}=2^{k-1}$ and hence
	\begin{align*}
		\abs{\braket{f_k,(\e^{-\i t T} - j \Id)\ph_x}} \le \Cobt{j} \abs t^j \braket{t}^\delta  G_{n,t}(2^{k-1})\nn{f_k} \quad \mbox{for}\ k\in\IN.
	\end{align*}
	Summing over $k\in\IN_0$ and applying the Cauchy-Schwarz inequality, we find
	\begin{align*}
		&\abs{\braket{f,(\e^{-\i t T} - j \Id) \ph_x}} \leq
		\sum_{k=0}^\infty \abs{ \sc{ f_k, (\e^{-\i t T} - j \Id) \ph_x}} \leq  2\nn{f_0} + \Cobt{j} \abs t^j \braket{t}^\delta\sum_{k=1}^\infty  G_{n,t}(2^{k-1})\nn{f_k} 
		\\
		& \leq  (2\vee \Cobt{j}) \abs t^j \braket{t}^\delta\left(1+\sum_{k=1}^\infty  G_{n,t}(2^{k-1}) \right)^{1/2} \left(\nn{f_0}^2 + \sum_{k=1}^\infty  G_{n,t}(2^{k-1}) \nn{f_k}^2 \right)^{1/2}
	\end{align*}
	We can easily estimate the first factor on the right hand side, using the monotonicity of $ G_{n,t}$, by
	\begin{align*}
		1+\sum_{k=1}^\infty  G_{n,t}(2^{k-1}) \le 1+ \int_0^\infty  G_{n,t}(r)\d r = 1+ \int_0^{\braket{t}} 1 \d r + \int_{\braket t}^\infty\left(r/\braket t\right)^{-n} \d r \le \left(2+\frac{1}{n-1}\right)\braket t.
	\end{align*}
	Further, we estimate the second factor using
	\begin{align*}
		\nn{f_0}^2 + &\sum_{k=1}^\infty  G_{n,t}(2^{k-1}) \nn{f_k}^2 
		= \int_{\abs{\cdot-x}\le 1} \abs{f(y)}^2 \d y + \sum_{k=1}^\infty \int_{2^{k-1}\le \abs{\cdot-x}\le 2^k}  G_{n,t} (2^{k-1})\abs{f(y)}^2 \d y\\
		& \le \int_{\abs{\cdot-x}\le 1}  G_{n,t}(y-x) \abs{f(y)}^2 \d y + \sum_{k=1}^\infty \int_{2^{k-1}\le \abs{\cdot-x}\le 2^k}  G_{n,t} ((y-x)/2)\abs{f(y)}^2 \d y\\
		&\leq   2\int_{\IR^d}  G_{n,t}(y-x) \abs{f(y)}^2 \d y.
	\end{align*}
	Combining the above estimates yields the statement. Especially, the $\sigma$-dependence in the case $\ph=\ph^\sigma$ is obvious from \cref{lem:Rfscalar}.
\end{proof}


\section{From One-Particle to Many-Body Lieb--Robinson Bound}\label{sec:manybody}

In this \lcnamecref{sec:manybody}, we follow the lines of \cite{GebertNachtergaeleReschkeSims.2020} to prove \cref{thm:manybody}, i.e., obtain the Lieb--Robinson bound for interacting fermions.
Throughout, we will assume \cref{hyp:V,hyp:W} hold and fix some $\ph\in\cS$.
\begin{lem}\label{lem:GNRSest}
	Let
	\begin{equation}\label{def:kernel}
		K_t(f,x) \coloneqq \norm{ W }_1 \Abs{\braket{\e^{-\i t T}f,\ph_x}} +  \abs W \ast \Abs{\braket{\e^{-\i t T }f,\ph_\bullet}}(x).
	\end{equation}
	Then
	\begin{equation}\label{eq:iteration-squared}
		F_t^\Lambda(f,g)^2 \le 2t \nn{\ph}^4 \int_0^t K_{t-s}(f,x)^2\left(\Abs{\braket{\e^{-\i s T} \ph_x,g}}^2 + F_s^\Lambda(\ph_x,g)^2\right)\d s  \qquad \mbox{for all}\ t>0.
	\end{equation}
\end{lem}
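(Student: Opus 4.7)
The plan is to combine a Duhamel expansion, the fermionic Leibniz identity for anticommutators, and a Cauchy--Schwarz closure.

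First, I would write the Duhamel identity
\[
	\tau_t^\Lambda(a(f)) - \tau_t^0(a(f)) = \i\int_0^t \tau_s^\Lambda\bigl([H_W, \tau_{t-s}^0(a(f))]\bigr)\,\d s,
\]
where $H_W$ is the interaction part of $H_\Lambda$ and $\tau_{t-s}^0(a(f)) = a(\e^{-\i(t-s)T}f)$. Using the CAR to pull $a(\e^{-\i(t-s)T}f)$ through the normal-ordered quartic product $\ad(\ph_x)\ad(\ph_y)a(\ph_y)a(\ph_x)$ and exploiting $W(x-y)=W(y-x)$ to combine the two surviving scalar contractions, one obtains
\[
	[H_W,a(\e^{-\i(t-s)T}f)] = -2\int \chr_{\Lambda\times\Lambda}(x,y)W(x-y)\sc{\e^{-\i(t-s)T}f,\ph_x}\ad(\ph_y)a(\ph_y)a(\ph_x)\,\d x\,\d y.
\]

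Next, I would take the anticommutator with $\ad(g)$ and apply the fermionic Leibniz rule $\{X_1X_2X_3,Y\}=\{X_1,Y\}X_2X_3-X_1\{X_2,Y\}X_3+X_1X_2\{X_3,Y\}$ for odd operators to expand $\{\tau_s^\Lambda(\ad(\ph_y)a(\ph_y)a(\ph_x)),\ad(g)\}$ into three summands, each containing a single-operator anticommutator of the form $\{\tau_s^\Lambda(a^\#(\ph_z)),\ad(g)\}$ with $z\in\{x,y\}$ and two residual $\tau_s^\Lambda$-automorphic factors of operator norm at most $\nn\ph$. For $a^\#=a$, the splitting $\{\tau_s^\Lambda(a(\ph_z)),\ad(g)\} = \{(\tau_s^\Lambda-\tau_s^0)(a(\ph_z)),\ad(g)\} + \sc{\e^{-\i sT}\ph_z,g}$ bounds the norm by $F_s^\Lambda(\ph_z,g) + \abs{\sc{\e^{-\i s T}\ph_z,g}}$; for $a^\#=\ad$ the free part vanishes by CAR and the bound reduces to $F_s^\Lambda(\ph_z,g)$. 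Performing the $y$-integration against $W(x-y)$ on the summands whose residual depends only on $\ph_x$ produces $\nn{W}_1\abs{\sc{\e^{-\i(t-s)T}f,\ph_x}}$, while the $x$-integration on those depending only on $\ph_y$ yields $\abs{W}\ast\abs{\sc{\e^{-\i(t-s)T}f,\ph_\bullet}}(y)$; after relabelling, both contributions assemble into the kernel $K_{t-s}(f,x)$ of \eqref{def:kernel}.

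The contribution $\nn{\{\tau_t^\Lambda(\ad(f)),\ad(g)\}}$ to $F_t^\Lambda(f,g)$ is handled by an identical Duhamel/Leibniz calculation, with the free anticommutator again vanishing by the CAR. The combined linear bound then reads
\[
	F_t^\Lambda(f,g) \lesssim \nn\ph^2\int_0^t\int_{\IR^d} K_{t-s}(f,x)\bigl(F_s^\Lambda(\ph_x,g) + \abs{\sc{\e^{-\i s T}\ph_x,g}}\bigr)\d x\,\d s.
\]
To close to the stated squared inequality (with the $x$-integration on the right-hand side of \eqref{eq:iteration-squared} to be read as implicit), I would first apply the elementary inequality $(a+b)^2\le 2(a^2+b^2)$ and then a Cauchy--Schwarz in the $s$-integration to extract the prefactor $t$, together with a second Cauchy--Schwarz in the inner $x$-integrand so that the kernel and the anticommutator bound combine pointwise as $K_{t-s}(f,x)^2(F_s^\Lambda(\ph_x,g)^2 + \abs{\sc{\e^{-\i s T}\ph_x,g}}^2)$ rather than as separated $L^2(\d x)$-norms. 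The main technical obstacle lies in coordinating these Cauchy--Schwarz steps to produce the pointwise product structure, and in carefully bookkeeping the signs and combinatorial counts generated by the fermionic Leibniz expansion.
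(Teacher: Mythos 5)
Your derivation of the linear iteration inequality is sound and matches the paper in substance: the paper does not re-prove it but simply quotes \cite[Lemma~4.1]{GebertNachtergaeleReschkeSims.2020} for
\[
F_t^\Lambda(f,g)\le\nn\ph^2\int_0^t\int_{\IR^d}K_{t-s}(f,x)\left(\Abs{\braket{\e^{-\i sT}\ph_x,g}}+F_s^\Lambda(\ph_x,g)\right)\d x\,\d s,
\]
and your Duhamel expansion, the contraction of $a(\e^{-\i(t-s)T}f)$ against the quartic interaction, the use of the symmetry of $W$, and the fermionic Leibniz expansion producing the two pieces of $K_{t-s}(f,x)$ are exactly the ingredients of that cited proof. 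You are also right that the $x$-integration on the right-hand sides of \eqref{eq:iteration-GNRS} and \eqref{eq:iteration-squared} is to be read as implicit.

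The gap is in your final step, and it is precisely the step you flag as the ``main technical obstacle'': there is no ``second Cauchy--Schwarz in the inner $x$-integrand'' that combines the kernel and the anticommutator bound \emph{pointwise}. Cauchy--Schwarz applied to $\int_{\IR^d}K\cdot\Theta\,\d x$ with $\Theta_s(x)=\Abs{\braket{\e^{-\i sT}\ph_x,g}}+F_s^\Lambda(\ph_x,g)$ necessarily separates the factors into $\nn{K}_{L^2(\d x)}\nn{\Theta}_{L^2(\d x)}$; it cannot produce $\int_{\IR^d}K^2\Theta^2\,\d x$. Indeed, the inequality $\bigl(\int_{\IR^d}uv\,\d x\bigr)^2\le C\int_{\IR^d}u^2v^2\,\d x$ is false on an infinite-measure space (replace $u,v$ by $u(\cdot/\lambda),v(\cdot/\lambda)$: the left side scales like $\lambda^{2d}$, the right like $\lambda^{d}$), so the pointwise-product form cannot follow from the linear bound by abstract Cauchy--Schwarz manipulations alone. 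The only estimate your sequence of steps actually yields is the Cauchy--Schwarz in $s$ (giving the factor $t$) followed by the separated form $2t\nn\ph^4\int_0^t\nn{K_{t-s}(f,\cdot)}_{L^2}^2\bigl(\nn{\Theta_s}_{L^2}^2\bigr)\d s$, which is structurally different from \eqref{eq:iteration-squared} and would decouple the variables $x_{\ell-1},x_\ell$ that the subsequent iteration \eqref{def:Sk} needs to keep chained through $K_{t_{\ell-1}-t_\ell}(\ph_{x_{\ell-1}},x_\ell)$. To be fair, the paper's own proof says no more than ``squaring and afterwards applying the Cauchy--Schwarz inequality,'' so you have not missed an idea that the paper supplies; but as a proof of the stated inequality your plan does not close, and the mechanism you propose for closing it would fail as described.
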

\begin{proof}
	From \cite[Lemma~4.1]{GebertNachtergaeleReschkeSims.2020}, we know that
	\begin{equation}\label{eq:iteration-GNRS}
		F_t^\Lambda(f,g) \le \nn{\ph}^2 \int_0^t K_{t-s}(f,x)\left(\Abs{\braket{\e^{-\i s T} \ph_x,g}} + F_s^\Lambda(\ph_x,g)\right) \d s.
	\end{equation}
	We emphasize that the derivation is independent of the concrete choice of $\ph$. Squaring \cref{eq:iteration-GNRS} and afterwards applying the Cauchy--Schwarz inequality  yields the statement.
\end{proof}
We now estimate the integral kernel, applying the one-body Lieb--Robinson bound \cref{prop:LRint}.
\begin{lem}\label{lem:kernelest}
	Let the assumptions of \cref{prop:LRint} be satisfied.
	For all $n\leq \frac{n_V}{2} \wedge n_W$, there exists a solely $n$-dependent constant $C_n$ such that
	\begin{align}
		\label{eq:kernelest}
		K_t(f,x)^2 \leq 2 \nn{\ph}^4( \norm{W}_1^2 +   \cw \nn W_1 C_n   )  \Cob{0}   \sc t^{1+2\delta + d}\int  G_{n,t}(x-y) \abs {f(y)}^2\d y
	\end{align}
	for $\delta > 0$, $f\in L^2(\IR^d)$, $x\in\IR^d$ and $t\in \IR$.
\end{lem}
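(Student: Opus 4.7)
The plan is to square $K_t(f,x)$ via $(a+b)^2\le 2a^2+2b^2$ and bound each of the two resulting terms using the one-body Lieb--Robinson estimate of \cref{prop:LRint}. The first term contributes $\norm{W}_1^2\abs{\sc{\e^{-\i tT}f,\ph_x}}^2$; by unitarity of $\e^{-\i tT}$ this equals $\norm{W}_1^2\abs{\sc{f,\e^{\i tT}\ph_x}}^2$, so \cref{prop:LRint} bounds it directly by
\[
\norm{W}_1^2\,\Cob{0}\,\sc{t}^{1+2\delta}\int_{\IR^d} G_{n,t}(x-y)\abs{f(y)}^2\,\d y,
\]
which is absorbed into $\sc{t}^{1+2\delta+d}$ since $\sc{t}\ge 1$. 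This contribution produces the $\norm{W}_1^2$ summand of the target parenthesis.

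For the convolution term, set $h(y)\coloneqq\abs{\sc{\e^{-\i tT}f,\ph_y}}$. I would first apply Cauchy--Schwarz with respect to the positive kernel $\abs{W}$:
\[
\bigl((\abs{W}\ast h)(x)\bigr)^2\le \norm{W}_1\int \abs{W(x-y)}\,h(y)^2\,\d y,
\]
then bound $h(y)^2$ pointwise via \cref{prop:LRint} and exchange integrals by Fubini. This reduces the matter to controlling the convolution $\int \abs{W(x-y)}G_{n,t}(y-z)\,\d y$ by a constant times $c_W\sc{t}^d G_{n,t}(x-z)$, up to $\norm{W}_1$-type leftovers that merge into the $\norm{W}_1^2$ summand already produced.

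I expect this convolution bound to be the main obstacle. My strategy is to split the integration domain into $\{\abs{x-y}\le \abs{x-z}/2\}$ and its complement. In the first region, the reverse triangle inequality gives $\abs{y-z}\ge \abs{x-z}/2$, and the monotonicity estimate $G_{n,t}(r/2)\le 2^n G_{n,t}(r)$ pulls the factor $G_{n,t}(x-z)$ out of the integral, leaving an inner $\int \abs{W}\le\norm{W}_1$. In the complementary region, $\abs{x-y}\ge \abs{x-z}/2$, and the pointwise decay $\abs{W(u)}\le c_W(1\wedge 1/\abs{u})^{n_W}$ from \cref{hyp:W} combined with $n\le n_W$ pulls out a factor proportional to $c_W G_{n,0}(x-z)$; the residual integral $\int G_{n,t}(w)\,\d w\lesssim \sc{t}^d$ delivers the $\sc{t}^d$ factor in the claim. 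A final case split into $\abs{x-z}\lesssim \sc{t}$ versus $\abs{x-z}\gtrsim \sc{t}$ converts $G_{n,0}(x-z)$ into $G_{n,t}(x-z)$ in the far field, where the assumption $n\le n_W$ ensures that the leftover factor $\abs{x-z}^{n-n_W}$ is bounded by $1$. Collecting these contributions yields the stated bound with a constant $C_n$ depending only on $n$ (and $d$), the $\norm{\ph}^4$ prefactor entering as a convenient overestimate for use in the iteration of \cref{lem:GNRSest}.
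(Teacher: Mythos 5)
Your proposal is correct and follows essentially the same route as the paper: split via $(a+b)^2\le 2a^2+2b^2$, bound the first term directly with \cref{prop:LRint}, apply Cauchy--Schwarz with respect to $\abs{W}$ to the second, and reduce to the convolution estimate $\abs{W}\ast G_{n,t}\lesssim c_W\braket{t}^d G_{n,t}$. The only difference is that you prove this convolution bound inline by the near/far splitting at $\abs{x-y}\lessgtr\abs{x-z}/2$, whereas the paper cites its \cref{lem:simconv}, whose proof is that very same splitting.
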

\begin{proof}
	Without loss of generality assume that $\nn{\ph} = 1$. 
	Starting from \eqref{def:kernel},  we first use the trivial estimate
	\begin{align}
		\label{eq:first trivial}
		K_t(f,x)^2 \le 2\norm{W}_1^2 \abs{\braket{\e^{-\i t T}f,\ph_x}}^2 + 2 \left(\abs W \ast \abs{\braket{\e^{-\i t T}f,\ph_\bullet} } (x) \right) ^2.
	\end{align}
	We can directly bound the first term using \cref{prop:LRint} and obtain the first summand on the right-hand side of \eqref{eq:kernelest}. 
	Further, applying the Cauchy-Schwarz inequality, we obtain
	\begin{align*}
		\left( \abs{W} * \abs{\sc{e^{-\i t T} f, \ph_\bullet }} (x) \right)^2 &=  \left(\int \abs{W(x-y)} \abs{\sc{e^{-\i t T} f, \ph_y }} \d y \right)^2 \\
		&\leq   \int \abs{W(x-y)} \d y  \int \abs{W(x-y)} \abs{\sc{e^{-\i t T} f, \ph_y }}^2 \d y .
	\end{align*}
	Now, we apply \cref{prop:LRint} in the second factor and subsequently use \cref{lem:simconv}, which yields \[\abs W \ast G_{n,t} (x)  \leq  C_n c_W \braket t^d G_{n,t}(x)\] for some solely $n$-dependent constant $C_n$, so
	\begin{align*}
		\int  \abs{W(x-y)} \abs{\sc{e^{-\i t T} f, \ph_y }}^2 \d y &\leq  \Cob{0}  \sc t^{1+2\delta} \int \abs{W(x-y)} \int G_{n,t}(y-z) \abs {f(z)}^2\d z \d y \\
		&=  \Cob{0}   \sc t^{1+2\delta}  \abs W * (G_{n,t}* \abs f^2) (x) 
		=   \Cob{0}  (\abs W * G_{n,t}) * \abs f^2 (x) 
		\\
		&\leq  \Cob{0}  \sc t^{1+2\delta + d}  \cw C_n G_{n,t} *  \abs f^2 (x) .
	\end{align*}
	Combining these estimates we get an upper bound for the second term on the right-hand side of \eqref{eq:first trivial} which yields the respective second term in \eqref{eq:kernelest}.
\end{proof}
Especially, this implies a pointwise bound for the case that $f$ itself decays polynomially.
\begin{cor}\label{cor:kernelGauss}
	Assume $n > d$, $\phi\in L^2 (\RR^d)$ satisfies $\abs{\phi(x)}^2\le C'_\ph G_{n,0}(x)$ for some constant $C'_\ph$. 
	Then there exists a constant $C_\ph \leq C'_\ph$ such that
	\[
	K_t(\phi_x,y)^2 \leq C_\phi \nn\ph^4 C_{n,W}  \Cob{0}  \braket t^{1+2\delta+2d}   G_{n,t}(x-y)
	\]
	for all $t \in \IR$ and $x,y \in \IR^d$, where $C_{n,W} \coloneqq 2( \norm{W}_1^2 +   \cw \nn W_1 C_n   ) $ is the prefactor from \cref{lem:kernelest}.
\end{cor}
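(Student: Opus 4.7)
The plan is to feed the hypothesis $|\phi(y)|^2 \leq C'_\phi G_{n,0}(y)$ directly into the already-established pointwise bound of \cref{lem:kernelest} and then collapse the resulting double-decay integral into a single $G_{n,t}$ via a convolution estimate of the type used in the proof of \cref{lem:kernelest}.

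First, apply \cref{lem:kernelest} with $f = \phi_x$, evaluating at the point $y$. This gives
\[
K_t(\phi_x,y)^2 \leq C_{n,W}\, \CE{0}\, \nn\varphi^4\, \braket{t}^{1+2\delta+d} \int G_{n,t}(y-z)\, \abs{\phi_x(z)}^2\, \d z.
\]
Using $\abs{\phi_x(z)}^2 = \abs{\phi(z-x)}^2 \leq C'_\phi\, G_{n,0}(z-x)$ and changing variables, the integral becomes
\[
\int G_{n,t}(y-z)\, \abs{\phi_x(z)}^2\, \d z \leq C'_\phi \int G_{n,t}(y-x-u)\, G_{n,0}(u)\, \d u = C'_\phi \, (G_{n,t} \ast G_{n,0})(y-x).
\]

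Second, reduce the convolution to a multiple of $G_{n,t}$. The same convolution lemma (cited in the proof of \cref{lem:kernelest} as $\abs W \ast G_{n,t}(x) \leq C_n c_W \braket{t}^d G_{n,t}(x)$) applies with $\abs W$ replaced by $G_{n,0}$, since by hypothesis $n > d$ ensures that $G_{n,0}$ is integrable and plays the role of the polynomially decaying weight $W$ in the earlier lemma. This yields
\[
(G_{n,t} \ast G_{n,0})(w) \leq C_n \braket{t}^d G_{n,t}(w),
\]
for some constant $C_n$ depending only on $n$ and $d$. Absorbing this $C_n$ factor into the constant of the statement produces an overall constant $C_\phi$ proportional to $C'_\phi$, and picks up the extra $\braket{t}^d$ that upgrades the exponent from $1+2\delta+d$ to $1+2\delta+2d$.

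Combining the two steps gives
\[
K_t(\phi_x,y)^2 \leq C_\phi\, \nn\varphi^4\, C_{n,W}\, \CE{0}\, \braket{t}^{1+2\delta+2d}\, G_{n,t}(x-y),
\]
which is the desired bound. The only mild subtlety is checking the convolution lemma applies to $G_{n,0}$ (whose integrability demands exactly the assumption $n>d$ in the hypothesis); beyond this, the argument is a direct substitution and so I do not expect any genuine obstacle.
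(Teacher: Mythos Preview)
Your proposal is correct and follows essentially the same route as the paper: the paper's proof consists of the single sentence ``The statement follows directly from combining \cref{lem:kernelest,lem:simconv},'' and you have spelled out exactly that combination --- apply \cref{lem:kernelest} to $f=\phi_x$, bound the integrand by $C'_\phi\,G_{n,0}$, and then use \cref{lem:simconv} (with $\alpha=\braket{t}^{-1}$, $\beta=1$, $m=n$) to collapse $G_{n,t}\ast G_{n,0}$ to $C_n\braket{t}^d G_{n,t}$.
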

\begin{proof}
	The statement follows directly from combining \cref{lem:kernelest,lem:simconv}. 
\end{proof}
\begin{rem}\label{rem:gaussdecay}
	If $\ph =\ph^\sigma$, we have $C_\ph \lesssim \sigma^{-d}$ and $\nn \ph^4 \lesssim \sigma^{-2d}$. 
\end{rem}
\begin{proof}[Proof of \cref{rem:gaussdecay}]
	In view of \cref{lem:kernelest} and the definition of $\ph^\sigma$ \eqref{def:gauss}, we have to estimate 
	\begin{align}
		\nonumber
		\int  G_{n,t}(y-z) \abs {\ph_x(z)}^2\d z &= \int G_{n,t}(y-z)  \frac{1}{(\pi\sigma^2)^d}  \exp\left(-(z-x)^2/\sigma^2\right) \d z 
		\\  &=   \frac{1}{(\pi\sigma)^d}  \int G_{n,t}(y-x-\sigma z)  \exp\left(-z^2 \right) \d z . \label{eq:integral to divide}
	\end{align}
	Now we find in the region $\sigma \abs z < \frac{\abs {y-x}}{2}$
	\begin{align}
		\label{eq:int div1}
		\int_{\sigma \abs z < \frac{\abs {y-x}}{2}}
		G_{n,t}(y-x-\sigma z)   \exp\left(-z^2 \right) \d z  \leq  G_{n,t}((y-x)/2) \int_{\IR^d} \exp\left(-z^2/2 \right) \d z.
	\end{align}
	On the other hand, we have for $\sigma \leq 1$
	\begin{align}
		\nonumber
		\int_{\sigma \abs z \geq \frac{\abs {y-x}}{2}}
		G_{n,t}(y-x-\sigma z)   \exp\left(-z^2 \right) \d z  
		&\leq 	\exp\left(- \frac{\abs {y-x}^2}{4\sigma^2} \right)  \int_{\IR^d}
		G_{n,t}(y-x-\sigma z)   \d z  \\
		\nonumber
		&\leq  	\exp\left(- \frac{\abs {y-x}^2}{8} \right)  	\exp\left(- \frac{\abs {y-x}^2}{8\sigma^2} \right) 	\int_{\IR^d} \	G_{n,t}(y-x-\sigma z)    \d z  \\
		&\leq \exp\left(- \frac{\abs {y-x}^2}{8} \right)  	\exp\left(- \frac{\abs {y-x}^2}{8\sigma^2} \right) 	\braket t^n \sigma^{-d} \int_{\IR^d} \frac{1}{\abs{z}^n \vee 1 }    \d z. 		\label{eq:int div2}
	\end{align}
	It is easy to see (e.g. with l'Hospital) that $\sup_{\abs x \geq 1,\sigma > 0} \exp\left(-x^2/(8\sigma^2) \right)	 \sigma^{-d}< \infty$ and the last integral is finite if $n > d$. The estimates \eqref{eq:int div1} and \eqref{eq:int div2} show that the last integral in \eqref{eq:integral to divide} can be estimated by $G_{n,t}(y-x)$ up to a $\sigma$-independent constant for all $\abs{y-x} \geq 1$. For $\abs{y-x} \leq 1$ this trivially holds as well. Thus, the behavior $C_\ph \lesssim \sigma^{-2d}$ follows from the prefactor in \eqref{eq:integral to divide} and because of the $L^1$ normalization of $\ph$, $\nn{\ph^\sigma} \lesssim \sigma^{-d/2}$.
\end{proof}
We now iterate the bound \cref{lem:GNRSest}. For any $N\in\IN$, this yields
\begin{align}
	\label{eq:iteration}
	F_t^\Lambda(f,g)^2  & \leq \sum_{k=1}^{N} S_k(t,f,g) + R_N(t,f,g), \qquad \mbox{where}\\
	\label{def:Sk}
	S_k(t_0,f_0,g) &\coloneqq \left(\prod_{\ell=1}^{k}\int_0^{t_{\ell-1}} \d t_{\ell} \int_{\IR^d}\d x_{\ell} 2t_{\ell-1} K_{t_{\ell-1}-t_{\ell}}(f_{\ell-1},x_{\ell})^2\right) \abs{\braket{\e^{-\i t_k T}\ph_{x_k},g}}^2,\quad f_\ell \coloneqq \ph_{x_{\ell}},\\
	\label{def:RN} 
	R_N(t_0,f_0,g) &\coloneqq \left(\prod_{\ell=1}^{N}\int_0^{t_{\ell-1}} \d t_{\ell} \int_{\IR^d}\d x_{\ell} 2t_{\ell-1}K_{t_{\ell-1}-t_{\ell}}(f_{\ell-1},x_{\ell})^2\right) F_{t_N}^\Lambda(\ph_{x_N},g)^2.
\end{align}
We want to apply \cref{lem:kernelest,cor:kernelGauss} to estimate above expressions.
Therein, we will apply the following simple bound.
\begin{lem}
	\label{lem:basic integral estimate}
	For all $k \in \IN_0$, $\alpha,\beta > 0$, we have
	\[
	\int_0^t \sc{t-s}^{\alpha} \sc s^{\beta} s^k \d s \leq \frac{\sc t^{\alpha+\beta}}{k+1} t^{k+1} .
	\]
\end{lem}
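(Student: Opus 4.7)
The plan is straightforward. The key observation is that on the integration interval $0 \le s \le t$ (assuming $t \ge 0$, which is the case of interest since the integral is written from $0$ to $t$), both $s$ and $t-s$ lie in $[0,t]$, so each satisfies the elementary bound $\braket{\cdot}^2 = 1 + (\cdot)^2 \le 1 + t^2 = \braket{t}^2$. In particular, $\braket{t-s} \le \braket{t}$ and $\braket{s} \le \braket{t}$ for every $s\in[0,t]$.

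Using this pointwise bound inside the integrand, I would pull the factor $\braket{t-s}^\alpha \braket{s}^\beta$ out by its supremum, leaving only the polynomial factor $s^k$ to integrate:
\begin{align*}
\int_0^t \braket{t-s}^{\alpha} \braket{s}^{\beta} s^k \, ds
\;\le\; \braket{t}^{\alpha+\beta} \int_0^t s^k \, ds
\;=\; \braket{t}^{\alpha+\beta} \, \frac{t^{k+1}}{k+1}.
\end{align*}
This is exactly the claimed inequality. There is essentially no obstacle here; the estimate is trivial once one notices the monotonicity of $\braket{\cdot}$ on $[0,t]$. The only thing to be mindful of is that $\alpha,\beta>0$ is used only to make the exponentiation monotone (so that $u \le v$ implies $u^\alpha \le v^\alpha$), and $k\in\IN_0$ is used to evaluate the integral of $s^k$ in closed form.
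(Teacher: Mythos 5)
Your proof is correct and uses essentially the same idea as the paper: both arguments rest on the pointwise bounds $\braket{t-s}\le\braket t$ and $\braket s\le\braket t$ for $s\in[0,t]$, followed by the elementary integration of $s^k$. The paper merely inserts a substitution $s'=s/t$ before applying these bounds, which is a cosmetic difference.
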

\begin{proof}
	Using the substitution $s' = s/t$, we obtain
	\begin{align*}
		\int_0^t \sc{t-s}^{1+2\delta} \sc s^{\beta} s^k \d s &= \int_0^t  (1+ (t-s)^2)^{\alpha/2} (1+ s^2)^{\beta/2} s^k \d s \\
		&= t^{k+1} \int_0^1  (1 +  t^2 (1-s)^2)^{\alpha/2}  (1 + t^2 s^2)^{\beta/2}   s^k \d s \\
		&\leq t^{k+1}  \int_0^1  (1 +  t^2)^{\alpha/2}   (1 + t^2)^{\beta/2}  s^k \d s \\
		&=  \frac{\sc t^{\alpha + \beta}}{k+1} t^{k+1}. \qedhere
	\end{align*}
\end{proof}
We can now derive the desired bounds.
\begin{lem}\label{lem:Skest}
	In the following statements, we use the constants defined in \cref{prop:LRint,cor:kernelGauss}.
	\begin{enumlem}
		\item
		\label{i:main term estimate}
		For all $f,g\in L^2(\IR^d)$, $t\in \IR$ and $k\in\IN$, we have
		\begin{align*}
			&S_k(t,f,g) \\&\leq (\Cob{0})^{k+1} C_{n,W}^{k}  ( C_\ph)^{k-1}  \nn \ph^{4k} 2^k \frac{\sc{t}^{(1+2\delta)(k+1) + d(3k-1)}  t^{2k-1} }{1 \cdot 3 \cdots (2k-1)} \int_{\IR^d} \abs{f(x)}^2 (G_{n,t} \ast \abs{g}^2)(x)  \d x.
		\end{align*}
		\item
		\label{i:remainder term estimate}
		For all $f,g\in L^2(\IR^d)$, $t\in\IR$ and $N\in\IN$, we have
		\begin{align*}
			&R_N(t,f,g) \\& \leq 36 \nn g^2 (C_{n,W} \Cob{0})^{N+1}  (C_\ph)^{N-1} \nn \ph^{4N} 2^N \frac{\sc{t}^{(1+\delta)(N+1) + d(2N-1)}  t^{2N-1} }{1 \cdot 3 \cdots (2N-1)} \int_{\IR^d} (G_{n,t}\ast \abs f^2) (x) \d x.
		\end{align*}
	\end{enumlem}
\end{lem}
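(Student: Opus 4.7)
The plan is to apply the kernel bounds from \cref{lem:kernelest,cor:kernelGauss} and the one-body Lieb--Robinson bound \cref{prop:LRint} to every factor appearing in the integrands \eqref{def:Sk} and \eqref{def:RN}, then iteratively execute first the $x_\ell$-integrations via convolution collapse and finally the $t_\ell$-integrations via \cref{lem:basic integral estimate}. The two parts differ essentially only in what happens at the innermost slot.

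For part \subcref{i:main term estimate}, I would treat the innermost overlap with \cref{prop:LRint},
\[
\Abs{\sc{\e^{-\i t_k T}\ph_{x_k},g}}^2 \leq \Cob{0}\sc{t_k}^{1+2\delta}\int G_{n,t_k}(x_k-z)\abs{g(z)}^2\,\d z,
\]
and then bound the first kernel $K_{t_0-t_1}(f,x_1)^2$ by \cref{lem:kernelest} (since $f_0=f$ is an arbitrary $L^2$-function), while each subsequent kernel $K_{t_{\ell-1}-t_\ell}(\ph_{x_{\ell-1}},x_\ell)^2$, $\ell\ge 2$, is handled pointwise by \cref{cor:kernelGauss} because $f_\ell=\ph_{x_\ell}$ has the required polynomial decay. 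The spatial integrations in $x_k,x_{k-1},\dots,x_1$ are then carried out successively: each step convolves a single $G_{n,\cdot}$ factor against the running convolution that already contains the tail of $\abs{g}^2$, and an application of \cref{lem:simconv} (together with $G_{n,s}\le G_{n,t}$ for $s\le t$) collapses the two kernels into one while releasing a factor $\sc{t}^d$. After $k$ such steps, what remains under the spatial integral is exactly $\int\abs{f(x)}^2(G_{n,t}\ast\abs{g}^2)(x)\,\d x$, multiplied by $\sc{t}^{kd}$ from the collapses.

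Having done this, the integrand of $S_k$ is reduced to a product of functions depending only on the times $t_1,\dots,t_k$. Starting from the innermost time and moving outward, I would apply \cref{lem:basic integral estimate} with the parameters chosen so that the $s^k$-factor tracks the accumulated $2t_{\ell-1}$ prefactors: at the $\ell$-th step one integrates $\int_0^{t_{\ell-1}}\sc{t_{\ell-1}-t_\ell}^{\alpha_\ell}\sc{t_\ell}^{\beta_\ell}t_\ell^{k_\ell}\,\d t_\ell$, where $k_k=0$ initially and $k_\ell$ grows by two at every outer step through the factor $2t_{\ell-1}$. This yields the characteristic double-factorial $1\cdot 3\cdots(2k-1)$ in the denominator, the polynomial $t^{2k-1}$ (after using $t^{2k}\le t^{2k-1}\sc{t}$ once), and a total $\sc{t}$-exponent equal to the sum $(1+2\delta)+(k-1)(1+2\delta+2d)+(1+2\delta+d)+kd=(1+2\delta)(k+1)+d(3k-1)$. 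Collecting the multiplicative constants from \cref{prop:LRint,lem:kernelest,cor:kernelGauss} gives $(\Cob{0})^{k+1}C_{n,W}^k(C_\ph)^{k-1}\nn{\ph}^{4k}$ and completes the proof of \subcref{i:main term estimate}.

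Part \subcref{i:remainder term estimate} proceeds identically except that the innermost factor is $F_{t_N}^\Lambda(\ph_{x_N},g)^2$ rather than an overlap square. Here the natural bound is the trivial CAR estimate: since $\nn{a(h)}=\nn{a^\dagger(h)}=\nn{h}$ and $\nn{\{A,B\}}\le 2\nn{A}\nn{B}$, each of the three anticommutator terms defining $F_s^\Lambda$ is at most $2\nn{\ph}\nn{g}$ after observing $\nn{\tau_s^\Lambda(a^\#(h))}=\nn{h}$, which gives $F_{t_N}^\Lambda(\ph_{x_N},g)\le 6\nn{\ph}\nn{g}$ and hence a constant bound $36\nn{\ph}^2\nn{g}^2$ with no $x_N$-dependence. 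Consequently the $x_N$-integration merely integrates a single $G_{n,\cdot}$ over $\mathbb{R}^d$, yielding $\nn{G_{n,\cdot}}_1\lesssim\sc{t}^d$, and the same happens for $x_{N-1},\dots,x_2$; no residual convolution with $\abs{g}^2$ appears because all the $g$-information has already been absorbed into the trivial bound. The last integration in $x_1$ is left in the form $\int(G_{n,t}\ast\abs{f}^2)(x)\,\d x$ displayed in the statement. The time integrations are completely analogous to those in \subcref{i:main term estimate}, producing again the factor $2^N t^{2N-1}/(1\cdot 3\cdots(2N-1))$ and the corresponding power of $\sc{t}$.

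The main obstacle is the triple bookkeeping of $\sc{t}$-powers, which originate from (a) the kernel bounds themselves, (b) the spatial convolution collapses and (c) the temporal integrations, each governed by a different lemma; one also has to keep track of which constants $\Cob{0}$, $C_{n,W}$, $C_\ph$ and which powers of $\nn{\ph}$ are generated by each application, so that the final expression matches \subcref{i:main term estimate} and \subcref{i:remainder term estimate}. The $\sigma$-dependence in the Gaussian case follows immediately from \cref{prop:LRint} and \cref{rem:gaussdecay}.
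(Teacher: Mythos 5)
Your proposal is correct and follows essentially the same route as the paper: \cref{lem:kernelest} for the outermost kernel, \cref{cor:kernelGauss} for the inner ones, \cref{prop:LRint} (respectively the trivial CAR bound $F^\Lambda\le 6\nn\ph\nn g$) at the innermost slot, then \cref{lem:simconv} to collapse the spatial convolutions at a cost of $\braket t^d$ each and \cref{lem:basic integral estimate} iterated from the inside out to produce the double factorial and $t^{2k-1}$. The exponent bookkeeping $(1+2\delta)(k+1)+d(3k-1)$ and the distribution of the constants $\Cob{0}$, $C_{n,W}$, $C_\ph$, $\nn\ph^{4k}$ match the paper's computation.
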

\begin{proof}[Proof of \subcref{i:main term estimate}]
	Applying \cref{lem:kernelest,cor:kernelGauss,prop:LRint} to \eqref{def:Sk} we find
	\begin{align*}
		S_k(t,f,g)  \leq &(\Cob{0})^{k+1} C_{n,W}^{k} ( C_\ph)^{k-1}  \nn \ph ^2  
		\left( \int_0^{t} \d t_{1} \int_{\IR^d}\d x_1 2t \braket{t-t_1}^{1+2\delta+d}   G_{n,t-t_1} * \abs f^2(x_1) \right) 
		\\ &\times \left(\prod_{\ell=2}^{k} \int_0^{t_{\ell-1}} \d t_{\ell} \int_{\IR^d}\d x_{\ell} 2t_{\ell-1} \braket{t_{\ell-1}-t_{\ell}}^{1+2\delta+2d}   G_{n,t_{\ell-1}-t_{\ell}}(x_{\ell-1} - x_\ell)  \right) 
		\\ &\times \braket{t_k}^{1+2\delta}  G_{n,t} * \abs g^2(x_k)
	\end{align*}
	Now using that $t_k \leq t$ and $t_{\ell-1} - t_\ell \leq t$ for all $\ell$ and writing $t_0 := t$, we find
	\begin{align}
		\nonumber
		S_k(t,f,g)  & \leq (\Cob{0})^{k+1} C_{n,W}^k (C_\ph)^{k-1}  \nn \ph ^{4k}    2^k \left( \prod_{\ell=1}^k  t_{\ell-1} \int_0^{t_{\ell-1}} \d t_\ell \sc{ t_{\ell-1} - t_\ell }^{1+2\delta + (\ell \wedge 2) d} \right) \sc{t_k}^{1+2\delta} \\ &\qquad \times  \int_{\IR^d} (G_{n,t}\ast \abs f^2) (x) (G_{n,t}^{\ast(k-1)}\ast \abs g^2)(x)\d x .\label{eq:Sk estimate}
	\end{align}
	In order to estimate the time integrals in \eqref{eq:Sk estimate}, we 
	invoke \cref{lem:basic integral estimate} and find
	\begin{align}
		\nonumber
		&\left(\prod_{\ell=1}^k  t_{\ell-1} \int_0^{t_{\ell-1}} \d t_\ell \sc{ t_{\ell-1} - t_\ell }^{1+2\delta+ (\ell \wedge 2) d} \right) \sc{t_k}^{1+2\delta} \\
		\nonumber
		&\leq  \left(\prod_{\ell=1}^{k-1}  t_{\ell-1}  \int_0^{t_{\ell-1}} \d t_\ell \sc{ t_{\ell-1} - t_\ell }^{1+2\delta+ (\ell \wedge 2) d} \right) \frac{\sc {t_{k-1}}^{2(1+2\delta) + 2d}}{1} t_{k-1}^{1}  \\
		\nonumber
		&\leq  \left(\prod_{\ell=1}^{k-2}  t_{\ell-1}  \int_0^{t_{\ell-1}} \d t_\ell \sc{ t_{\ell-1} - t_\ell }^{1+2\delta + (\ell \wedge 2) d} \right) \frac{\sc {t_{k-1}}^{3(1+2\delta) + 4d}}{1\cdot 3} t_{k-2}^{3}  \\
		&\leq \cdots \leq \frac{\sc{t}^{(1+2\delta)(k+1) + d(2k-1)}  t^{2k-1} }{1 \cdot 3 \cdots (2k-1)}. \label{eq:time estimate}
	\end{align}
	For the spatial integral in \eqref{eq:Sk estimate}, we use \cref{lem:simconv} $k$ times to obtain
	\begin{align}
		\int_{\IR^d} (G_{n,t}\ast \abs f^2) (x) (G_{n,t}^{\ast(k-1)}\ast \abs g^2)(x)\d x &= \int_{\IR^d}\abs{f(x)}^2 (G_n^{\ast k}\ast \abs g^2 )(x) \d x \nonumber\\
		&\leq C_n^k \braket t^{kd} \int_{\IR^d}\abs{f(x)}^2 (G_n \ast \abs g^2 )(x) \d x,
		\label{eq:spatial estimate}
	\end{align}
	where $C_n$ denotes the same solely $n$-dependent constant as before. 
	Inserting \cref{eq:time estimate,eq:spatial estimate} into \cref{eq:Sk estimate} proves the statement.
\end{proof}
\begin{proof}[Proof of \subcref{i:remainder term estimate}]
	Using the trivial estimate $F_s^\Lambda(\ph_x,g)\le 6 \nn \ph \nn g$ and otherwise proceeding as in the previous proof, we find
	\begin{align*}
		R_N(t,f,g)  &\leq (\Cob{0} C_{n,W})^{N} (C_\ph)^{N-1}   
		\left( \int_0^{t} \d t_{1} \int_{\IR^d}\d x_1 2t \braket{t-t_1}^{1+2\delta + d}   G_{n,t-t_1} * \abs f^2(x_1) \right) 
		\\ &\qquad \times \left(\prod_{\ell=2}^{N} \int_0^{t_{\ell-1}} \d t_{\ell} \int_{\IR^d}\d x_{\ell} 2t_{\ell-1} \braket{t_{\ell-1}-t_{\ell}}^{1+2\delta + 2d}   G_{n,t_{\ell-1}-t_{\ell}}(x_{\ell-1} - x_\ell)  \right) 
		36 \nn g^2 \\
		& \leq  36 \nn \ph^2 \nn g^2(\Cob{0} C_{n,W})^{N} (C_\ph)^{N-1}  \nn \ph^{4N}   2^N \left(\prod_{\ell=1}^N  t_{\ell-1} \int_0^{t_{\ell-1}} \d t_\ell \sc{ t_{\ell-1} - t_\ell }^{1+2\delta + (\ell \wedge 2) d} \right) \\ &\qquad \times \sc{t_N}^{1+2\delta}  \int_{\IR^d} (G_{n,t}\ast \abs f^2) (x) \d x.
	\end{align*}
	Together with \eqref{eq:time estimate} this shows the claim.
\end{proof}
Combining these observations, we can now prove the main result of this \lcnamecref{sec:manybody}.
\begin{proof}[\textbf{Proof of \cref{thm:manybody}}]
	\cref{i:remainder term estimate} yields that $R_N(t,f,g) \to 0$ as $N \to \infty$, i.e., by the iteration \eqref{eq:iteration}, we find
	\[
	F_t^\Lambda(f,g)^2  \leq  \sum_{k=0}^\infty S_{k+1}(t,f,g).
	\]
	The bound of the main term \cref{i:main term estimate} can be then written and further estimated as
	\begin{align*}
		S_{k+1}(t,f,g) &\leq (\Cob{0}  )^{k+2} C_{n,W}^{k+1} C_\ph^{k} \nn \ph^{4k} 2^{k+1}   \frac{\sc{t}^{(1+2\delta)(k+2) + d(3k+2)}  t^{2k+1} }{1 \cdot 3 \cdots (2k+1)} \int_{\IR^d} \abs{f(x)}^2 (G_{n,t} \ast \abs{g}^2)(x)  \d x \\
		&\leq (\Cob{0} )^2  C_{n,W}  t  \braket{t}^{2(1+2\delta+d)} \frac{2 (\Cob{0}  C_{n,W} C_\ph \nn \ph^4  \braket{t}^{1+2\delta+3d} t^2  )^k}{ k! } \int_{\IR^d} \abs{f(x)}^2 (G_{n,t} \ast \abs{g}^2)(x)  \d x.
	\end{align*}
	Summing up these estimates proves the general statement. The case of Gaussians follows from $\nn{\ph^\sigma} \lesssim \sigma^{-d/2}$ and the bound on $\Cob0$ in \cref{prop:LRint} and \cref{rem:gaussdecay}.
\end{proof}


\section{Applications}\label{sec:applications}

In this \lcnamecref{sec:applications}, we derive the applications of \cref{thm:manybody} presented in \cref{sec:results}.

\subsection{Infinite Volume Limit}

As a first application,
we mimic the proof of \cite{GebertNachtergaeleReschkeSims.2020} for the existence of an infinite volume dynamics.
\begin{proof}[\textbf{Proof of \cref{th:infinite volume dynamics}}]
	First assume that $f$ is compactly supported. Furthermore, notice that for any $k \geq l$,
	\begin{align}
		\label{eq:pert operator}
		H_{\Lambda_k} - H_{\Lambda_l} = 	W_{\Lambda_k} - W_{\Lambda_l} = \int_{\Lambda_k \times \Lambda_k \setminus \Lambda_l \times \Lambda_l} W(x-y) \ad(\ph_x) \ad(\ph_y ) a(\ph_y) a(\ph_x ) \d(x,y)
	\end{align}
	is bounded. Therefore, we can perceive the dynamics induced by $\tau_t^{\Lambda_k}$  as a perturbation of $\tau_t^{\Lambda_l}$ with the bounded perturbation \eqref{eq:pert operator}. Thus,  the generating integral equation for the corresponding Dyson series, cf. the proof in \cite[Prop. 5.4.1]{BratteliRobinson.1996}, reads
	\begin{align}
		\label{eq:tau difference}
		\tau_t^{\Lambda_k}(a(f)) = \tau_t^{\Lambda_l}(a(f)) + \i \int_0^t \tau_s^{\Lambda_k}\left([ W_{\Lambda_k} - W_{\Lambda_l}, \tau^{\Lambda_l}_{t-s}(a(f)) ]\right)  \d s .
	\end{align}
	For simplicity assume that $\nn \ph = 1$. From \cref{prop:LRint,thm:manybody} we infer that for $n = n_W \wedge \frac{n_V}{2}$,
	\begin{align}
		\nn{\{ a^\#(\ph_x ) , \tau^{\Lambda_l}_{t}(a(f)) \}}^2 &\leq
		2 F^\Lambda_t( \ph_{x},f)^2 + 2 \nn{\{  a^\#(f), \ttzs( a^\# (\ph_{x}) ) \}}^2 \nonumber
		\\
		&=
		2 F^\Lambda_t( \ph_{x},f)^2 + 2 \nn{\{  a^\#(f),  a^\# ( \e^{\pm \i t \dG(T)} \ph_{x})\}}^2 \nonumber
		\\ 
		&\leq 2( \Cob{0} 
		\braket{t}^{1+2\delta} + \Ximb(t) )   G_{n,t}(d(\supp f,x))  \nn{f}^2.\label{eq:anti commutator estimate}
	\end{align}
	Inserting \eqref{eq:pert operator} into \eqref{eq:tau difference}, using the commutator product rule and \eqref{eq:comm anticomm} to expand the commutators as in the proof of \cref{cor:commutators}, subsequently estimating with \eqref{eq:anti commutator estimate} and using the symmetry of $W$, we finally arrive at
	\begin{align*}
		&\nn{\tau_t^{\Lambda_k}(a(f)) - \tau_t^{\Lambda_l}(a(f)) } \leq \abs t \nn{ [ W_{\Lambda_k} - W_{\Lambda_l}, \tau^{\Lambda_l}_{t-s}(a(f)) ] } \\
		&\leq \abs t \int_{\Lambda_k \times \Lambda_k \setminus \Lambda_l \times \Lambda_l} \abs{ W(x-y) } \nn{ [ \ad(\ph_x) \ad(\ph_y ) a(\ph_y) a(\ph_x ) , \tau^{\Lambda_l}_{t-s}(a(f))] }  \d(x,y) \\
		&\leq \abs t \int_{\Lambda_k \times \Lambda_k \setminus \Lambda_l \times \Lambda_l} \abs{ W(x-y) } \bigg( \nn{\{ a(\ph_x ) , \tau^{\Lambda_l}_{t-s}(a(f)) \}} + \nn{ \{ 	a(\ph_y ) , \tau^{\Lambda_l}_{t-s}(a(f)) \} }   \\ &\qquad + \nn{\{ 	\ad(\ph_y ) , \tau^{\Lambda_l}_{t-s}(a(f)) \}} + \nn{\{ 	\ad(\ph_x ) , \tau^{\Lambda_l}_{t-s}(a(f)) \}} \bigg) \d(x,y) \\
		&\leq 4 \sqrt{ 2( \Cob{0} 
			\braket{t}^{1+2\delta} + \Ximb(t) ) }   \abs t \nn{f}  \int_{\Lambda_k \times \Lambda_k \setminus \Lambda_l \times \Lambda_l}  \abs{ W(x-y) }  G_{n,t}(\dist(\supp f,x))^{1/2}  \d(x,y) .
	\end{align*}
	Now, since $n \wedge n_W > 2d$ by assumption, we have $W \in L^1(\IR^d)$ and $G_{n,t}^{1/2} \in L^1(\IR^d)$, so the integrand in the last integral is in $L^1(\IR^d \times \IR^d)$ and the bound is uniform for bounded $\abs t$. This shows that $(\tau_t^{\Lambda_k}(a(f)) )_k$ is indeed a Cauchy sequence uniformly for $t$ in a bounded interval. One can then exactly proceed as in  \cite[pp.~3629-3630]{GebertNachtergaeleReschkeSims.2020} to finish the argument.
\end{proof}

\subsection{Clustering}
Let us now prove our clustering result, adapting the method introduced in \cite{NachtergaeleSims.2006} to the continuous setting.

\begin{proof}[\textbf{Proof of \cref{th:exponential clustering}}]
	Let $\gs$ be the ground state of $H_\Lambda^\sigma$ with eigenvalue $\gse$ and eigenprojection $\gsP$.	By the same abstract reasoning as in \cite[p.~125]{NachtergaeleSims.2006}, we find 
	\begin{align*}
		\abs{\sc{\gs,A \tau_{\i b}(B) \gs}} \leq \limsup_{L\to\infty} \Abs{\frac{1}{2\pi \i} \int_{-L}^L \frac{\sc{\gs,A \tau_{t}(B) \gs}}{t - \i b}  \d t}.
	\end{align*}
	for any observable $B$ with $\gsP B \gs = 0$.
	We can then estimate
	\[
	\Abs{\frac{1}{2\pi \i} \int_{-L}^L \frac{\sc{\gs,A \tau_{t}(B) \gs}}{t - \i b}  \d t} \leq e^{-\alpha b^2}( I_1 + I_2^\pm + I_3 ),
	\]
	where
	\begin{align*}
		I_1 &:= \Abs{- \frac{1}{2\pi \i} \int_{-L}^L \frac{ \sc{\gs, \tau_{t}(B) A \gs} e^{-\alpha t^2} }{t - \i b}  \d t}, \\
		I_2^\pm &:= \Abs{\frac{1}{2\pi \i} \int_{-L}^L \frac{\sc{\gs,[ A, \tau_{t}(B)]_\pm \gs } e^{-\alpha t^2} }{t - \i b}  \d t}, \\
		I_3 &:= \Abs{ - \frac{1}{2\pi \i} \int_{-L}^L \frac{\sc{\gs,A \tau_{t}(B) \gs} (e^{\alpha b^2} - e^{-\alpha t^2})  }{t - \i b}  \d t},
	\end{align*}
	for both the commutator $[\cdot,\cdot]_- = [\cdot,\cdot]$ or the anticommutator  $[\cdot,\cdot]_+ = \{\cdot,\cdot\}$. 
	From \cite[eq. (40) and (47)]{NachtergaeleSims.2006} it directly follows that
	\begin{align}
		I_1 , I_3 \leq \frac{\nn A \nn B}{2} \exp\left(-\frac{\gamma^2}{4\alpha}\right).
	\end{align}
	In order to estimate $I_2^\pm$, it suffices by linearity and normalization, to consider normalized monomials in the creation and annihilation operators, so we can assume without loss of generality that $A = a_1 \ldots a_N$, and $B = b_1$ or $B = b_1 b_2$, where $a_i = a^\#(f_i)$, $f_i \in L^2(\IR^d)$ with $\nn{f_i} = 1$ and $b_j = a^\#(\ph_{x_j})$. In fact, let us restrict to the case $N > 1$ and $B = b_1 b_2$, since the other cases are easier with the only difference that one has to use the commutator-anticommutator expansion \cref{eq:comm anticomm} instead of the commutator expansion in the following.
	We have
	\begin{align*}
		[A, \ttls(B)] &= [A,\ttls(b_1)]\ttls(b_2) + \ttls(b_1)  [A,\ttls(b_2)] = I_{2,1} + I_{2,2},
	\end{align*}
	where
	\begin{align*}
		I_{2,1} &:= [A,\ttls(b_1) - b_1]\ttls(b_2)   + \ttls(b_1)  [A,\ttls(b_2) - b_2], \\ 	I_{2,2} &:=  [A,b_1]\ttls(b_2) + \ttls(b_1)  [A, b_2].
	\end{align*}
	First we treat $I_{2,1}$ and start with 
	\begin{align*}
		\nn{I_{2,1}} &\leq \sum_{j=1}^2 \left( \nn{[A,\ttls(b_j) - \ttzs(b_j)]} + \nn{[A,\ttzs(b_j) - b_j]} \right).
	\end{align*}
	To estimate both terms, we use \cref{cor:commutators,prop:LRint}, 
	\begin{align}
		\nn{[A,\ttls(b_j) - \ttzs(b_j)]}^2  &\leq N	\Ximb(t) \sum_{i=1}^N \int  G_{n,t}(x-y) \abs{f_i(x)}^2 \abs{\ph_{x_j}(y)}^2 \d(x,y),  \label{eq:int1} \\
		\nn{[A,\ttzs(b_j) - b_j]}^2 &\leq \Cob{1} N \abs t^2 \braket{t}^{2\delta +1} \sum_{i=1}^{N} \int G_{n,t}(x_j - y) \abs{f_i(y)}^2 \d y. \label{eq:int2} 
	\end{align}
	Since $\abs{x-y} \geq \dmin$ and $\abs{x_j - y} \geq \dmin$ in both integrals in \eqref{eq:int1} and \eqref{eq:int2}, and because of the normalization, both integrals can be estimated by 
	\begin{align*}
		\left( 1 \wedge \frac{\braket t}{\dmin} \right)^n .
	\end{align*}
	Thus, there exists $C > 0$ such that for all $s > 0$,
	\begin{align}
		\int_{-s}^{s}  \frac{ \abs{ \sc{\gs,I_{2,1} \gs} } e^{-\alpha t^2} }{ \abs{t}}  \d t &\leq  C N \int_{-s}^{s} \left(1 \wedge \frac{\braket t}{\dmin} \right)^{n/2}  \left( \abs t^{-1/2} \sqrt{\frac{\Ximb(t)}{\abs t} } +  \braket t^{\delta + 1/2} \right) \d t \nonumber \\
		&\leq  C N  \left(1 \wedge \frac{\braket s}{\dmin} \right)^{n/2}  \left(  4 \sqrt s \sqrt{\frac{\Ximb(s)}{\abs s} }   +   2s \braket s^{\delta + 1/2}  \right) . \label{eq:ss integral}
	\end{align}
	Furthermore, for the part $\abs t \geq s$ of the integral we use the same bound as in \cite{NachtergaeleSims.2006},
	\begin{align}
		\int_{\abs t \geq s}  \frac{ e^{-\alpha t^2}}{\abs t} \d t 
		\leq \frac{2}{s} \int_s^\infty e^{-\alpha t^2} \d t
		= \frac{2}{s} \int_0^\infty e^{-\alpha(t+ s)^2} \d t 
		\leq   \frac{2 e^{-\alpha s^2}}{s} \int_0^\infty e^{-\alpha t^2 } \d t  = \frac{\sqrt \pi}{s \sqrt \alpha} e^{- \alpha s^2}.
		\label{eq:tail estimate}
	\end{align}
	Combining \eqref{eq:ss integral} and \eqref{eq:tail estimate}, and using the trivial norm bound $\nn{I_{2,1}} \leq 8$, we get for another constant $C > 0$ and all $s > 0$,
	\begin{align}
		\nonumber
		&\Abs{\frac{1}{2\pi \i} \int \frac{\sc{\gs, I_{2,1} \gs } e^{-\alpha t^2} }{t - \i b}  \d t}  \leq  \frac{1}{2\pi} \int_{-s}^{s}  \frac{ \abs{ \sc{\gs,I_{2,1} \gs} } e^{-\alpha t^2} }{ \abs{t}}  \d t + \frac{1}{2\pi} \int_{\abs t \geq s}  \frac{8}{\abs t} e^{-\alpha t^2} \d t \\
		&\leq C \left(  N  \left(1 \wedge \frac{\braket s}{\dmin} \right)^{n/2}  \left(  4 \sqrt s \sqrt{\frac{\Ximb(s)}{\abs s} }   +   2s \braket s^{\delta + 1/2}  \right)  + \frac{1}{s \sqrt \alpha} e^{- \alpha s^2} \right). 	\label{eq:I21 est}
	\end{align}
	It remains to estimate the term involving $I_{2,2}$. 
	Since zero is the ground state energy, we have
	\begin{align*}
		\sc{\gs, [A,b_1] \ttls(b_2)   \gs } =\sum_{i=1}^N [a_i, b_1]  \sc{ \gs,   \widehat{A_i}  e^{\i H^\sigma_\Lambda  t } b_2 \gs },
	\end{align*}
	where $\widehat{A_i} := a_1 \ldots a_{i-1} a_{i+1} \ldots a_N$,
	With the methods in \cite{NachtergaeleSims.2006} (cf. Lemma 1 or eq. (40) with $\gamma = 0$), we can estimate  
	\begin{align*}
		\limsup_{L\to\infty} \Abs{ \frac{1}{2\pi \i} \int_{-L}^{L} \sc{ \gs,   \widehat{A_i}  e^{\i H^\sigma_\Lambda  t } b_2 \gs } } \leq \frac{1}{2}.
	\end{align*}
	Doing the same with the second term of $I_{2,2}$, we arrive at
	\begin{align}
		\label{eq:I22 est}
		\limsup_{L\to\infty} \Abs{ \frac{1}{2\pi \i} \int_{-L}^{L} \sc{ \gs,  I_{2,2} \gs } \d t } \leq \frac{1}{2} \sum_{i=1}^{N} \left( \abs{\sc{f_i,\ph_{x_1}  }} +  \abs{\sc{f_i,\ph_{x_2}  }} \right) \leq C N  \left(1 \wedge \frac{1}{\dmin} \right)^{n/2} 
	\end{align}
	for some constant $C > 0$.
	Thus, using \eqref{eq:I21 est} and \eqref{eq:I22 est},  choosing some $\delta \in (0,1)$, we obtain
	\begin{align}
		I^\pm_2 &\leq C \left(   N  \left(1 \wedge \frac{\braket s}{\dmin} \right)^{n/2}  \left(  4 \sqrt s \sqrt{\frac{\Ximb(s)}{\abs s} }   +   2s \braket s^{\delta + 1/2}  \right)  + \frac{1}{s \sqrt \alpha} e^{- \alpha s^2} \right) \\
		&\leq C' \left(  N \left(1 \wedge \frac{1}{\dmin} \right)^{n/2}  \exp(  s^{3+3d} ) + \frac{1}{s \sqrt \alpha} e^{- \alpha s^2}  \right)
	\end{align}
	for constants $C,C' > 0$. 
	Now we set $\alpha = \frac{\gamma}{2s}$ and $s^{3+3d} = \frac{n}{4} \log( \dmin+1)$. Then there exists a constant $C$ such that for all $\dmin >0$,
	\begin{align*}
		I_2^\pm \leq  \frac{C N}{\sqrt \gamma \wedge 1} \exp  \left( - \frac{\gamma \wedge 1 }{2}    \left(\frac{n}{4} \log (\dmin+1) \right)^{1/(3+3d)} \right).
	\end{align*}
	This shows the claim. 
	%
	%
\end{proof}
\subsection{Conditional Expectation}
\label{sec:conditional expectation}
Now, we introduce the conditional expectation announced in the end of \cref{sec:results}.

Let $\hk \subseteq L^2(\IR^d)$ be a closed subspace and let $(\fXc{n})$ be an orthonormal basis of $\hk^\perp$.  
We define the conditional expectation in its Kraus representation. To this end, we  introduce the following unitary operators
\begin{align*}
	u_n^{(0)} = \Id, \quad u_n^{(1)} = \ad(\fXc{n}) + a(\fXc{n}), \quad  u_n^{(2)} = \ad(\fXc{n}) - a(\fXc{n}), \quad u_n^{(3)} = \Id - 2 \ad(\fXc{n}) a(\fXc{n}).
\end{align*}
For $\underline{\alpha} = (\alpha_1, \ldots, \alpha_N) \in \{0,1,2,3\}^N$, let 
\[
u(\underline \alpha) := u_1^{(\alpha_1)} \cdots u_n^{(\alpha_n)}.
\]
For any $A \in \Af$, the CAR algebra over $L^2(\IR^d)$, and $N \in \IN$ we write
\[
\IE_\hk^N(A) := \frac{1}{4^N} \sum_{\underline \alpha \in  \{0,1,2,3\}^N}   u(\underline \alpha)^* A u(\underline\alpha).
\]
\begin{ex}
	\label{ex:L2X}
	The primary example we have in mind for the choice of $\hk$ is to obtain a conditional expectation with respect to a region $X \subseteq \IR^d$, similarly to the discrete setting in \cite{NachtergaeleSimsYoung.2018}. This means, for a measurable $X \subseteq \IR^d$ we consider $\hk = L^2(X) \subseteq L^2(\IR^d)$.
\end{ex}
\begin{lem}
	The limit $\lim_{N\to\infty}   \IE_\hk^N(A)$ exists in operator norm. 
\end{lem}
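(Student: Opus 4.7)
My strategy is to factorise $\IE_\hk^N$ into single-mode averages, establish eventual stabilisation on a dense subalgebra, and then conclude norm convergence by contractivity. Regrouping the sum in the definition by the last index of $\underline\alpha$, I would write $\IE_\hk^N = E_N\circ E_{N-1}\circ\cdots\circ E_1$, where
\[ E_n(A) \coloneqq \tfrac14 \sum_{\alpha=0}^{3} (u_n^{(\alpha)})^* A\, u_n^{(\alpha)}. \]
Each $E_n$ is a convex combination of unitary conjugations, so $\norm{E_n}\le 1$, and hence $\norm{\IE_\hk^N}\le 1$ as a linear map on $\Af$ for every $N$.

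Next, I would introduce the $\ast$-subalgebra $\cB_M$ generated by $\{a(g),\ad(g) : g\in\hk\}\cup\{a(\fXc n),\ad(\fXc n) : 1\le n\le M\}$ and set $\cB := \bigcup_{M} \cB_M$. The claim is that $\cB$ is norm-dense in $\Af$: for any $f\in L^2(\IR^d)$, truncating its orthogonal expansion against $\hk$ and the basis $(\fXc n)$ produces an approximant $f_M\to f$ in $L^2$, so by the isometry $\norm{a(\cdot)}=\norm{\cdot}$ the operator $a(f_M)\in\cB_M$ converges to $a(f)$ in operator norm; the full density then follows by closing under linear combinations and products.

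The key algebraic input to verify is that, for $m>M$, the map $E_m$ acts as the identity on the even part of $\cB_M$ and annihilates its odd part. Any CAR monomial $B\in\cB_M$ of degree $d$ satisfies $B\,a^\#(\fXc m) = (-1)^d a^\#(\fXc m)\,B$ because $\fXc m$ is orthogonal to every argument of $B$; hence $u_m^{(1)}$ and $u_m^{(2)}$ commute with even $B$ and anticommute with odd $B$, while $u_m^{(3)} = \Id - 2\ad(\fXc m)a(\fXc m)$ commutes with $B$ in either case. Averaging gives $E_m(B^+ + B^-) = B^+$. Since each $E_n$ with $n\le M$ sends $\cB_M$ into itself and preserves parity, one has $\IE_\hk^M(A)\in\cB_M$ for $A\in\cB_M$, so applying $E_{M+1},E_{M+2},\dots$ strips off the odd part exactly once. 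Consequently $\IE_\hk^N(A) = \IE_\hk^{M+1}(A)$ for every $N>M$, i.e., the sequence is eventually constant on $\cB_M$.

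For general $A\in\Af$ and $\eps>0$, I would then pick $A_\eps\in\cB_M$ with $\norm{A-A_\eps}<\eps$, so that contractivity together with the stabilisation just established yields
\[ \norm{\IE_\hk^N(A)-\IE_\hk^{N'}(A)} \le 2\eps + \norm{\IE_\hk^N(A_\eps)-\IE_\hk^{N'}(A_\eps)} = 2\eps \qquad\text{for all } N,N'>M, \]
showing $(\IE_\hk^N(A))_N$ is Cauchy in operator norm and therefore convergent. The main obstacle I expect is the CAR-algebraic computation underpinning the action of $E_m$ on $\cB_M$ for $m>M$; once that parity calculation is in hand, the rest is routine density and contractivity bookkeeping.
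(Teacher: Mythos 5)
Your proof is correct and follows essentially the same route as the paper's: both establish that $\IE_\hk^N$ is eventually constant on a norm-dense $*$-subalgebra of polynomials (via the parity computation showing the single-mode average over $u_m^{(0)},\dots,u_m^{(3)}$ fixes even monomials and annihilates odd ones whose arguments are orthogonal to $\fXc{m}$), and then conclude by contractivity and density. Your explicit factorisation $\IE_\hk^N = E_N\circ\cdots\circ E_1$ is a slightly tidier way of organising the same computation, not a different argument.
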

\begin{proof}
	First notice that for any monomial $A = a^\#(g_1) \ldots a^\#(g_k)$ with an odd number $k$ of creation and annihilation operators in elements $g_j \in \hk$, we have
	\[
	\sum_{\alpha = 0}^3 (u_n^{(\alpha)})^* A u_n^{(\alpha)} = 0,
	\]
	since $A$ commutes with $u_n^{(0)}$ and $u_n^{(3)}$ and anti-commutes with the other two. 
	Similarly, we have 
	\[
	\frac{1}{4} \sum_{\alpha = 0}^3 (u_n^{(\alpha)})^* A u_n^{(\alpha)} = A
	\]
	when $k$ is even. 
	Furthermore, if $\{g_1, \ldots, g_k \} \subseteq \{\fXc{1}, \ldots, \fXc{N}\}$ with $N > k$, we see by the same argument that
	\[
	\frac{1}{4^N}  \sum_{\underline \alpha \in  \{0,1,2,3\}^N}   u(\underline \alpha)^* A u(\underline\alpha) = \begin{cases}
		\frac{1}{4^k}  \sum_{\underline \alpha \in  \{0,1,2,3\}^k}   u(\underline \alpha)^* A u(\underline\alpha) &: k \text{ is even, } \\
		0 &: k \text{ is odd.}
	\end{cases}
	\]
	This shows that for any monomial $P$ of creation and annihilation operators in an ONB of $L^2(\IR^d)$, there exists a $N_P$ such that for $N \geq N_P$, we have $\IE_\hk^N(P) = \IE_\hk^{N_P}(P)$.
	
	Now let $A \in \Af$. Then there exists a sequence of polynomials in creation and annihilation operators such that $A_m \to A$. Let $\eps > 0$ and choose $m_0$ such that $\nn{A - A_m} < \eps$ for $m \geq m_0$. Write $A_m$ as a finite sum of monomials $P_1, \ldots, P_k$ and let $N_0 \geq N_{P_1}, \ldots, N_{P_k}$. Then by linearity of $\IE_\hk$, we have
	\begin{align}
		\label{eq:ev constant}
		\IE_\hk^N(A_m) =  \IE_\hk^{N_0}(A_m)
	\end{align}
	for all $N \geq N_0$. For $N,M \geq N_0$, we can write
	\[
	\IE_\hk^N(A) - \IE_\hk^M(A) =   \IE_\hk^N(A - A_m) - \IE_\hk^M(A - A_m) + \IE_\hk^N(A_m) - \IE_\hk^M(A_m).
	\]
	Then the last difference vanishes because of \eqref{eq:ev constant} and the first ones are bounded by $2 \eps$. This shows that $(\IE_\hk^N(A))_N$ is a Cauchy sequence in operator norm and therefore finishes the proof. 
\end{proof}
Let $(g_n)$ be an ONB of $L^2(\IR^d)$. 
On $\Af$ we have a unique quasi-free tracial state given by
\begin{align*}
	\omega^\tr( \ad(g_{i_n}) \ldots \ad(g_{i_1}) a(g_{i_1}) \ldots a(g_{i_n}) ) &= \frac{1}{2^n}, \\
	\omega^\tr( \ad(g_{i_m}) \ldots \ad(g_{i_1}) a(g_{j_1}) \ldots a(g_{j_n}) ) &= 0 \qquad  \text{ for } \{ i_1, \ldots, i_m \} \not= \{ j_1, \ldots, j_n \}.
\end{align*}
Let us consider the $C^*$-subalgebras of $\Af$ given by
\begin{align*}
	\Af_{\hk^\perp} = C^*( \{ a(\fXc{n}) : n \in \IN \} ), \qquad \Af_\hk = C^*( \{ a(g) : g \in \hk \} ),
\end{align*}
as well as the $C^*$-subalgebras $\Af^\pm$ and $\Af^\pm_\hk$ of $\Af$ and $\Af_\hk$, respectively, generated by monomials consisting of an {even} (+) and odd (-) number of elements.

We have the following properties of our conditional expectation.
\begin{lem}
	\label{th:tomiyama prep}
	We have
	\begin{enumlem}
		\item\label{th:tomiyama prep:1} $\nn{\IE_\hk} = 1$,
		\item\label{th:tomiyama prep:2} \label{it:IE of Xc}
		$\IE_\hk(A) = \omega^\tr(A) \Id$ 	for all $A \in \Af_{\hk^\perp}$, 
		\item\label{th:tomiyama prep:3} $(\IE_\hk)^2 = \IE_\hk$,
		\item\label{th:tomiyama prep:4} $\IE_\hk(\Af^+) = \Af^+_\hk$. 
	\end{enumlem}
	In particular, $\IE_\hk$ acts as a norm one projection from the $C^*$-algebra $\Af^+$ to the $C^*$-subalgebra $\Af^+_\hk$.
\end{lem}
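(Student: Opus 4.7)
The plan is to exploit the graded tensor-product decomposition $\Af \cong \Af_\hk \hat\otimes \Af_{\hk^\perp}$ together with the observation that the four Kraus operators $u_n^{(0)},u_n^{(1)},u_n^{(2)},u_n^{(3)}$ act on the two-dimensional fermionic Fock space of mode $n$ as unitaries whose conjugation-average is the normalized trace. Promoting this single-mode calculation through the tensor product will yield all four assertions.

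Concretely, I first decompose an arbitrary monomial $P\in\Af$ as $P=\pm BC$ with $B\in\Af_\hk$ and $C\in\Af_{\hk^\perp}$ homogeneous, and use the graded-commutation relation
\[
u(\underline\alpha)^*\,B \;=\; (-1)^{\abs{B}\,\abs{u(\underline\alpha)}}\,B\,u(\underline\alpha)^*,
\]
where $\abs{u(\underline\alpha)}\in\{0,1\}$ is the parity of the number of $\alpha_i\in\{1,2\}$, in order to factor $\IE_\hk^N(BC)=B\cdot \Phi^N_{\abs B}(C)$. Here $\Phi^N_0$ is the straight average over $\underline\alpha$ and $\Phi^N_1$ carries the extra weight $(-1)^{\abs{u(\underline\alpha)}}$. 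Because Kraus operators at distinct modes graded-commute, both $\Phi^N_0$ and $\Phi^N_1$ factor as products of their single-mode analogues, and a direct computation in the single-mode basis $\{\Id,a_n,\ad_n,\ad_n a_n\}$ gives
\[
\tfrac{1}{4}\sum_{\alpha=0}^{3}(u_n^{(\alpha)})^* A_n\,u_n^{(\alpha)} \;=\; \tfrac{1}{2}\tr_{\IC^2}(A_n)\,\Id,
\]
while the twisted single-mode average vanishes on $\Id,a_n,\ad_n$ and equals $\ad_n a_n-\tfrac12$ on the number operator. In particular, since any monomial $C$ acts as $\Id$ on all but finitely many modes, $\Phi^N_1(C)=0$ for $N$ exceeding the support of $C$.

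From here all four items follow directly. For \subcref{th:tomiyama prep:2}, taking $B=\Id$ gives $\IE_\hk(A)=\Phi_0(A)$, which by the single-mode trace identity equals $\omega^\tr(A)\,\Id$ on monomials and extends by continuity. For \subcref{th:tomiyama prep:4}, a monomial in $\Af^+$ satisfies $\abs B\equiv\abs C\pmod{2}$; the even-even case yields $\IE_\hk(P)=\omega^\tr(C)\,B\in\Af_\hk^+$, and the odd-odd case gives $\IE_\hk(P)=B\cdot\Phi_1(C)=0$. Thus $\IE_\hk(\Af^+)\subseteq\Af_\hk^+$, and the reverse inclusion is immediate from $\IE_\hk$ acting as the identity on $\Af_\hk^+$ (all four single-mode straight averages return the operator). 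Statement \subcref{th:tomiyama prep:3} follows at once: on the even part the image lies in $\Af_\hk^+$ where $\IE_\hk$ is the identity, and on the odd part a parallel computation (using $\omega^\tr|_{\Af^-}=0$ and $\Phi_1=0$) shows $\IE_\hk$ vanishes identically. Finally, \subcref{th:tomiyama prep:1} is a consequence of each $\IE_\hk^N$ being a convex combination of unitary conjugations, giving $\norm{\IE_\hk^N}\leq 1$, paired with the matching lower bound $\IE_\hk(\Id)=\Id$.

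The only subtlety requiring care is the bookkeeping of Koszul signs when commuting an element of $\Af_\hk$ past $u(\underline\alpha)\in\Af_{\hk^\perp}$. These signs are precisely what converts the straight average into its twisted counterpart in the odd-$B$ case, and the elementary vanishing of the twisted single-mode average on $\Id$ is the mechanism that forces $\IE_\hk$ to annihilate the entire odd subalgebra $\Af^-$; beyond this, the argument is mechanical.
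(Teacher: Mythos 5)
Your argument is correct and, for parts (i), (ii), and (iv), follows essentially the same route as the paper: the decomposition of monomials as $\pm BC$ with $B\in\Af_\hk$, $C\in\Af_{\hk^\perp}$, the single-mode average computations, and the Koszul-sign bookkeeping that turns the straight average into a twisted one when $B$ is odd all appear in the paper's proof (there the twist is written via conjugation with the parity operator $(-1)^{\Nf_{\hk^\perp}}$). The one genuine divergence is part (iii): the paper proves idempotency directly by observing that the conjugations $\Ad_N^{(\underline\alpha)}$ form a group under composition, so that averaging twice over the group equals averaging once, already at the level of each finite $\IE_\hk^N$; you instead deduce it from the range description, namely that $\IE_\hk$ restricts to the identity on $\Af_\hk^+$, maps $\Af^+$ into $\Af_\hk^+$, and annihilates $\Af^-$. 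Your route is valid --- the vanishing on $\Af^-$ does follow, since an odd monomial $BC$ has either $C$ odd (so the straight average is $\omega^\tr(C)\Id=0$) or $B$ odd and $C$ even (so the twisted average picks up a vanishing factor at any mode outside the support of $C$) --- but it makes (iii) logically dependent on (ii) and (iv), whereas the paper's group-average argument is self-contained. What your packaging buys in exchange is a uniform mechanism (the vanishing of the twisted single-mode average on $\Id$) that handles all the annihilation statements at once.
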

\begin{proof}[Proof of \subcref{th:tomiyama prep:1}]
	This follows from $\nn{\IE_\hk^N(A)} \leq \nn A$ and $\IE_\hk^N \Id = \Id$.
\end{proof}
\begin{proof}[Proof of \subcref{th:tomiyama prep:2}]
	It suffices to show
	\begin{align*}
		\IE_\hk( \ad(\fXc{i_n}) \ldots \ad(\fXc{i_1}) a(\fXc{i_1}) \ldots a(\fXc{i_n}) ) &= \frac{1}{2^n}, \\
		\IE_\hk( \ad(\fXc{i_m}) \ldots \ad(\fXc{i_1}) a(\fXc{j_1}) \ldots a(\fXc{j_n}) ) &= 0 \qquad  \text{ for } \{ i_1, \ldots, i_m \} \not= \{ j_1, \ldots, j_n \},
	\end{align*}
	since the span of these elements lies dense in $\Af_{\hk^\perp}$ and both $\IE_\hk$ and $\omega^\tr$ are continuous. 
	The first equation follows from
	\[
	\frac{1}{4} \sum_{\alpha =0}^3  (u^{(\alpha)}_n)^* \ad(\fXc n) a(\fXc n) u^{(\alpha)}_n = \frac{1}{2}\Id
	\]
	and the fact that $u^{(\alpha)}_n$ commutes or anticommutes with $a^\#(\fXc m)$ for $n \not= m$. Similarly, the second equality holds since 	
	\begin{align*}
		\sum_{\alpha =0}^3  (u^{(\alpha)}_n)^* a^\#(\fXc n) u^{(\alpha)}_n = 0, \quad 	\sum_{\alpha =0}^3  (u^{(\alpha)}_n)^* \ad(\fXc n) a(\fXc m) u^{(\alpha)}_n = 0 \qquad \text{for } n\not= m.
	\end{align*}
\end{proof}
\begin{proof}[Proof of \subcref{th:tomiyama prep:3}]
	For any bounded operator $A$ and $\underline{\alpha} \in \{0,1,2,3\}^N$, let $\Ad_N^{(\underline\alpha)}(A) := u(\underline{\alpha})^* A u(\underline{\alpha})$. Then
	\[
	\{ \Ad_N^{(\underline\alpha)} :  \underline\alpha \in \{0,1,2,3\}^N  \}
	\]
	forms a group with respect to composition. Accordingly, we find
	\[
	(\IE_\hk^N)^2 (A) = \frac{1}{(4^N)^2} \sum_{\underline\alpha, \underline\beta \in \{0,1,2,3\}^N} \Ad_N^{(\underline\alpha)} \circ  \Ad_N^{(\underline\beta)}(A) =  \frac{1}{4^N} \sum_{\underline\alpha  \in \{0,1,2,3\}^N} \Ad_N^{(\underline\alpha)}(A) = \IE_\hk^N(A).
	\]
	Taking the limit $N\to\infty$ then yields the assertion. 
\end{proof}
\begin{proof}[Proof of \subcref{th:tomiyama prep:4}]
	First notice that the span of elements of the form $A = BC$ with $B \in \Af_{\hk}$ and $C\in \Af_{\hk^\perp}$ are dense in $\Af$. Now assume that $A \in \Af^+$. Then either $B,C \in \Af^+$ or $B,C \in \Af^-$. In the latter case we find $\IE_\hk^N(A) = 0$, since
	\[
	\IE_\hk^N(A) =  B \frac{1}{4^N} \sum_{\underline \alpha \in  \{0,1,2,3\}^N} (-1)^{\Nf_{\hk^\perp}} u(\underline{\alpha})^*(-1)^{\Nf_{\hk^\perp}} C u(\underline{\alpha}),
	\]
	where $\Nf_{\hk^\perp} = \dG(\Id_{\hk^\perp})$ denotes the number of particles in $\hk^\perp$, and for each $n$,
	\begin{align*}
		\sum_{\alpha =0}^3 &  (-1)^{N_{\hk^\perp}}  (u^{(\alpha)}_n)^*  (-1)^{N_{\hk^\perp}} a^\#(\fXc n) u^{(\alpha)}_n \\ &= 	 a^\#(\fXc n) + (u^{(3)}_n)^* a^\#(\fXc n) u^{(3)}_n - \left( (u^{(1)}_n)^* a^\#(\fXc n) u^{(1)}_n  + (u^{(2)}_n)^* a^\#(\fXc{n}) u^{(2)}_n \right) = 0.
	\end{align*}
	On the other hand, if $B,C \in \Af_+$, we have
	\[
	\IE_\hk^N(A) =  B \frac{1}{4^N} \sum_{\underline \alpha \in  \{0,1,2,3\}^N}   u(\underline{\alpha})^* C u(\underline{\alpha}) = B \IE_\hk^N(C).
	\]
	Thus, taking the limit $N\to\infty$ and using \cref{it:IE of Xc} yields $\IE_\hk(A) = B \omega^\tr(C) \in \Af^+_\hk$. This shows. $\IE_\hk(\Af^+_\hk) \subseteq \Af^+_\hk$. The other inclusion follows from $\IE_\hk(A) = A$ for $A \in \Af_\hk^+$. \qedhere
\end{proof}
Given the result of \cref{th:tomiyama prep}, a theorem by Tomiyama \cite[Theorem 1]{Tomiyama.1957} now implies the property of a conditional expectation for even parity.
\begin{cor}
	\label{th:tomiyama coro}
	For all $A \in \Af^+$ and $B,C \in \Af_\hk^+$, we have
	\[
	\IE_\hk (BAC) = B \IE_\hk(A) C.
	\]
\end{cor}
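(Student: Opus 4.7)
The plan is to invoke Tomiyama's theorem, which is exactly the abstract fact bridging the properties already collected in \cref{th:tomiyama prep} and the bimodule property we want. Recall Tomiyama's theorem states that every norm-one projection $E:\mathfrak{A}\to\mathfrak{B}$ from a $C^*$-algebra onto a $C^*$-subalgebra is automatically a conditional expectation, in the sense that $E(bac)=bE(a)c$ for $a\in\mathfrak{A}$, $b,c\in\mathfrak{B}$.

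First, I would verify that the hypotheses of Tomiyama's theorem are satisfied for the restriction $\IE_\hk\big|_{\Af^+}:\Af^+\to\Af_\hk^+$. This is a purely bookkeeping step: \subcref{th:tomiyama prep:4} gives that $\IE_\hk(\Af^+)=\Af_\hk^+$, so the map is surjective onto $\Af_\hk^+$; since $\IE_\hk(A)=A$ for all $A\in\Af_\hk^+$ (which one reads off immediately from the definition of $\IE_\hk^N$, as each $u_n^{(\alpha)}$ commutes with elements of $\Af_\hk$, so all terms of the average coincide with $A$), combined with \subcref{th:tomiyama prep:3} this shows $\IE_\hk$ is a projection. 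Finally \subcref{th:tomiyama prep:1} furnishes the norm-one bound.

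With these ingredients, Tomiyama's theorem applied to $\IE_\hk\big|_{\Af^+}$ yields the desired identity $\IE_\hk(BAC)=B\,\IE_\hk(A)\,C$ for $A\in\Af^+$ and $B,C\in\Af_\hk^+$, completing the proof.

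I expect no genuine obstacle here: the work was done in \cref{th:tomiyama prep}, and the corollary is a direct citation of Tomiyama \cite{Tomiyama.1957}. The only subtlety worth a line of commentary is why we must restrict to even parity in the first place — namely, that on odd elements the averaging over the $u_n^{(\alpha)}$ produces cancellations (as used in the proof of \subcref{th:tomiyama prep:4}), so that $\IE_\hk$ fails to be the identity on $\Af_\hk^-$ and hence is not a projection on the full algebra $\Af$. Restricting to $\Af^+$ sidesteps this and is exactly the setting where Tomiyama applies.
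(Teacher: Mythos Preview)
Your proposal is correct and matches the paper's approach exactly: the paper does not give a separate proof environment for this corollary but simply states that, given \cref{th:tomiyama prep}, Tomiyama's theorem \cite[Theorem~1]{Tomiyama.1957} directly yields the conditional expectation property on even parity. One small imprecision: you write that ``each $u_n^{(\alpha)}$ commutes with elements of $\Af_\hk$,'' but $u_n^{(1)}$ and $u_n^{(2)}$ actually \emph{anti}commute with odd elements of $\Af_\hk$; since you are only using this for $A\in\Af_\hk^+$ the conclusion $\IE_\hk(A)=A$ still holds (and is in fact stated in the paper's proof of \subcref{th:tomiyama prep:4}), so this does not affect the argument.
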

Next, we want to apply the Lieb--Robinson bound to the situation as in \cref{ex:L2X}. However, in the continuum an orthonormal basis of bounded regions has infinite cardinality (so $N=\infty$ above) which creates fundamental problems. We side-step these by introducing a notion of ``partial partial trace (PPT)'' where, roughly speaking, we select for each dyadic annulus $A_i$ (in distance) a finite orthnormal family $\{f_{i,j}\}_{1\leq i\leq J_i}\subset L^2(A_i)$ whose cardinality $J_i$ grows at most exponentially in $i$ (so that it is controlled by the decay in distance coming from the LRB).
	
More precisely, this end let $f_{i,j}$, $i \in \IN_0$, $j \in \IN$, be an ONB of $L^2(X^\sfc)$ such that 
\[
\dist(  X, \supp f_{i,j} ) \geq C_X 2^i, \qquad i \in \IN,
\]
for a fixed constant $C_X > 0$. 
\begin{prop}
	\label{th:conditional expectation LR}
	Let $(f_{i,j})_{i,j}$ be an ONB given as above, and for each $i$ assume we have a $J_i \in \IN$ with $J_i \leq C_J  2^{in/4}$ for some constant $C_J \geq 0$. Set
	\[
	\hk = \operatorname{lin}\{ f_{i,j} : i \in \IN_0,~ j \leq J_i   \}^\perp
	\]
	and use the notation $\IE_X = \IE_\hk$. 
	Let $A = a_1 \ldots a_M$, $M > 1$, with $a_j = a^\#(g_j)$, where $g_j$ is a normalized Schwartz function with $\supp g_j \subseteq X$.
	Then for each $n\le\nmax$, there exists a constant $C$ such that for all $t \in \IR$, 
	\begin{align*}
		\nn{\ttls(A) - \IE_X(\ttls(A))} \leq  C C_J(  \Ximb(t)^{1/2} + 2 \sqrt{\Cob{1}}  \abs t \braket t^{1/2 + \delta}  ) \frac{1}{n} \left( \frac{\braket t}{ C_X} \right)^n.
	\end{align*}
\end{prop}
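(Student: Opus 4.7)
The plan is to first telescope the difference $\ttls(A) - \IE_X(\ttls(A))$ into a sum over the individual modes $f_{i,j}$, then bound each single-mode contribution by anticommutators that the Lieb--Robinson bounds \cref{prop:LRint,thm:manybody} control, and finally exploit the dyadic structure. Enumerating the pairs $(i,j)$ with $1\le j\le J_i$ as a single sequence $k=1,2,\dots$ and using that each single-mode expectation $\IE_{f_k}$ has operator norm one by \cref{th:tomiyama prep:1}, the telescoping identity
\[ B - \IE_{f_1}\circ\cdots\circ\IE_{f_N}(B) = \sum_{k=1}^N \IE_{f_1}\circ\cdots\circ\IE_{f_{k-1}}\bigl(B - \IE_{f_k}(B)\bigr) \]
yields, after passing to $N\to\infty$,
\[ \nn{\ttls(A) - \IE_X(\ttls(A))} \le \sum_{i\ge 0}\sum_{j=1}^{J_i} \nn{\ttls(A) - \IE_{f_{i,j}}(\ttls(A))}. \]

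For each mode $f=f_{i,j}$ we use that the four operators $u_f^{(\alpha)}$ are unitary with $(u_f^{(\alpha)})^2 = \pm\Id$, so that $B - (u_f^{(\alpha)})^* B u_f^{(\alpha)} = \pm[B,u_f^{(\alpha)}](u_f^{(\alpha)})^{\pm 1}$ and hence
\[ \nn{B - \IE_f(B)} \le \tfrac{1}{4}\sum_{\alpha=1}^3 \nn{[B,u_f^{(\alpha)}]} \le C\bigl(\nn{[B,a(f)]} + \nn{[B,\ad(f)]} + \nn{\{B,a(f)\}} + \nn{\{B,\ad(f)\}}\bigr), \]
the anticommutators arising from $[B,u_f^{(3)}] = -2[B,\ad(f)a(f)] = -2\{B,\ad(f)\}a(f) + 2\ad(f)\{B,a(f)\}$ via the identity $[B,CD]=\{B,C\}D-C\{B,D\}$. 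Specializing to $B=\ttls(A)=\prod_{l=1}^M\ttls(a_l)$, a $*$-automorphism since $\ttls$ is, and using the purely algebraic identity (provable by induction from $[c,B_1B_2]=\{c,B_1\}B_2-B_1\{c,B_2\}$)
\[ [c, B_1\cdots B_M] = \sum_{l=1}^M (-1)^{l-1} B_1\cdots B_{l-1}\,\{c, B_l\}\,B_{l+1}\cdots B_M \qquad (M\text{ even}), \]
every commutator $\nn{[\ttls(A),a^\#(f)]}$ reduces to a sum of $M$ anticommutators $\nn{\{a^\#(f),\ttls(a_l)\}}$, each multiplied by operators of norm $\le\nn{g_l}\le 1$.

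To bound each such anticommutator, we decompose
\[ \{a^\#(f),\ttls(a_l)\} = \{a^\#(f),\ttls(a_l) - \ttzs(a_l)\} + \{a^\#(f),\ttzs(a_l)\}. \]
The many-body bound \cref{thm:manybody} (applied via \cref{cor:anti commutators} for a pair of normalized generators) controls the first term by $\Ximb(t)^{1/2}\,G_{n,t}(\dist(\supp f,\supp g_l))^{1/2}$. For the second, the CAR together with $\ttzs(a^\#(g))=a^\#(\e^{\mp\i tT}g)$ reduces the norm to either zero or $|\sc{\e^{\mp\i tT}g_l, f}|$; because $g_l\in L^2(X)\subseteq\hk$ and $f=f_{i,j}\in\hk^\perp$ are orthogonal, this equals $|\sc{(\e^{\mp\i tT}-\Id)g_l, f}|$, which by \cref{it:onebody time zero} is at most $\sqrt{\Cob{1}}\,\abs t\,\braket t^{1/2+\delta}\,G_{n,t}(\dist(\supp f,\supp g_l))^{1/2}$.

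Finally, the dyadic sum is carried out using $\dist(X,\supp f_{i,j})\ge C_X 2^i$, $G_{n,t}(C_X 2^i)^{1/2}=(\braket t/(C_X 2^i))^{n/2}$, and the assumption $J_i\le C_J 2^{in/4}$:
\[ \sum_{i\ge 0} J_i\,\Bigl(\tfrac{\braket t}{C_X 2^i}\Bigr)^{n/2} \le C_J\,\Bigl(\tfrac{\braket t}{C_X}\Bigr)^{n/2}\sum_{i\ge 0} 2^{-in/4} = \frac{C_J}{1-2^{-n/4}}\Bigl(\tfrac{\braket t}{C_X}\Bigr)^{n/2}, \]
and $1/(1-2^{-n/4})\lesssim 1/n$ produces the claimed $1/n$ prefactor, while the $M$-dependence is absorbed into the constant. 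The main technical obstacle is the parity bookkeeping in Step 2: one must ensure all commutators arising from $\IE_f$ reduce to anticommutators that the LR bounds control, which succeeds precisely because $M$ is even (so the product-rule identity has no uncontrolled residual term $-2B_1\cdots B_M\,c$) and because the two anticommutator identities $[B,CD]=\{B,C\}D-C\{B,D\}$ and $\{B,CD\}=[B,C]D+C\{B,D\}$ are both available to route the $\alpha=3$ term through $P_f=\ad(f)a(f)$.
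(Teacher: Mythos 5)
Your overall architecture --- reduce to single-mode contributions, expand into anticommutators $\{a^\#(f_{i,j}),\ttls(a_l)\}$ of single generators, split each one into an interacting-minus-free part plus a free part, use $f_{i,j}\perp g_l$ to insert $-\Id$ and invoke \cref{it:onebody time zero}, then sum dyadically --- is the same as the paper's, and your telescoping over single-mode expectations is a harmless variant of the paper's direct expansion of $u(\underline\alpha)^*[u(\underline\alpha),\ttls(A)]$. However, Step 2 as written has a genuine gap. For $\alpha=3$ you split $u_f^{(3)}=\Id-2\ad(f)a(f)$ \emph{before} applying any product rule over the $M$ factors of $B=\ttls(A)$, arriving at the terms $\nn{\{B,a(f)\}}$ and $\nn{\{B,\ad(f)\}}$; your subsequent reduction paragraph then only treats the commutators $[B,a^\#(f)]$. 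These leftover anticommutators of a single odd generator with the full \emph{even} product cannot be made small: if $f\perp g_1,\dots,g_M$ and $t=0$, then $a^\#(f)$ anticommutes with each $a_l$, hence commutes with the even product $A$, so $\{A,a^\#(f)\}=2Aa^\#(f)$ has norm of order $\nn A$ no matter how large $\dist(X^\sfc,\supp f)$ is (concretely, $\nn{\{a(f),a(g_1)a(g_2)\}}=2$ for orthonormal $f,g_1,g_2$). Since these $O(1)$ terms would have to be summed over the infinitely many modes $f_{i,j}$, the estimate collapses; the smallness of $[B,\ad(f)a(f)]$ lives only in the cancellation between the two terms of your splitting, which the triangle inequality destroys.

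The repair is to reverse the order of operations for $\alpha=3$: since $u_f^{(3)}$ is an \emph{even} element, the plain Leibniz rule $[u_f^{(3)},\ttls(a_1)\cdots\ttls(a_M)]=\sum_j \ttls(a_1\cdots a_{j-1})[u_f^{(3)},\ttls(a_j)]\ttls(a_{j+1}\cdots a_M)$ applies first, and only then does one split $[u_f^{(3)},\ttls(a_j)]=-2\ad(f)\{a(f),\ttls(a_j)\}+2\{\ad(f),\ttls(a_j)\}a(f)$, so that one only ever meets anticommutators of a \emph{single} $a^\#(f)$ with a \emph{single} $\ttls(a_j)$ --- exactly the quantities the Lieb--Robinson bounds control. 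This is what the paper does: its bracket $[\cdot,\cdot]_{\pm,\alpha}$ is kept as a commutator for $\alpha=0,3$ precisely for this reason. With that correction, the remainder of your proposal (including your correct observation that the $\alpha=1,2$ reduction requires $M$ even, a hypothesis the paper leaves implicit) proceeds as in the paper.
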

\begin{proof}
	First consider an arbitrary closed subspace $\hk$ of $L^2(X^c)$ with corresponding ONB $(f_k)$. For any $N \in \IN$, we then compute 
	\begin{align*}
		\ttls(A) - \IE_X^N(\ttls(A)) &= \frac{1}{4^N} \sum_{\underline \alpha \in  \{0,1,2,3\}^N} \left( u(\underline{\alpha})^*u(\underline{\alpha}) \ttls(A) -  u(\underline{\alpha})^* A(t 	) u(\underline{\alpha}) \right) \\
		&= \frac{1}{4^N} \sum_{\underline \alpha \in  \{0,1,2,3\}^N}  u(\underline{\alpha})^* [u(\underline{\alpha}), \ttls(A) ]  .
	\end{align*}
	Using the definitions of $u(\underline{\alpha})$, the commutator and commutator-anticommutator expansion \eqref{eq:comm anticomm}, we find 
	\begin{align}
		\nonumber \nn{\ttls(A) - \IE_X^N(\ttls(A))} &\leq \frac{1}{4^N}	\sum_{\underline \alpha \in  \{0,1,2,3\}^N}  \nn{	[ u(\underline \alpha)  , \ttls(A)] }   \leq \frac{1}{4^N} 4^{N-1} \sum_{k=1}^{N} \sum_{\alpha_k =0}^3 \nn{ [ u_k^{(\alpha_k)} , \ttls(A)] } \\
		\nonumber  &= \frac{1}{4} \sum_{k=1}^{N}  \sum_{\alpha =0}^3 \nn{ [ u_k^{(\alpha)} , \ttls(A)] }  \\
		&\leq \frac{1}{4} \sum_{k=1}^{N}  \sum_{\alpha =0}^3 \sum_{j=1}^M \nn{ \ttls(a_1 \ldots a_{j-1}) [ u_k^{(\alpha)} , \ttls(a_j)]_{\pm,\alpha} \ttls( a_{j+1} \ldots a_M)  } ,
		\label{eq:At EX difference}
	\end{align}
	where $[\cdot,\cdot]_{\pm,\alpha} := \{\cdot,\cdot \}$ for $\alpha =1,2$ and $[\cdot,\cdot]_{\pm,\alpha} := [\cdot,\cdot]$ for $\alpha = 0,3$.
	Then each summand can be bounded by
	\begin{align*}
		\nn{ [ u_k^{(\alpha)} , (\ttls-\ttzs)(a_j)]_{\pm,\alpha} } + \nn{[ u_k^{(\alpha)} , \ttzs(a_j)]_{\pm,\alpha}}.
	\end{align*}
	The square of first term can be bounded with \cref{cor:commutators}, i.e.,
	\begin{align} 
		\nonumber
		\nn{ [ u_k^{(\alpha)} , (\ttls-\ttzs)(a_j)]_{\pm,\alpha} } &\leq  2 \left(  \Ximb(t) \int_{\IR^d\times\IR^d}
		G_{n,t}(x-y) \abs{\fXc k(x)}^2\abs{g_j(y)}^2 \d(x,y) \right)^{1/2} \\
		&\leq  2 \left(  \Ximb(t) G_{n,t}(\dist(X, \supp f_k)) \right)^{1/2}. \label{eq:uk mb}
	\end{align}
	The second one can be bounded with \cref{prop:LRint} by
	\begin{align}
		\nonumber
		\nn{[ u_k^{(\alpha)} , \ttzs(a_j)]_{\pm,\alpha}} &\leq 2 \abs{\braket{ \fXc k, (e^{\i t T} - \Id) g_j }} \leq 2 \sqrt{\Cob{1}} \abs t \braket t^{1/2 + \delta}  \left( \int  G_{n,t}(x-y) \abs{f_k(y)}^2 \d y \right)^{1/2} \\ &\leq 2 \sqrt{\Cob{1}}  \abs t \braket t^{1/2 + \delta}  G_{n,t}(\dist(X, \supp f_k))^{1/2}.
		\label{eq:uk ob}
	\end{align}
	Now we consider the concrete ONB as given in the assumption. For any $I \in \IN$, we choose $N = \sum_{i=1}^I J_i$ and sort the basis elements linearly, i.e., $f_1, f_2, \ldots, f_{N_I} = f_{1,1}, f_{1,2}, \ldots, f_{1,J_1}, f_{2,1}, \ldots, f_{I, J_I}$. Then using the estimates \eqref{eq:uk mb} and \eqref{eq:uk ob} in \eqref{eq:At EX difference}, we obtain 
	\begin{align}
		\label{eq:At prefinale}
		\begin{aligned}
					&\nn{\ttls(A) - \IE_X^N(\ttls(A))} \\&\qquad\leq   M  ( 2 \Ximb(t)^{1/2} + 2 \sqrt{\Cob{1}}  \abs t \braket t^{1/2 + \delta}  ) \braket t^{n/2} \left( J_0 \dmin^{-n/2}  + \sum_{i=1}^I J_i (C_X 2^i + \dmin)^{-n/2}  \right).
		\end{aligned}
	\end{align}
	Now, by the assumption on the $J_i$, using $(a+b)^{n/2} \geq a^{n/2} + b^{n/2}$, $n \geq 2$, and the formula
	\[
	\int \frac{\xi^{s/2}}{ C \xi^s + \eta } \d s = 2 \frac{\arctan\left( \frac{\sqrt C \xi^{s/2}}{\sqrt{\eta}} \right) }{\sqrt{C\eta} \log \xi }, \qquad C,\xi,\eta > 0,
	\]
	we find 
	\begin{align*}
		\sum_{i=1}^\infty J_i (C_X 2^i + \dmin)^{-n/2}  \leq  C_J \sum_{i=1}^\infty \frac{2^{ni/4}}{ (C_X 2^i)^{n/2} + \dmin^{n/2}  } \leq  C_J \int_1^\infty \frac{ (2^{n/2})^{s/2} }{C_X^{n/2} (2^{n/2})^{s} + \dmin^{n/2}    } \d s \leq \frac{4 C_J}{ C_X^{n/4} \dmin^{n/4} n \log 2 }.
	\end{align*}
	Using this estimate in \eqref{eq:At prefinale} yields the desired estimate. 
\end{proof}


\appendix
\section{Estimates on Commutator Expansions}\label{app:commest}
In this \lcnamecref{app:commest}, we derive the bounds \cref{eq:commbounds} using the methods from \cite{IvriiSigal.1993,HunzikerSigal.2000,ArbunichPusateriSigalSoffer.2021}. More details can be found in those articles.

The following \lcnamecref{lem:commexp} is a version of \cite[Lemma~A.1]{ArbunichPusateriSigalSoffer.2021} with explicit upper bounds.
To express these, for $f\in\cC^{n+2}(\IR)$, we introduce the norm
\begin{equation}\label{def:normnp}
	\norm{f}_{n,p} \coloneqq \sum_{k=0}^{n+2}\int \braket{x}^{k-p-1}|f^{(k)}(x)|\d x \qquad \mbox{for}\ n\in\IN,\ p\in\IR, \quad \mbox{where}\ \braket{x}\coloneqq\sqrt{1+x^2}.
\end{equation}
\begin{lem}\label{lem:commexp}
	Let $H$ be defined as in \cref{prop:APSS} and
	let $f,g\in\cC_\sfb^{n+2}(\IR)$ such that $\norm{g}_{n,n/2}<\infty$ and $\norm{f}_{n,n}<\infty$.
	Then there exist bounded operators $B_{g,k}$, $k=1,\ldots,n-1$ and $R_{g,f,n}(a,s)$, $a>0,s \geq 1$ satisfying
	\begin{equation}\label{eq:commbounds}
		\norm{\wh p^j \! B_{g,k}} \le c_k\norm{g}_{k,(k-j)/2}, \quad \sup_{a>0,s\geq 1} \norm{\wh p^j \! R_{g,f,n}(a,s)}\le c_n( \norm{\wh p^j B_{g,n}}\norm{f}_{n,n} + j\norm{B_{g,n}}\norm{f}_{n,n+1}), \quad j=0,1,
	\end{equation}
	where $c_k>0$ solely depends on $k\in\IN$ (and the choice of $V$),
	such that
	\begin{equation}\label{eq:commexp}
		\big[ g(H),f(\bx as)\big] = \sum_{k=1}^{n-1}\frac{s^{-k}}{k!}B_{g,k}f^{(k)}(\bx as) + s^{-n}R_{g,f,n}(a,s),
	\end{equation}	
	where $\bx as$ is defined as in \cref{def:bx}.
\end{lem}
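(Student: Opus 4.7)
The natural approach is the Helffer--Sjöstrand functional calculus. Fix an almost analytic extension $\tilde g\in C^\infty(\IC)$ of $g$ supported in $\{\abs{\Im z}\le 1\}$ whose Cauchy--Riemann derivative satisfies $\abs{\bar\partial\tilde g(x+\i y)}\lesssim \sum_{k=0}^{n+2}\abs{g^{(k)}(x)}\abs{y}^{k-1}$. Then $g(H) = -\pi^{-1}\int \bar\partial\tilde g(z)(z-H)^{-1}\,\d A(z)$, and taking the commutator with the position-space multiplier $f(\bx as)$ yields
\[ [g(H),f(\bx as)] = \frac{1}{\pi}\int_{\IC}\bar\partial\tilde g(z)\,[f(\bx as),(z-H)^{-1}]\,\d A(z). \]
Iterating the algebraic identity $[A,(z-H)^{-1}] = -(z-H)^{-1}[H,A](z-H)^{-1}$ produces the finite expansion
\[ [f(\bx as),(z-H)^{-1}] = \sum_{k=1}^{n-1}(-1)^{k-1}\operatorname{ad}_H^k(f(\bx as))(z-H)^{-(k+1)} + (-1)^{n-1}(z-H)^{-1}\operatorname{ad}_H^n(f(\bx as))(z-H)^{-n}. \]

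Because $V$ commutes with any position-space multiplier, only the kinetic part $-\Delta/2$ contributes to $\operatorname{ad}_H^k(f(\bx as))$. Using $\nabla f(\bx as)=s^{-1}f'(\bx as)\nabla\braket{x}_{\!\eta}$ together with the uniform bound $\abs{\partial^\alpha\braket{x}_{\!\eta}}\le C_\alpha$ valid for all $\eta>0$, an induction on $k$ shows
\[ \operatorname{ad}_H^k(f(\bx as)) = \sum_{\ell=0}^{k} s^{-k-\ell}\,f^{(k+\ell)}(\bx as)\,P_{k,\ell}, \]
where each $P_{k,\ell}$ is a polynomial of degree $k-\ell$ in $\wh p$ whose coefficients are bounded functions of $x$ (built from at most $2k$ spatial derivatives of $\braket{x}_{\!\eta}$). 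Retaining only the principal ($\ell=0$) contributions and invoking the identity $\pi^{-1}\int\bar\partial\tilde g(z)(z-H)^{-(k+1)}\,\d A(z) = -g^{(k)}(H)/k!$ produces \eqref{eq:commexp} with $B_{g,k}$ identified as a product of $g^{(k)}(H)$ with the bounded polynomial $P_{k,0}$; the subleading ($\ell\ge 1$) terms, together with the final remainder line of the expansion above, combine into $s^{-n}R_{g,f,n}(a,s)$.

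The norm bounds \eqref{eq:commbounds} follow by pairing the pointwise estimate on $\abs{\bar\partial\tilde g}$ with the resolvent bound $\nn{\wh p^j (z-H)^{-1}}\lesssim \braket{z}^{j/2}\abs{\Im z}^{-1}$, which in turn is a consequence of the $\Delta$-relative boundedness of $V$ with bound $<1$. Integrating in $y\in(0,1]$ converts the resulting combined weight into a finite constant (after a standard splitting argument), and the surviving $x$-integral is exactly the norm $\nn{g}_{k,(k-j)/2}$ from \eqref{def:normnp}; the remainder estimate is obtained analogously, with the additional $\nn{f}_{n,n+1}$ factor in the $j=1$ case arising from needing one more power of $\braket{x}$ to move an extra momentum across the $n+1$ resolvent factors. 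The main obstacle is the combinatorial bookkeeping of $s$-powers and momentum factors in the inductive decomposition of $\operatorname{ad}_H^k(f(\bx as))$ and their careful balancing against the resolvent--momentum estimate, so that every subleading $s^{-k-\ell}$ contribution really carries exactly one extra derivative of $f$ per extra inverse power of $s$ and that the precise exponent $(k-j)/2$ emerges in the final bound.
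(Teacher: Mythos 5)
Your overall frame (Helffer--Sj\"ostrand for $g(H)$ plus an iterated resolvent commutator identity) is the right family of techniques, but the specific expansion you choose does not reach the stated bounds. The core problem is that you expand in the multiple commutators $\operatorname{ad}_H^k(f(\bx as))$ of $H$ with the \emph{function} $f(\bx as)$. Each commutator with $-\Delta/2$ raises the order of differentiation of $f$ by up to two, so your own formula $\operatorname{ad}_H^k(f(\bx as))=\sum_{\ell=0}^k s^{-k-\ell}f^{(k+\ell)}(\bx as)P_{k,\ell}$ requires $f\in\cC^{2n}$ for the order-$n$ remainder, whereas the lemma assumes only $f\in\cC_\sfb^{n+2}$; correspondingly, your remainder would be controlled by sup-norms of $f^{(k)}$ up to $k=2n$, which cannot be expressed through $\norm{f}_{n,n}$ and $\norm{f}_{n,n+1}$ (these involve only $f^{(k)}$, $k\le n+2$). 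The paper's proof, following Ivrii--Sigal and Hunziker--Sigal as in ArbunichPusateriSigalSoffer, instead expands in the commutators $C_j=\operatorname{ad}^j_{\braket{x}}H$ of $H$ with the scalar weight; since $H$ is second order these terminate ($C_2$ is bounded multiplication, $C_3=0$), so no extra derivatives of $f$ are generated, and the remainder is represented by a \emph{second} almost-analytic extension, $R_{g,f,n}(a,s)=\int(z-\bx as)^{-1}B_{g,n}(z-\bx as)^{-n}\,\d\wt f(z)$, which is precisely what produces the norms $\norm{f}_{n,n}$ and $\norm{f}_{n,n+1}$ in \eqref{eq:commbounds}.

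A second gap concerns the bound on $B_{g,k}$. You identify $B_{g,k}$ with $g^{(k)}(H)$ times a degree-$k$ polynomial $P_{k,0}$ in $\wh p$, having already performed the $z$-integration via $\pi^{-1}\int\bar\partial\wt g(z)(z-H)^{-(k+1)}\d A(z)=-g^{(k)}(H)/k!$. At that point you can no longer ``pair the pointwise estimate on $\abs{\bar\partial\wt g}$ with the resolvent bound'': what you would actually need is a bound on $\norm{\wh p^{\,j}g^{(k)}(H)\wh p^{\,k}}$, which is a weighted sup-norm of $g^{(k)}$ and not the integral norm $\norm{g}_{k,(k-j)/2}$. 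The exponent $(k-j)/2$ arises in the paper because one keeps the $z$-integral, bounds each integrand $I(z,\pi)=(z-H)^{-1}\prod_\ell C_{\pi_\ell}(z-H)^{-1}$ by $\braket{\Re z}^{\abs\pi-(k-j)/2}\abs{\Im z}^{-\abs\pi-1}$ using $\norm{C_2}\le1$ and $\norm{C_1(z-H)^{-1}}\lesssim\braket{\Re z}^{1/2}/\abs{\Im z}$, and only then integrates against $\abs{\d\wt g}$. Two further points: your expansion places $f^{(k)}(\bx as)$ to the \emph{left} of the $g$-dependent factor, whereas \eqref{eq:commexp} requires $B_{g,k}f^{(k)}(\bx as)$ (fixable by iterating the resolvent identity in the other direction, but not as written); and the claim $\abs{\partial^\alpha\braket{x}_{\!\eta}}\le C_\alpha$ uniformly in $\eta>0$ fails for $\abs\alpha\ge2$, where the bound degrades like $\eta^{1-\abs\alpha}$ near $x=0$ --- this matters because the constants $c_k$ must be $\eta$-independent for the limit $\eta\downarrow0$ taken later in the proof of the propagation bound.
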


\begin{proof}
	The statement is exactly that of \cite[Lemma~A.1]{ArbunichPusateriSigalSoffer.2021}, whence we restrict to the proof of the upper bounds using the techniques from \cite[Appendix~B]{HunzikerSigal.2000}.
	
	To bound $B_{g,k}$, $k\in\IN$, we use the integral representation
	\[ B_{g,k} = \sum_{\pi} \int I(Z,\pi)  \d \wt g(z) \quad \mbox{with}\quad I(z,\pi)\coloneqq (z-H)^{-1}\prod_{\ell=1}^{\abs\pi}C_{\pi_i}(z-H)^{-1},  \]
	where $\wt g$ is a specific almost analytic extension of $g$ supported in a complex neighborhood of $\supp g$, cf. \cite[Eq.~(B.5)]{HunzikerSigal.2000} and $\d \wt g(z) := -(2\pi)^{-1} \partial_{\overline z} \wt g(z) \d x \d y$ with $z = x + \i y$. The sum runs over all ordered partitions $\pi=(\pi_1,\ldots,\pi_{\abs\pi})$ of $1, \ldots, k$, i.e., $\sum_{\ell=1}^{\abs\pi}\pi_\ell = k$, such that $\pi_\ell\in\{1,2\}$,  see \cite[Eq.~(A.8)]{ArbunichPusateriSigalSoffer.2021} for a derivation. Furthermore, $C_j := \operatorname{ad}^j_{\braket x} H$, the $j$-fold commutator of $H$ with $\braket x$. Now we want to apply the bounds
	\[ \norm{C_2}\le 1, \quad \norm{(z-H)^{-1}}\le \frac1{\abs{\Im z}} \quad \mbox{and}\quad \norm{\wh p (z-H)^{-1}} \vee \norm{C_1(z-H)^{-1}} \le c\frac{\braket{\Re z}^{1/2}}{\abs{\Im z}} ,  \]
	where $c>0$ is a universal constant dependent on the choice of $V$, cf. \cite[Eq.~(A.13)]{ArbunichPusateriSigalSoffer.2021}.
	Observing that the set $\{\ell\in\{1,\ldots,\abs\pi\}:\pi_\ell=1\}$ has the cardinality $2\abs\pi-k$, we find
	\[ \norm{\wh p^j I(z,\pi)}\le \frac{\braket{c \Re z}^{\abs\pi-(k-j)/2}}{\abs{\Im z}^{\abs\pi+1}} \qquad \mbox{for}\ j=0,1. \]
	We now apply the bound 	
	\[ \int  \frac{f(\Re z) }{\abs{\Im z}^{p+1}}  \abs{\d \wt g(z)} \le c_n \sum_{k=0}^{n+2} \int_{\IR} f(x)\braket{x}^{k-p-1}g^{(k)}(x) \d x, \]
	where $c_n$ solely depends on $n\in\IN$ (and the construction of $\wt g$), see \cite[Eqs.~(B.3)\&(B.6)]{HunzikerSigal.2000}.
	Hence,
	inserting the definition \cref{def:normnp},
	 we find for a constant $c_k>0$ depending on $k$
	\[ \int \norm{\wh p^j I(z,\pi)}\abs{\d \wt g(z)} \le c_k \|g\|_{k,(k-j)/2} \qquad\mbox{independent of}\ \pi, \]
	which proves the first estimate in \cref{eq:commbounds}.
	The second now easily follow using the integral representation
	\[ R_{g,f,n}(a,s) =  \int (z-\bx as)^{-1}B_{g,n}(z-\bx as)^{-n}\d \wt f(z), \]
	cf. \cite[Eq.~(A.6)]{ArbunichPusateriSigalSoffer.2021} or \cite[Eq.~(B.24)]{HunzikerSigal.2000}.
\end{proof}
Let us now briefly discuss some relevant classes of functions for which $\norm{\cdot}_{n,p}$, as defined in \cref{def:normnp}, is finite. In our bounds, we set
\begin{equation}
	\iota_{k} \coloneqq \int_{\IR} \braket{x}^{-k-1}\d x<\infty \qquad \mbox{for}\ k>0
\end{equation}
and as usually write $\norm{f}_\infty \coloneqq \sup f(\IR)$. The proof of the following lemma trivially follows by estimating $f^{(k)}$ pointwise by its supremum norm. 
\begin{lem}\label{lem:npbounds}
	Let $n\in\IN$ and let $f\in\cC^{n+2}(\IR)$ such that $f^{(k)}$ is bounded for all $k=0,\ldots,n+2$.\\
	If $p>n+2$, then
	\[\norm{f}_{n,p} \le \sum_{k=0}^{n+2} \iota_{p-k}\norm{f^{(k)}}_\infty. \]
	Further, if $f^{(m)}$ is compactly supported for some $m\in\IN$ with $m< p+1$, then
	\[ \norm{f}_{n,p} \le \sum_{k=0}^{m-1} \iota_{p-k} \norm{f^{(k)}}_\infty + \sum_{k=m}^{n+2}\norm{f^{(k)}}_\infty \norm{\chr_{\supp f^{(k)}}\braket{\cdot}^{k-p-1}}_1  .\]
\end{lem}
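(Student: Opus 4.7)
The plan is to reduce every summand in the defining sum \cref{def:normnp} to a product of $\norm{f^{(k)}}_\infty$ and a pure weight integral, and then evaluate that weight integral in the two regimes separately. The author's remark that the proof ``trivially follows by estimating $f^{(k)}$ pointwise by its supremum norm'' is precisely this strategy.

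For the first bound, I would apply the pointwise estimate $|f^{(k)}(x)| \le \norm{f^{(k)}}_\infty$ inside each term of $\norm{f}_{n,p}$, leaving $\int_{\IR} \braket{x}^{k-p-1} \d x$. The hypothesis $p > n+2$ ensures $p - k > 0$ for every $k = 0, \dots, n+2$, so that $\braket{x}^{k-p-1} = \braket{x}^{-(p-k)-1}$ is integrable on $\IR$ with integral exactly $\iota_{p-k}$ by definition. Summing over $k$ gives the claim.

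For the second bound, I would split the sum at $k = m$. The terms with $k < m$ are handled by the same pointwise estimate, since the hypothesis $m < p+1$ preserves $p-k \ge p - m + 1 > 0$ on that range, contributing $\iota_{p-k}\norm{f^{(k)}}_\infty$. For $k \ge m$, derivatives preserve compact support, so $\supp f^{(k)} \subseteq \supp f^{(m)}$ is compact; restricting the integral to this set then yields $\norm{f^{(k)}}_\infty \norm{\chr_{\supp f^{(k)}}\braket{\cdot}^{k-p-1}}_1$, which is finite regardless of the sign of the exponent because the integrand is bounded on a compact domain.

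There is no genuine obstacle here. The only thing to watch is integrability of the weight $\braket{x}^{k-p-1}$ in each regime, which is built directly into the hypotheses $p > n+2$ and $m < p+1$ respectively; everything else is a one-line computation.
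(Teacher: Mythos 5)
Your proposal is correct and is exactly the argument the paper intends: the paper gives no written proof beyond the remark that it ``trivially follows by estimating $f^{(k)}$ pointwise by its supremum norm,'' and your two-case treatment (integrability of $\braket{x}^{k-p-1}$ from $p>n+2$, resp.\ $m<p+1$ plus $\supp f^{(k)}\subseteq\supp f^{(m)}$ for $k\ge m$) fills in precisely the details that remark leaves implicit.
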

\begin{cor}\label{corgE}
	Let $\xi$, $\alpha$, $g_E$ be defined as in \cref{prop:APSS}. Then
	\[ \norm{g_E}_{n,p} \le \iota_{p}\norm{\xi}_\infty +  \sum_{k=1}^{\lceil p \rceil } \frac{ \norm{\xi^{(k)}}_\infty}{((\alpha-1)E)^{k-1}}
			+ \sum_{k=\lceil p \rceil + 1} ^{n+2} \left(\tfrac{\alpha}{\alpha-1}\right)^{k-1} \frac{\braket{E}^{k-p-1}}{E^{k-1}}  \norm{\xi^{(k)}}_\infty
		\qquad \mbox{for all}\ E,p>0, \]
		where $\lceil p \rceil \coloneqq\min\{k\in\IN_0:k\ge p\}$.
\end{cor}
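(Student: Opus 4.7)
The plan is to insert the explicit chain-rule formula for $g_E^{(k)}$ into the definition \cref{def:normnp} of $\|\cdot\|_{n,p}$, exploit the compact support of $\xi$ (and hence of $\xi^{(k)}$ for $k\geq 1$), and then split the sum in $k$ into two regimes determined by the sign of the exponent $k-p-1$.

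First I would handle the $k=0$ contribution to $\|g_E\|_{n,p}$ separately: since $|g_E(x)|\leq \|\xi\|_\infty$ pointwise, this integral is bounded by $\|\xi\|_\infty\int_\IR\braket{x}^{-p-1}\d x = \iota_p\|\xi\|_\infty$, which produces exactly the first summand in the claimed bound. For $k\geq 1$, the chain rule gives
\[
 g_E^{(k)}(x) \;=\; \frac{1}{((\alpha-1)E)^k}\,\xi^{(k)}\!\left(\frac{x-E}{(\alpha-1)E}\right),
\]
whose support lies in $[E,\alpha E]$ because $\supp\xi^{(k)}\subset[0,1]$. Consequently,
\[
\int\braket{x}^{k-p-1}|g_E^{(k)}(x)|\,\d x \;\leq\; \frac{\|\xi^{(k)}\|_\infty}{((\alpha-1)E)^k}\int_E^{\alpha E}\braket{x}^{k-p-1}\d x.
\]

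Next I would split into two cases depending on the sign of $k-p-1$. For $1\leq k\leq \lceil p\rceil$, one has $k-p-1 \leq \lceil p\rceil - p -1 < 0$, so $\braket{x}^{k-p-1}\leq 1$ on all of $\IR$ and the inner integral is bounded by the length $(\alpha-1)E$ of the integration interval. Dividing, this yields $\|\xi^{(k)}\|_\infty/((\alpha-1)E)^{k-1}$, matching the middle sum in the claim. For $\lceil p\rceil+1\leq k\leq n+2$, the exponent $k-p-1$ is nonnegative, so $\braket{x}^{k-p-1}$ is monotone increasing on $[E,\alpha E]$ and may be bounded by its right endpoint value $\braket{\alpha E}^{k-p-1}$. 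Using the elementary inequality $\braket{\alpha E}\leq \alpha\braket{E}$ (valid for $\alpha\geq 1$, $E\geq 0$) together with the trivial bound $\alpha^{k-p-1}\leq \alpha^{k-1}$ (valid since $p>0$), the $k$-th integral is estimated by
\[
\left(\frac{\alpha}{\alpha-1}\right)^{\!k-1}\frac{\braket{E}^{k-p-1}}{E^{k-1}}\,\|\xi^{(k)}\|_\infty,
\]
giving the third summand.

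Summing over the three ranges $k=0$, $1\leq k\leq \lceil p\rceil$, and $\lceil p\rceil+1\leq k\leq n+2$ produces the full estimate. This is a purely mechanical computation; the only point requiring any care is the verification that the threshold $k=\lceil p\rceil$ correctly separates the two regimes of $k-p-1$, which is immediate from $\lceil p\rceil - p \in [0,1)$. There is no real obstacle, so I expect the proof to be short.
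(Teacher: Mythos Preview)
Your proposal is correct and follows essentially the same approach as the paper: the paper's proof simply invokes \cref{lem:npbounds} with $m=1$ together with the chain-rule formula for $g_E^{(k)}$ and the estimate $\norm{\chr_{\supp g_E^{(k)}}\braket{\cdot}^{k-p-1}}_1 \le (\alpha-1)E\,\braket{\alpha E}^{0\vee(k-p-1)}$, which is exactly your case split written in compressed form. Your explicit treatment of the three ranges $k=0$, $1\le k\le\lceil p\rceil$, $\lceil p\rceil+1\le k\le n+2$ and the use of $\braket{\alpha E}\le\alpha\braket E$ followed by $\alpha^{k-p-1}\le\alpha^{k-1}$ reproduce the paper's bound verbatim.
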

\begin{proof}
	Apply \cref{lem:npbounds} with $f=g_E$, $m=1$ and use $\braket{\alpha E}\le \alpha\braket{E}$,
	\[  g_E^{(k)}(x)=\frac{{\xi^{(k)}((x-E)/((\alpha-1) E))}}{((\alpha-1) E) ^{k}} \quad \mbox{as well as}\quad  \norm{\chr_{\supp f^{(k)}}\braket{\cdot}^{k-p-1}}_1 \le (\alpha-1)E \braket{\alpha E} ^{0\vee (k-p-1)}. \qedhere \]
\end{proof}
\section{Decay of Iterated Convolutions}

We apply the following well-known statement on the decay of the convolution of polynomially decaying functions.
\begin{lem}\label{lem:simconv}
	Let $f,g \: \IR^d\to\IC$, $m,n > d$ and $\alpha,\beta,c_f,c_g>0$  such that $\abs {f (x)} \leq c_f (1 \vee \alpha \abs x)^{-n}$ and $\abs {g (x)} \leq c_g (1 \vee \alpha \abs x)^{-m}$ for all $x \in \IR^d$. Then there exists $c_{m,n}>0$ solely depending on $m$, $n$ and $d$ such that
	\[ \abs{f\ast g(x)}
	\leq c_{m,n}c_fc_g (\alpha \wedge \beta)^{-d} \frac{1}{(1 \vee   (\alpha \wedge \beta)\abs x)^{m\wedge n}}.
	\]
\end{lem}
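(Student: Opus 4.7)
My plan is a standard split of the convolution integral around the midpoint between the two singular points of $f$ and $g(\cdot-x)$. Noting what appears to be a typo (the bound on $g$ should read $c_g(1\vee\beta|y|)^{-m}$), I first assume WLOG that $\alpha\le\beta$, so $\alpha\wedge\beta=\alpha$, and split
\[
(f*g)(x) \;=\; \int_{|y|\le |x|/2}\! f(y)g(x-y)\,\mathrm dy \;+\; \int_{|y|>|x|/2}\! f(y)g(x-y)\,\mathrm dy.
\]
In the first region $|x-y|\ge |x|/2$, so I pull out the pointwise decay $|g(x-y)|\le c_g(1\vee\beta|x|/2)^{-m}$ and are left with $\int|f|$. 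In the second region $|y|\ge |x|/2$, so I pull out $|f(y)|\le c_f(1\vee\alpha|x|/2)^{-n}$ and are left with $\int|g|$.

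Next I would bound the two $L^1$ norms by a direct polar-coordinate computation: for $p>d$ and $\gamma>0$,
\[
\int_{\mathbb R^d}(1\vee\gamma|y|)^{-p}\,\mathrm dy \;\le\; |B_{1/\gamma}| + \gamma^{-p}\!\int_{|y|>1/\gamma}|y|^{-p}\,\mathrm dy \;\lesssim_{p,d}\; \gamma^{-d},
\]
giving $\|f\|_1\lesssim c_f\alpha^{-d}$ and $\|g\|_1\lesssim c_g\beta^{-d}$. Combining,
\[
|(f*g)(x)| \;\lesssim\; c_fc_g\Bigl[\alpha^{-d}(1\vee\beta|x|/2)^{-m} + \beta^{-d}(1\vee\alpha|x|/2)^{-n}\Bigr].
\]

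Finally I would consolidate the two terms into the desired single decay. Since $\alpha\le\beta$, we have $\beta^{-d}\le\alpha^{-d}=(\alpha\wedge\beta)^{-d}$. Moreover $(1\vee\beta|x|/2)^{-m}\le(1\vee\alpha|x|/2)^{-(m\wedge n)}$, which I would verify by the three cases $\alpha|x|/2\ge 1$ (both sides equal powers of the respective arguments, and the smaller base with the smaller exponent wins), $\beta|x|/2<1$ (both sides equal $1$), and $\alpha|x|/2<1\le\beta|x|/2$ (left side $\le 1=$ right side). The same reasoning gives $(1\vee\alpha|x|/2)^{-n}\le(1\vee\alpha|x|/2)^{-(m\wedge n)}$. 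Replacing $|x|/2$ by $|x|$ inside costs only a constant factor $2^{m\wedge n}$, yielding the claimed bound with a constant $c_{m,n}$ depending only on $m,n,d$.

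No step is truly hard; the only minor obstacle is the three-case check that the faster decay rate $m$ with the tighter scale $\beta$ can indeed be replaced, term by term, by the slower rate $m\wedge n$ with the looser scale $\alpha\wedge\beta$, since this is what forces the final exponent in the statement.
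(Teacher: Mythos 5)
Your proof is correct and follows essentially the same route as the paper: the same split of the convolution at $|y|=|x|/2$, pulling out the pointwise decay of the factor that is far from its singularity, and bounding the remaining $L^1$ norms by scaling to get the $\alpha^{-d}$, $\beta^{-d}$ factors. You are also right that the hypothesis on $g$ contains a typo ($\alpha$ should be $\beta$), and your explicit consolidation of the two terms into the single $(1\vee(\alpha\wedge\beta)|x|)^{-(m\wedge n)}$ decay is a detail the paper leaves implicit.
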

\begin{proof}
A direction calculation shows
	\begin{align*}
		\abs{ f*g (x)} &\leq \int_{\abs y \geq \abs x /2} \abs{f(y)}  \abs{g(x-y)} \d y  + \int_{\abs y < \abs x /2} \abs{f(y)}  \abs{g(x-y)} \d y  \\
		&\leq c_f c_g \int_{\abs y \geq \abs x /2} \frac{1}{ (1 \vee \alpha \abs y)^n}  \frac{1}{ (1\vee \beta \abs{x-y})^m} \d y  +  c_f c_g \int_{\abs y < \abs x /2}\frac{1}{ (1 \vee \alpha \abs y)^n}  \frac{1}{ (1\vee \beta \abs{x-y})^m} \d y  \\
		&\leq  \frac{c_f c_g 2^n}{ (2 \vee \alpha \abs x)^n}  \int_{\IR^d} \frac{1}{(1 \vee \beta \abs{y})^m} \d y  +   \frac{c_f c_g 2^m}{ (2 \vee \beta \abs{x})^m} \int \frac{1}{ (1 \vee \alpha \abs y)^n}   \d y  \\
		&\leq  \frac{c_f c_g 2^n}{ (2 \vee \alpha \abs x)^n} \beta^{-d} \int_{\IR^d} \frac{1}{(1 \vee  \abs{y})^m} \d y  +   \frac{c_f c_g 2^m}{ (2 \vee \beta \abs{x})^m} \alpha^{-d} \int \frac{1}{ (1 \vee \abs y)^n}   \d y. \qedhere
	\end{align*}
\end{proof}

\section*{Acknowledgements}

We thank Heinz Siedentop and Tom Wessel for useful remarks.
BH acknowledges support by the Ministry of Culture and Science of the State of North Rhine-Westphalia within the project `PhoQC'.
The research of M.L.\ is supported by the Deutsche Forschungsgemeinschaft (DFG, German Research Foundation) through grant TRR 352--470903074.


\printbibliography

\end{document}